\documentclass[12pt,a4paper]{article}
\usepackage{eurosym}
\usepackage{geometry}
\usepackage{amssymb}
\usepackage{amsmath,bm}
\usepackage{amsfonts}
\usepackage{graphicx,epsfig,lscape}
\usepackage{subcaption}
\usepackage{color}
\usepackage[dvipsnames]{xcolor}
\usepackage{setspace}
\usepackage{url}
\usepackage{upgreek}
\usepackage{commath}
\usepackage{enumitem}
\usepackage{indentfirst}
\usepackage{hyperref}
\hypersetup{
  colorlinks,
  citecolor=Blue,
  linkcolor=Maroon,
  urlcolor=Blue}
\usepackage{natbib}
\usepackage{booktabs}
\usepackage{pifont}
\usepackage{cleveref}
\usepackage{fnpct}
\usepackage{titling}
\usepackage{multibib}
\usepackage{cleveref}
\usepackage{arydshln}
\usepackage{caption}
\usepackage{bbm}
\usepackage{extarrows}
\usepackage{xcolor}
\usepackage{amsthm}
\usepackage{comment}
\usepackage{pgfplots}
\usetikzlibrary{patterns,intersections}
\usetikzlibrary{pgfplots.fillbetween}
\pgfplotsset{width=7cm,compat=1.16}
\crefname{assumption}{assumption}{assumptions}

\usepackage{kpfonts}

\newtheorem{assumption}{Assumption}

\newtheorem{claim}{Claim}

\newtheorem{corollary}{Corollary}

\newtheorem{lemma}{Lemma}

\newtheorem{proposition}{Proposition}
\newtheorem{remark}{Remark}

\usepackage{chngcntr}
\usepackage{apptools}
\AtAppendix{\counterwithin{equation}{section}}
\AtAppendix{\counterwithin{figure}{section}}
\AtAppendix{\counterwithin{table}{section}}
\AtAppendix{\counterwithin{lemma}{section}}
\AtAppendix{\counterwithin{remark}{section}}
\AtAppendix{\counterwithin{definition}{section}}
\AtAppendix{\counterwithin{proposition}{section}}
\AtAppendix{\counterwithin{theorem}{section}}
\AtAppendix{\counterwithin{corollary}{section}}

\newcommand{\precision}{\rho}

\title{Pitfall of Precision in Noisy Signaling\thanks{%
We would like to thank Alessandra Casella, Navin Kartik, Qingmin Liu, Jacopo Perego, and the audience at Columbia University Micro Theory Colloquium for their valuable comments and suggestions.}}
\author{Shuhua Si\thanks{%
Google. Email: \href{mailto:ss5580@columbia.edu}%
{ss5580@columbia.edu}.}
\ and Yangfan Zhou\thanks{%
Department of Economics, Columbia University. Email: \href{mailto:yz3905@columbia.edu}%
{yz3905@columbia.edu}.}}
\date{May 1, 2026}

\begin{document}
\maketitle
\begin{abstract}

A principal decides whether to approve an agent based on a noisy signal (e.g., test scores) generated by the agent. High-quality agents can produce high signals on average at lower cost, but the realizations are subject to noise that depends on the screening technology's precision. We uncover a paradoxical ``pitfall of precision'': when precision is already high, further improvements reduce screening accuracy and lower the principal's welfare. This occurs because greater precision incentivizes strategic signaling from more low-quality agents, outweighing the direct benefit from improved precision. The pitfall of precision also has implications for statistical discrimination: groups with noisier technologies face lower approval rates yet may be favored ex ante---a reversal of discrimination. We also examine how commitment power helps mitigate the pitfall.


\newpage
\end{abstract}


\newpage

\section{Introduction}
Firms screen job applicants with résumés and interviews; banks assess borrowers through credit scores; universities admit students based on test scores. Across these settings, institutions rely on noisy signals to infer candidates' types and select eligible ones. Since noise obscures information, one might expect institutions to prefer more precise signals. 
Yet many institutions choose noisier methods deliberately, even when more precise evaluation is technologically feasible. 
For example, in college admissions, 
test-optional policies 
have expanded 
even as standardized testing has grown more sophisticated. Why would a university choose to blur information rather than sharpen it? 

One 
concern motivating these choices could be that tests invite strategic responses: intensive preparation, repeated retaking, and expensive coaching that may undermine the informational value of tests. Similarly, in recruitment, objective technical screens that can be scored precisely (e.g., coding tasks
) may invite intensive preparation (e.g., LeetCode grinding
) that undermines the screening of on-the-job performance. 

Such strategic responses to screening technologies are pervasive. Professionals devote resources to narrowly passing licensing exams; firms bunch just below audit thresholds to avoid stricter tax enforcement \citep{onji2009response, almunia2018under}; households rearrange assets to qualify for Medicaid long-term care \citep{bassett2007medicaid}. Crucially, these strategic behaviors depend on the screening technology.

This paper studies the interaction between screening precision and strategic responses. We investigate whether high precision can be undesirable for screening institutions precisely because it intensifies strategic behavior. As a benchmark, absent such behavior, higher precision unambiguously improves screening: signals become more accurate, and the institution's expected payoff rises \citep{blackwell1951comparison, lehmann1988comparing}. With strategic behavior, however, two opposing forces arise. On the one hand, greater precision reduces random errors (direct effects): fewer agents who exert little effort pass by luck, and fewer who invest substantial effort fail due to bad luck. On the other hand, precision raises the value of signaling: agents who invest costly effort to appear qualified 
can be more confident that doing so will lead to approval. 
This encourages more strategic responses (indirect, strategic effect). 

Whether this strategic effect is harmful, and if so, whether it can be strong enough to overturn the direct benefits of precision, depends on the environment.\footnote{Indeed, there exist environments in which the direct effects always dominate; see \Cref{rm:extensive}.} We show that approval-rejection environments generate non-monotone marginal benefits from effort, creating an extensive margin for the strategic effect: more unqualified agents are encouraged to exert effort. At high precision, this extensive margin causes the strategic effect to dominate, leaving the institution worse off. Echoing \citeauthor{goodhart1975problems}'s Law, ``when a measure becomes a target, it ceases to be a good measure'', we call this paradox the \textit{pitfall of precision}.

Our results thus offer a rationale for why screening institutions might deliberately limit precision in practice. By down-weighting standardized tests, e.g., through holistic admissions, universities could dampen strategic test-preparation rat races and improve screening efficiency.\footnote{For test-optional policies specifically, \cite{dessein2025test} suggest these may also reflect colleges' responses to social pressure on admission decisions.} Relatedly, candidates from disadvantaged groups, such as those with language barriers 
or fewer resources to prepare for tests, tend to face noisier evaluations. The pitfall of precision thus also speaks to why institutions may wish to direct more favorable quotas or attention toward such groups.

To formalize these ideas, we study a noisy signaling model with continuous types. An agent with privately known type $\theta\in[\underline\theta,\overline\theta]$ applies to a principal, signaling his type by choosing effort $e$ at a linear cost, where higher types face lower marginal costs. The principal observes a noisy signal of the agent's effort: $s=e+\epsilon/\precision$, where $\epsilon$ is a zero-mean noise and $\precision>0$ is the precision level.\footnote{The setup is isomorphic to one where signals depend on both types and effort complementarily; see \Cref{sect:model discussion}.} The principal then decides whether to approve the application: approving higher types yields higher payoffs, and only types above a threshold $\tilde\theta$ yield positive payoffs. We refer to types above $\tilde\theta$ as \emph{good types} and those below as \emph{bad types}.\footnote{Meanwhile, we use \emph{high} and \emph{low} to indicate relative positions of types.} We assume that, absent signals, the principal would reject the application. 

Proposition \autoref{prop:eqm} shows that when precision is high, two equilibria exist: (1) a pooling equilibrium, in which all agents exert zero effort and the principal rejects all applications; and (2) a semi-separating equilibrium, in which the principal sets a positive approval standard and types above an endogenous threshold $\hat\theta$ exert positive effort (above the standard) while types below exert zero effort. Crucially, $\hat\theta<\tilde\theta$: all good types exert effort and some bad types are incentivized to mimic them. Other equilibria may exist for fixed precision levels, but only these two survive in the vanishing noise limit ($\precision\to\infty$) (Proposition \autoref{prop:limit}). 

Our main results concern how screening precision $\precision$ affects outcomes in the semi-separating equilibrium. In particular, we show the principal's payoff in this equilibrium is non-monotone in precision (Proposition \autoref{prop:non-monotone}): increasing when precision is low, but decreasing when precision is high, i.e., the pitfall of precision. Higher precision has three effects on the principal's payoff:
\begin{enumerate}
    \item \textit{direct effect I}: bad types who exert zero effort are more likely to be rejected;
    \item \textit{direct effect II}: types with positive effort are more likely to be approved;
    \item \textit{indirect strategic effect}: marginally more bad types are incentivized to exert effort and thus more likely to be approved.
\end{enumerate}
The strategic effect arises as higher precision increases the value of signaling.

When precision is relatively low, the marginally incentivized bad types are close to good (i.e., close to the threshold $\tilde\theta$), so the indirect strategic effect vanishes; hence, the principal benefits from higher precision. In contrast, as precision becomes sufficiently high, the approval probabilities of types with zero effort and those with positive effort converge to zero and one, respectively. Consequently, the direct effects vanish, while the indirect strategic effect operates on the extensive margin (from zero to positive effort) and becomes first-order. As a result, the principal is worse off with higher precision.


The extensive margin is key to the dominance of the strategic effect at high precision. It is generated by the non-monotone marginal benefit from effort in the approval-rejection environment---highest near the approval standard and diminishing away from it---and is robust to general costs and binary effort (see \Cref{sect:simulation}). This feature is natural to practical settings like college admissions and grant/loan applications. To show that this structural feature---rather than stronger signaling incentives per se---is the source of the pitfall, we contrast our model with a linear reputation example (\Cref{rm:extensive}), in which higher precision intensifies signaling on the intensive margin yet the pitfall never arises.

Despite the principal's pitfall of precision, the implications for agents differ. Higher precision increases the agent's expected approval rate and payoff (Proposition \autoref{prop:approval}), so agents on average prefer higher precision.

From the perspective of \textit{statistical discrimination} \`a~la \citet{phelps1972statistical}, this suggests that groups 
with more precise screening technologies are favored at the group level, 
consistent with classical statistical discrimination theory. However, the principal may actually favor groups with noisier technologies ex ante, precisely because they behave less strategically. If the principal solicits applications before observing signals, she will prefer to target groups with noisier technologies. This gives rise to a \emph{reversal of discrimination}: the group \emph{favored} at the solicitation stage is \emph{disfavored} at the approval stage. 

To clarify the informational mechanism behind the pitfall of precision, we explore the impact of precision on screening efficiency (Proposition \autoref{prop:error}) and information transmission (Proposition \autoref{prop:accuracy}). 
Higher precision increases selection accuracy between good types and sufficiently bad types, but strategic responses also make it harder to distinguish good types from moderately bad ones.

In addition to the agent's strategic responses, the principal's lack of commitment power is another source of the problem. 
If the principal can commit ex ante to an approval standard, the pitfall will disappear: higher precision will always improve her welfare (\Cref{prop:commit}). By committing to a high standard, only good types of the agent are marginally incentivized by higher precision. 
Our analysis thus highlights both the pitfall of precision in noisy screening and the institutional role of commitment in overcoming it. 

In terms of analysis, our comparative statics rely on a high-precision ($\precision\to\infty$) asymptotic analysis that characterizes the limiting behavior of equilibrium objects when closed-form solutions are unavailable. This allows us to identify which effects are first-order and which vanish (faster) in the limit, compare the magnitudes of competing forces, and establish sharp comparative statics.

\subsection*{Related Literature}
\paragraph{Impact of Noise on Information Transmission and Screening.} 
Our work relates to a literature on how noise affects information transmission and screening. In cheap talk games, \citet{blume2007noisy} show that introducing small noise can improve communication and welfare for both parties, because noise can soften conflicts of interest between them---a mechanism absent in our setting, where the agent always prefers approval. 
\cite{rick2013benefits} studies the effect of noise on information transmission in a general setting including cheap talk and signaling, but focuses on the existence of noise structures that improve welfare. Recently, \cite{tsakas2021noisy} study how noise affects sender welfare in Bayesian persuasion settings.

One closely related paper is \cite{ekmekci2022learning}, who study a reputation game where a principal gradually learns about an agent's type from a noisy performance measure that the agent can costly manipulate (like costly signaling). They similarly find that too much precision gives the bad type strong incentives to mimic the good type, inhibiting learning and harming the principal. Instead, we study the impact of noise in a static costly signaling model with continuous types.\footnote{\cite{de2011noisy} and \cite{jeitschko2012signaling} study noisy signaling with binary types, but focus only on equilibrium behavior relative to the noiseless benchmark.} Also related is \citet{adda2024grantmaking}, who study noisy screening in grantmaking with costly participation and a fixed budget, finding that more evaluation noise raises equilibrium participation.\footnote{\cite{rosar2017test} and \cite{harbaugh2018coarse} study the receiver's optimal test in environments where the sender decides whether to participate in the test. Both papers find that the optimal test features coarse grading (i.e., noisy tests) to increase participation.} In contrast, with costly signaling, we show that more evaluation noise discourages agents, particularly unqualified ones, from actively signaling. Both of us leverage \citeauthor{lehmann1988comparing}'s (\citeyear{lehmann1988comparing}) quantile-function approach to derive sharp predictions on the impact of evaluation noise.

Finally, \cite{krishna2025pareto} model college admissions as a contest among heterogeneous students and characterize when pooling test outcomes (i.e., making test performance noisier) improves all students' welfare. The Pareto improvement exploits the strategic effect of pooling on effort.\footnote{Other recent papers study noise and total effort in complete-information contests \citep[see][]{drugov2020noise,morgan2022limits}. In particular, \citet{morgan2022limits} show that total effort in large contests is single-peaked in the noise level.} In contrast, we focus on the college's screening problem without competition among students. As we discuss at the end of \Cref{sect:principal-commit}, competition can help mitigate the pitfall of precision by granting the principal some commitment power.\footnote{Indeed, pooling never benefits colleges in \cite{krishna2025pareto}, as the outcome without pooling is assortative matching, which is efficient for colleges.}



\paragraph{Strategic Manipulation.} 
This paper is also related to a recent literature on strategic manipulation. In a closely related approval model, \cite{perez2022test} study the receiver's optimal test design when the agent can costly manipulate test inputs. Falsification can be viewed as signaling: falsified inputs correspond to signaling effort, and both tests and noisy signals are information structures that stochastically map inputs to observed signals. \citeauthor{perez2022test} show that the optimal test deliberately incorporates noise (i.e., randomization) 
to deter costly falsification by low types. 
Differently, we treat information structures as exogenous and focus on comparative statics across precision levels; our ``pitfall of precision'' thus conveys a related yet distinct message: more noise can be favored even away from the optimal information structure, which matters for the statistical discrimination perspective.

Instead of exogenous noise or optimal tests, \cite{frankel2019muddled,frankel2022improving} consider a signaling environment where senders are heterogeneous in both intrinsic quality and gaming ability in manipulating signals. They show that, as stakes rise or signaling technology becomes more manipulable, more information is revealed about gaming ability, and less about intrinsic quality. Therefore, the receiver should under-utilize signals if she can commit, similar to what we find under principal commitment. Relatedly, \cite{ball2025scoring} shows in a multi-feature extension of \citeauthor{frankel2019muddled}'s (\citeyear{frankel2019muddled}) model that, compared to full disclosure, the receiver prefers a noisier scoring rule that aggregates features into a one-dimensional score, as it mitigates her commitment problem and discourages strategic manipulation.

\paragraph{Costly Signaling with Noise.} The costly signaling literature dates back to \citet{spence1973job}, and noisy signaling to \citet{matthews1983equilibrium} and \citet{carlsson1997noise}; see \cite{heinsalu2014noisy,heinsalu2018dynamic} and \cite{dilme2019dynamic} for dynamic noisy signaling. Most related are \cite{de2011noisy} and \cite{jeitschko2012signaling}, who explore how noise affects signaling behavior with binary types and experimentally test their predictions. In contrast, we study a continuous-type model, which proves more tractable and enables richer comparative statics. We also show how the vanishing noise limit serves as an equilibrium refinement for the noiseless game: it selects the pooling equilibrium and a semi-separating one, whereas standard refinements (e.g., the intuitive criterion or D1) predict a continuum of equilibria.

\paragraph{Statistical Discrimination.} 
Our paper also relates to the classical literature on statistical discrimination \citep{phelps1972statistical,arrow1973theory}; see \citet{onuchic2025recent} for a recent survey. Our model marries statistical discrimination to costly signaling: as in equilibrium statistical discrimination \citep{arrow1973theory,coate1993will}, agents' strategic choices are shaped by their ability to convey productivity, but signaling here is purely wasteful rather than a human capital investment. Our discussion of discrimination reversal connects to \citet{fryer2007belief} and \citet{bohren2019dynamics}, who show that discrimination against a group at one stage of an evaluation process can reverse at a later stage; our reversal, however, arises through the pitfall of precision rather than dynamic belief updating.

\section{A Noisy Signaling Model}
\label{sect:eqm}
\subsection{Model Setup}

A principal (she) decides whether to approve an application by an agent (he). The agent has a private type $\theta\in\Theta:=[\underline\theta,\overline\theta]\subset\mathbb{R}_{++}$. The principal does not know $\theta$ but believes that $\theta$ is distributed according to a cumulative distribution function (cdf) $G$, which admits a probability density function (pdf) $g$ such that $g(\theta)>0$ 
for all $\theta\in[\underline\theta,\overline\theta]$. If the application is approved, the agent gets a payoff of $1$, independent of his type, and the principal gets $v(\theta)$, where $v$ is continuous, bounded and weakly increasing; otherwise, both players get zero payoff. For technical convenience, assume that $v$ and $g$ are continuously differentiable.

Let $\tilde\theta:=\inf\{\theta\in[\underline\theta,\overline\theta]:v(\theta)\geq0\}$; hence $\theta<\tilde\theta$ are bad types and $\theta\geq\tilde\theta$ are good types. To make the problem non-trivial, assume that $\mathbb{E}[v(\theta)|\theta\geq\tilde\theta]>0$ where the expectation is taken with respect to $G$. For expositional convenience, we will mainly focus on the case with $\mathbb{E}[v(\theta)]\leq0$, i.e., when the principal is a priori pessimistic about the agent's type. Therefore, absent any extra information, the principal would not approve the application.\footnote{This case can be interpreted as a ``cherry-picking'' market, as in \cite{bartovs2016attention}, where only top applicants are selected from a large pool of candidates.

The case with $\mathbb{E}[v(\theta)]>0$ is included in \Cref{sect:optimistic} for completeness. The pitfall of precision holds true as long as $\mathbb{E}[v(\theta)/\theta]<0$, where $1/\theta$ is type $\theta$'s marginal effort cost. Instead, when $\mathbb{E}[v(\theta)/\theta]\geq0$, regardless of the noise level, only a pooling equilibrium exists in which all types pool on zero effort and the principal approves every type, which is not of our interest.} 

\begin{assumption}[Pessimistic Prior]
\label{as:ex-ante-bad}
$\mathbb{E}[v(\theta)]\leq0$.
\end{assumption}

\paragraph{Noisy Signaling/Screening.} Before the principal makes her decision, the agent who privately knows his own type can make some signaling effort $e\in E:=[0,\infty)$. For each type, effort cost is \textit{linear}, given by $C(e,\theta)=e/\theta$; it is equivalent to assuming a general constant marginal cost $c(\theta)$ which is continuous and strictly decreasing in $\theta$.\footnote{We assume linear costs for simplicity. Our main result can extend to more general costs; see the discussion and numerical simulations in 
\Cref{sect:simulation}.} Instead of directly observing the effort level $e$ as in the standard signaling model, the principal can only observe a \textit{noisy signal} $s$ about $e$, given by $s=e+\epsilon/\precision$, where $\epsilon$ denotes a base noise which has a cdf $F$ and a pdf $f$ such that $\mathbb{E}[\epsilon]=0$, and $\precision>0$ captures the precision level which is our main focus. 
Let $f_{\precision}(z):=\precision f(\precision z)$ for $z\in\mathbb{R}$ denote the pdf of $\epsilon/\precision$.

\begin{assumption}
\label{as:noise}
The pdf of the noise $f$ satisfies the following assumptions:
\begin{enumerate}
    \item unbounded support: $f(z)>0$ for all $z\in\mathbb{R}$;\footnote{
    Our results can extend to bounded noise; see \Cref{sect:bounded-noise}.
    }
    \item log-concavity: $\log f(z)$ is concave in $z\in\mathbb{R}$;
    \item strictly diminishing bad luck: $f(z)<f(z')$ for any $z<z'<0$ such that $f(z)>0$.
\end{enumerate}
\end{assumption}
Given (1) and (2) in \Cref{as:noise}, the signal structure from $E$ to $\Delta(S)$, denoted by $h_{\precision}(s|e):=f_{\precision}(s-e)$, has full support and satisfies the Monotone Likelihood Ratio Property (MLRP). Moreover, $h_{\precision}(s|e)$ is ranked with respect to the accuracy order \citep{lehmann1988comparing,persico2000information}: the larger $\precision$ is, the more accurate the signal is. Hence, $\precision$ measures the signaling/screening precision.

To ease the exposition, further assume that $f$ is symmetric, continuous, and twice continuously differentiable on its support except for $0$. Define $f^{-1}(p):=\inf\{z\geq0:f(z)=p\}$ and $f^{-1}_\precision(p):=\inf\{z\geq0:f_{\precision}(z)=p\}$, hence $f_{\precision}^{-1}(p)=\frac{1}{\precision}f^{-1}(p/\precision)$. By all the assumptions, $f^{-1}:[0,f(0)]\to\overline{\mathbb{R}}_+$ is a continuous, decreasing function, with the convention that $f^{-1}(0):=+\infty$. 


Examples satisfying \Cref{as:noise} and others include: normal noise, where $f(z)=\frac{1}{\sqrt{2\pi}}\text{exp}(-z^2/2)$, and exponential noise, where $f(z)=\frac12\text{exp}(-|z|)$.


\paragraph{Solution Concept.} We consider pure-strategy Perfect Bayesian Equilibrium (PBE) (henceforth, equilibrium) where each type of the agent chooses an effort level, $e:\Theta\to E=[0,\infty)$, and for each signal realization, the principal forms beliefs over types and chooses whether to approve. Since the signal structure has full support, there is no off-path signal, therefore we do not need to explicitly specify the principal's belief system, let alone imposing refinements.

\subsubsection{Discussion of the Model} 
\label{sect:model discussion}
\paragraph{Signaling vs. Screening Noise.} The noise in our model can arise from either the signaling process or the principal's evaluation process. In the former case, signals such as test scores or résumés are noisy reflections of ability. In the latter case, noise stems from how interviews are conducted or from the principal's inattention or cognitive limitations in evaluating signals. 

\paragraph{Signaling Technology: Complements vs. Substitutes.} Following \cite{spence1973job}, we assume that signaling costs are negatively correlated with agent types (e.g., productivity) and that signals (e.g., education or test scores) reflect types through potentially different signaling effort. This setup is isomorphic to one in which signaling costs are identical across types, while types and effort jointly determine the agent's signal in a complementary way: $s=\theta\cdot e+\epsilon$ and $C(e)=e$; by redefining $\hat{e}:=\theta \cdot e$, it goes back to our original setup. We adopt this setup for simplicity: the complementarity ensures that the agent's strategy is monotone (where higher types always exert more effort), which implies a simple cutoff structure for the agent's equilibrium strategy, as we demonstrate later. 

Suppose instead that types and effort are substitutes; e.g., $s=\theta+e+\epsilon$ with $C(e)=e$.\footnote{Note that this is isomorphic to the setup in which $s=e+\epsilon$ with $C(e,\theta)=|e-\theta|$, which can be interpreted as a model with costly falsification \citep{perez2022test}.} 
Then signaling incentives are non-monotone (though higher types still generate higher signals): in equilibrium, both low types and very high types exert zero effort, as very high types can substitute their innate ability for effort. This non-monotonicity complicates equilibrium analysis without providing new insights on the impact of precision. We believe the direct and indirect effects described in the introduction---which we explore in detail later---still apply to this alternative signaling technology: in particular, an increased precision will still incentivize more bad types to invest in signaling, which is harmful to the principal.

\paragraph{A Population Interpretation: Without vs. With Quotas.} Our model also admits a population interpretation: there is a continuum of agents of mass 1, and $G$ is the population distribution of types. The principal must decide whether to approve each agent's application, but cannot distinguish among agents or types, so she can only base her decisions on noisy signals.

In the baseline model, the principal faces no binding quota constraint and may approve as many agents as she wishes. We believe this is appropriate for many practical settings, such as college admissions, grantmaking, and recruitment, where quotas are soft and approval decisions are relatively flexible. We briefly discuss at the end of \Cref{sect:principal-commit} what happens when the principal instead faces a strict quota requiring her to approve a fixed number of agents, a constraint that effectively endows the principal with commitment power.


\subsection{Equilibrium Characterization} 
In this section, we characterize equilibria for a given precision level.

\paragraph{Approval Standard.} Since higher types of the agent face lower effort costs, they must invest weakly more than lower types in equilibrium. Higher effort then generates higher signals under the signal structure with the Monotone Likelihood Ratio Property. Therefore, upon observing a high signal, the principal should believe that it more likely comes from a high type. In equilibrium, the principal thus takes a cutoff strategy: approving if and only if the observed signal $s$ exceeds some cutoff $\tau\in\overline{\mathbb{R}}:=\mathbb{R}\cup\{\pm\infty\}$. Hence, we can think the principal's problem as simply choosing a cutoff $\tau$ in the signal space, which is interpreted as the approval standard of the screening process.

\paragraph{Equilibrium Characterization.}
As in standard signaling games, a pooling equilibrium always exists where all agent types exert zero effort and the principal sets $\tau=+\infty$ and never approves. In this pooling equilibrium, signals are uninformative so that the principal follows her ex ante optimal choice regardless of signal realizations, in which case no agent type has incentives to exert effort. This is the only equilibrium when the precision is very low.

In contrast, when the precision is not too low, we show that at least a semi-separating equilibrium also exists. In this equilibrium, the principal sets an interior standard $\tau_-$ and high types above some threshold $\hat\theta_-\in[\underline\theta,\tilde\theta]$ are incentivized to exert positive effort, including both good types (i.e., above $\tilde\theta$) and some bad types (i.e., between $\hat\theta_-$ and $\tilde\theta$). 

\begin{proposition}
\label{prop:eqm}
    The following hold:
    \begin{itemize}
        \item[(a)] A pooling equilibrium always exists where the principal rejects upon all signals 
        and $e(\theta)\equiv0$ for all $\theta\in[\underline\theta,\overline\theta]$. It is the unique equilibrium when $\precision$ is sufficiently small.
        \item[(b)] There exists $\tilde{\precision}>0$ such that, when 
        $\precision\in[\tilde{\precision},\infty)$, a semi-separating 
        equilibrium also exists: the principal sets an approval standard 
        $\tau_-(\precision)\in(0,+\infty)$, and agent types above some 
        threshold $\hat\theta_-(\precision)\in(\underline\theta,\tilde\theta]$ 
        exert strictly positive effort while those below exert zero effort.
    \end{itemize}
\end{proposition}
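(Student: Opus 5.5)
We construct the equilibria directly. First we solve the agent's best response to an arbitrary approval standard $\tau$. Then we reduce the principal's best response to a single scalar indifference condition, and solve it by an intermediate value argument.

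\emph{Agent's best response.} Facing cutoff $\tau$, type $\theta$ gets $U(e;\theta,\tau)=1-F(\precision(\tau-e))-e/\theta$. The marginal benefit of effort is $f_\precision(\tau-e)$. By symmetry and log-concavity this density is single-peaked at $e=\tau$, with maximum $\precision f(0)$.
\begin{itemize}
\item If $1/\theta>\precision f(0)$, the marginal benefit is everywhere below the marginal cost, so $e=0$.
\item Otherwise the only interior candidate satisfying the second-order condition lies on the decreasing side of the density: $e^*(\theta,\tau)=\tau+f_\precision^{-1}(1/\theta)$. It must be compared with $e=0$.
\end{itemize}
The gain $U(e^*;\theta,\tau)-U(0;\theta,\tau)$ is increasing in $\theta$, by the envelope theorem and because $e^*\ge 0$ is increasing in $\theta$. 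Hence the agent's strategy is a cutoff: there is $\hat\theta(\tau)$ with $e=0$ below it and $e=e^*$ above it. Equivalently, for each candidate threshold $\hat\theta$ there is a unique $\tau(\hat\theta)$ that makes $\hat\theta$ indifferent. I will check that $\tau(\cdot)$ is continuous, which follows from the implicit function theorem, with strict monotonicity from the single-crossing above.

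\emph{Principal's best response and part (a).} Since $e(\theta)$ is monotone and $h_\precision$ has MLRP, the function $s\mapsto\int v(\theta)f_\precision(s-e(\theta))\,dG(\theta)$ crosses zero at most once from below. This is a variation-diminishing (Karlin) argument, using that $v$ is increasing. So the principal's best response is a cutoff, and an interior equilibrium is exactly a $\tau$ satisfying
$$\Phi:=\int_{\underline\theta}^{\hat\theta}v\,g\,f_\precision(\tau)\,d\theta+\int_{\hat\theta}^{\overline\theta}v(\theta)g(\theta)f_\precision\bigl(-f^{-1}_\precision(1/\theta)\bigr)d\theta=0 .$$
By symmetry the second integrand equals $v(\theta)g(\theta)/\theta$, so
$$\Phi(\hat\theta)=f_\precision(\tau(\hat\theta))\int_{\underline\theta}^{\hat\theta}v\,dG+\int_{\hat\theta}^{\overline\theta}\frac{v(\theta)}{\theta}\,dG(\theta).$$
For (a), under $\tau=+\infty$ nobody exerts effort. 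Signals are then uninformative, and \Cref{as:ex-ante-bad} makes rejection optimal. If $\precision f(0)<1/\overline\theta$, every type's marginal benefit is below its marginal cost for any $\tau$. So all types choose $e\equiv0$, the posterior equals the prior, and the pooling outcome is the only one.

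\emph{Part (b).} I will apply the intermediate value theorem to $\Phi(\hat\theta)$ on $(\underline\theta,\tilde\theta]$.
\begin{itemize}
\item At $\hat\theta$ near $\underline\theta$, the first term vanishes and the second term tends to $\mathbb{E}[v(\theta)/\theta]$. This is $\le\mathbb{E}[v]\,\mathbb{E}[1/\theta]\le0$ by Chebyshev's covariance inequality ($v$ increasing, $1/\theta$ decreasing) and \Cref{as:ex-ante-bad}. If $v$ is nonconstant the inequality is strict; otherwise I would perturb the argument to handle the boundary case.
\item At $\hat\theta=\tilde\theta$, the second term is a fixed positive constant, by $\mathbb{E}[v\mid\theta\ge\tilde\theta]>0$ and $v\ge0$ there. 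The first term is negative but multiplied by $f_\precision(\tau(\tilde\theta))=\precision f(\precision\tau(\tilde\theta))$.
\end{itemize}

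The main obstacle is showing that $\precision f(\precision\tau(\tilde\theta))\to0$ as $\precision\to\infty$. This is what defines $\tilde\precision$. My first attempt is to take the indifference condition of type $\tilde\theta$ and let $\precision\to\infty$.
\begin{itemize}
\item Since $f^{-1}_\precision(1/\tilde\theta)=\frac1\precision f^{-1}(1/(\tilde\theta\precision))\to0$, the effort-taking approval probability tends to one.
\item The zero-effort approval probability, $1-F(\precision\tau)$, must then stay bounded away from one.
\item Hence $\tau(\tilde\theta)$ stays bounded away from $0$; indeed $\tau\to\tilde\theta$.
\item So $\precision\tau\to\infty$, and log-concave tails decay exponentially, giving $\precision f(\precision\tau)\to0$.
\end{itemize}
I also need $1/\underline\theta\le\precision f(0)$ so that $e^*$ is well defined for all relevant types, which again holds for $\precision$ large. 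Continuity of $\Phi$ in $\hat\theta$ then yields a root $\hat\theta_-(\precision)\in(\underline\theta,\tilde\theta]$ with $\tau_-=\tau(\hat\theta_-)\in(0,\infty)$. Finally, I will verify global optimality for the agent given the cutoff structure, and the single-crossing property for the principal.
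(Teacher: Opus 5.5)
Your argument is correct in substance, but it is organized differently from the paper's.

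\textbf{The paper's route.} The paper solves the principal's condition for the standard, $\widehat{T}(\hat\theta)=f^{-1}(-\sigma R(\hat\theta))$, and looks for a fixed point of $\widehat\Theta\circ\widehat{T}$. The U-shape of $R$ (\Cref{lemma:1/theta+R} and \Cref{lemma:R-Ushaped}), together with monotonicity of $\widehat\Theta$ on $\hat\tau\ge0$, yields a \emph{unique} crossing on $(\theta^\dagger,\tilde\theta]$ whenever $\widehat\Theta(\widehat{T}(\tilde\theta))\le\tilde\theta$. That condition is verified by sending $\sigma\to0$ in type $\tilde\theta$'s indifference at the standard $\widehat{T}(\tilde\theta)$: the effort cost vanishes and the approval gain tends to one.

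\textbf{Your route.} You instead invert the agent's indifference to get $\tau(\hat\theta)$, and apply the intermediate value theorem to the principal's first-order condition $\Phi$. Your endpoint check at $\tilde\theta$ is the same crossing condition read off the other curve: $\Phi(\tilde\theta)>0$ amounts to $\precision\,\tau(\tilde\theta)>\widehat{T}(\tilde\theta)$, i.e.\ $\widehat\Theta(\widehat{T}(\tilde\theta))<\tilde\theta$.

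\textbf{What each buys.} Your route is lighter: it needs only continuity, not the shape of $R$. Your Chebyshev step also proves $\mathbb{E}[v(\theta)/\theta]<0$ (equivalently $\theta^\dagger>\underline\theta$), which the paper only asserts. Strictness is automatic, because $\mathbb{E}[v]\le0<\mathbb{E}[v\mid\theta\ge\tilde\theta]$ forces $v$ to be nonconstant, so no perturbation is needed. Your variation-diminishing argument is also the right way to see that the principal's first-order condition is sufficient. What matters there is single crossing of the posterior value in $s$, not global concavity of $W$. What you give up is uniqueness of the semi-separating equilibrium in $(\theta^\dagger,\tilde\theta]$, and the sharp threshold $\tilde\precision$ at which $\hat\theta_-(\tilde\precision)=\tilde\theta$. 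The proposition does not need these, but the paper uses both later: to treat $\hat\theta_-(\precision)$ as a differentiable function in \Cref{lemma:well-behaved}, and to obtain $V'(\tilde\precision+)>0$.

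\textbf{One step needs repair.} $\tau(\hat\theta)$ is not unique as you state it.
\begin{itemize}
\item For $\hat\theta>1/(\precision f(0))$, the negative standard $\tau=-f_\precision^{-1}(1/\hat\theta)$ also makes $\hat\theta$ the cutoff (\Cref{lemma:S-BR-increasing}). On that branch $f_\precision(\tau)=1/\hat\theta$, so $\Phi(\tilde\theta)<0$ by \Cref{lemma:1/theta+R}, and your tail argument fails.
\item You must therefore restrict to $\tau\ge0$. This restriction is exactly what gives $1-F(\precision\tau)\le 1/2$ in your ``bounded away from one'' step, and what delivers $\tau_-\in(0,\infty)$ at the end.
\item Uniqueness and continuity of the nonnegative root do not follow from single crossing in $\theta$. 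They follow because the $\tau$-derivative of the agent's gain, $f_\precision(\tau)-1/\hat\theta$, is strictly negative at any nonnegative root. Otherwise the gain would equal $\int_{-\tau}^{f_\precision^{-1}(1/\hat\theta)}[f_\precision(z)-1/\hat\theta]\,\mathrm{d}z>0$. This is the computation in \Cref{lemma:S-BR-increasing}.
\end{itemize}
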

The threshold $\tilde{\precision}$ is formally defined in \Cref{eq:tildesigma} in \Cref{app}.

Other equilibria with interior standards may also exist. However, as we show at the end of this section, only the two equilibria described in \Cref{prop:eqm} survive as noise diminishes; moreover, the semi-separating equilibrium with $(\tau_-,\hat\theta_-)$ is the most efficient equilibrium when precision is high.

Now we sketch the idea behind this equilibrium characterization and relegate the formal analysis and proof to \Cref{app}. To characterize equilibria, especially the semi-separating one in \Cref{prop:eqm}, we study the agent's best response to a given standard and the principal's optimal approval standard given the agent's strategy. Equilibria are then in mutual best responses.

\paragraph{Step 1: Agent's Best Response in Threshold $\hat\theta$.}
Fix an approval standard $\tau$. Notice that the marginal signaling benefit $\frac{\mathrm{d}}{\mathrm{d}e}\text{Pr}(e+\epsilon/\precision>\tau)=f_{\precision}(\tau-e)$ is single-peaked around the standard $\tau$. Therefore, the agent can either ``give up'' and exert zero effort, in which case he may get approved purely by luck. Or, he can exert positive effort to exceed the standard but may still probably get rejected due to bad luck. The agent needs to compare which is better to decide his best response (as  his problem is not concave; see \Cref{app:eqm-proof} for details).

Conditional on exerting positive effort, the optimal level is characterized by the first-order condition where the marginal effort cost equalizes the marginal benefit of avoiding bad luck:
\[
1/\theta=\frac{\mathrm{d}\text{Pr}(e+\epsilon/\precision>\tau)}{\mathrm{d}e}=\frac{\mathrm{d}[1-F_\precision(\tau-e)]}{\mathrm{d}e}=f_{\precision}(\tau-e).
\]
Let $\hat{e}(\theta;\tau)\geq\tau$ be the optimal positive effort level satisfying this condition.\footnote{To be precise, $\hat{e}(\theta;\tau)$ is only well-defined when $1/\theta\leq f_{\precision}(0)=\precision f(0)$, that is, when $\precision$ is not too small; otherwise, let $\hat{e}(\theta;\tau):=\tau$. 
See \Cref{app:eqm-proof} for the formal treatment.} 

The agent of type $\theta$ compares $e=0$ with $e=\hat{e}(\theta;\tau)$. It can be verified that a threshold type $\hat\theta(\tau;\precision)\in[\underline\theta,\overline{\theta}]$ exists such that all types above it strictly prefer $e=\hat{e}(\theta;\tau)$ and types below strictly prefer $e=0$. As a result, we can think that the agent is just choosing a threshold $\hat\theta$ in the type space in response to $\tau$. 

The threshold type $\hat\theta$ is determined by the indifference condition between $e=0$ and $e=\hat{e}(\theta;\tau)$ when $\hat\theta$ is interior, i.e., $\hat\theta\in(\underline\theta,\overline\theta)$:
\begin{equation}
\label{eq:S-BR}
    1-F_\precision(\tau-0)=1-F_\precision(\tau-\hat{e}(\hat\theta;\tau))-\hat{e}(\hat\theta;\tau)/\hat\theta. \tag{A-BR}
\end{equation}
Let $\widehat\Theta(\tau)$ denote the solution to \Cref{eq:S-BR} (which is well-defined; see \Cref{lemma:S-BR-increasing} in \Cref{app:eqm-proof}). It captures the agent's best response.
\begin{lemma}
\label{lemma:Theta-increasing}
    When $\tau\geq0$, $\widehat\Theta(\tau)$ is increasing in $\tau$.
\end{lemma}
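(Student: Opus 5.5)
We will study the net gain from the positive-effort option over the zero-effort option,
\[
\Delta(\theta,\tau):=\bigl[F_\precision(\tau)-F_\precision(\tau-\hat e(\theta;\tau))\bigr]-\hat e(\theta;\tau)/\theta .
\]
By the single-crossing structure behind \Cref{lemma:S-BR-increasing}, $\Delta(\cdot,\tau)$ is increasing in $\theta$, and $\widehat\Theta(\tau)$ is the type at which $\Delta$ changes sign (with the corner conventions when $\Delta$ has constant sign on $\Theta$). It therefore suffices to show that $\Delta(\theta,\tau)$ is weakly decreasing in $\tau$ for every $\theta$ and every $\tau\geq0$. Then, as $\tau$ rises, the set $\{\theta:\Delta(\theta,\tau)>0\}$ can only shrink, so its infimum $\widehat\Theta(\tau)$ can only rise.

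It helps to reparametrize by the gap $d:=\hat e(\theta;\tau)-\tau\geq0$. The first-order condition $f_\precision(-d)=f_\precision(d)=1/\theta$, using symmetry, gives $d=f_\precision^{-1}(1/\theta)$, which is independent of $\tau$. So $\hat e(\theta;\tau)=\tau+d(\theta)$ and $\partial\hat e/\partial\tau=1$. When $1/\theta>\precision f(0)$, the convention $\hat e=\tau$ gives $d=0$, again independent of $\tau$.

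With $d$ fixed, we differentiate:
\[
\frac{\partial\Delta}{\partial\tau}=f_\precision(\tau)-\frac{1}{\theta},
\]
since the term $F_\precision(\tau-\hat e)=F_\precision(-d)$ does not depend on $\tau$. Equivalently, by the envelope theorem applied at the interior optimum, $\partial_\tau$ of the positive-effort payoff equals $-1/\theta$, while $\partial_\tau$ of the zero-effort payoff equals $-f_\precision(\tau)$, and the difference is the expression above. We then need $f_\precision(\tau)\leq1/\theta$ at the relevant points.

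The key step is to evaluate this at the threshold type. Take any $\tau\ge0$ with $\hat\theta=\widehat\Theta(\tau)$ interior and $d(\hat\theta)>0$. Indifference (A-BR) reads
\[
F_\precision(\tau+d)-F_\precision(\tau)\cdot 0\ \ldots\quad\text{more precisely}\quad \int_{-d}^{\tau}f_\precision=\frac{\tau+d}{\hat\theta}.
\]
So the average density of $f_\precision$ on $[-d,\tau]$ equals $1/\hat\theta=f_\precision(d)$. Symmetry and unimodality give $f_\precision\geq f_\precision(d)$ on $[-d,d]$, with strict inequality in the interior, because bad luck is strictly diminishing. The average can therefore equal $f_\precision(d)$ only if $\tau>d$, with the mass beyond $d$, where $f_\precision<f_\precision(d)$, compensating. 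Hence $f_\precision(\tau)<f_\precision(d)=1/\hat\theta$, and so $\partial_\tau\Delta(\hat\theta,\tau)<0$ at the threshold.

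Combined with $\partial_\theta\Delta>0$, the implicit function theorem gives $\widehat\Theta'(\tau)=-\partial_\tau\Delta/\partial_\theta\Delta>0$ on the interior region. The same sign argument shows that the corner regions are entered monotonically: once $\Delta(\overline\theta,\tau)\le0$, it stays nonpositive for larger $\tau$, and symmetrically at $\underline\theta$. Piecing the regions together gives weak monotonicity on all of $\tau\ge0$.

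The main obstacle is this threshold-type inequality $f_\precision(\tau)<1/\hat\theta$. It relies on log-concavity and symmetry (unimodality), and it also has to handle the degenerate case $d=0$, where $\hat e=\tau$. In that case $\Delta=F_\precision(\tau)-\tfrac12-\tau/\theta$ and $\partial_\tau\Delta=f_\precision(\tau)-1/\theta<0$ holds directly because $1/\theta>\precision f(0)\geq f_\precision(\tau)$. We also need $\tau\geq0$ so that $f_\precision$ is decreasing on $[0,\infty)$ in the averaging argument; this is exactly where the hypothesis of \Cref{lemma:Theta-increasing} enters.
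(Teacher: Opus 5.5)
Your argument is essentially the paper's proof: with $\hat\tau=\precision\tau$ and $\sigma=1/\precision$, your $\Delta(\theta,\tau)$ is exactly the paper's $Q(\theta,\hat\tau)=\int_{-\hat\tau}^{f^{-1}(\sigma/\theta)}[f(z)-\sigma/\theta]\,\mathrm{d}z$. Like you, the paper uses $\partial Q/\partial\hat\theta>0$, signs $\partial Q/\partial\hat\tau=f(\hat\tau)-\sigma/\theta<0$ only at zeros of $Q$ via the same average-density contradiction, and concludes by the implicit function theorem; your treatment of the corner regions and of $d=0$ is a small addition, since the paper states strict monotonicity only where $\widehat\Theta$ is interior. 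One correction: your opening claim that $\Delta(\theta,\cdot)$ is decreasing for \emph{every} $\theta$ is false, since $f_\precision(\tau)>1/\theta$ for high types at small $\tau$, but you never rely on it because the sign is only needed on the zero set.
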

The intuition is simple: as the standard increases, more and more low types give up trying to meet the standard. The part with $\tau\leq0$ is less important, as it turns out that the principal never sets $\tau\leq0$ in equilibrium.

\paragraph{Step 2: Principal's Best Response.}

Given the agent's increasing effort strategy $e(\theta)$, if the principal's optimal approval standard $\tau$ is interior, she should be indifferent between approving and rejecting upon observing a signal $s=\tau$. That is, (an interior) $\tau$ should solve
\[
\mathbb{E}[v(\theta)|e(\theta)+\epsilon/\precision=\tau]=0\quad\Leftrightarrow\quad
\int_{\underline\theta}^{\overline\theta} v(\theta)g(\theta)f_{\precision}(\tau-e(\theta))\mathrm{d}\theta=0.
\]

We have derived in Step 1 that in equilibrium, for some threshold $\hat\theta$, $e(\theta)=0$ for $\theta<\hat\theta$ and $e(\theta)=\hat{e}(\theta;\tau)$ for $\theta\geq\hat\theta$; in particular, $f_{\precision}(\tau-\hat{e}(\theta;\tau))=1/\theta$. Plugging them into the indifference condition, the equilibrium condition for the principal's standard $\tau$ is thus:
\begin{equation}
\label{eq:R-BR}
f_{\precision}(\tau-0)\int_{\underline\theta}^{\hat\theta} v(\theta)g(\theta)\mathrm{d}\theta+\int_{\hat\theta}^{\overline\theta} v(\theta)g(\theta)/\theta\mathrm{d}\theta=0.\tag{P-BR}
\end{equation}

Since $\mathbb{E}[v(\theta)]\leq0$, the first term in the left-hand side of \Cref{eq:R-BR} is always negative, hence in any equilibrium involving an interior $\tau$, we must have $\int_{\hat\theta}^{\overline\theta} v(\theta)g(\theta)/\theta\mathrm{d}\theta\geq0$. That is, the agent's equilibrium threshold $\hat\theta$ cannot be too low: it must be $\hat\theta\geq\theta^\dagger$, where
\begin{equation}
\label{eq:thetadagger}
\theta^\dagger:=\inf\bigg\{\hat\theta\in[\underline\theta,\overline\theta]:\int_{\hat\theta}^{\overline\theta}v(\theta)g(\theta)/\theta\mathrm{d}\theta\geq0\bigg\}<\tilde\theta.
\end{equation}

The first term in the left-hand side of \Cref{eq:R-BR} captures the marginal cost from lowering the approval standard, and the second term is the marginal benefit. With a lower standard, the principal can approve more high types with positive effort at a size of $g(\theta)/\theta$, but also approve more low types with zero effort marginally with a size of $g(\theta)f_{\precision}(\tau-0)$. The benefit and the cost should be balanced at the optimum, reflected by \Cref{eq:R-BR}. 

Let $T(\hat\theta)$ be the positive standard $\tau\geq0$ that solves \Cref{eq:R-BR}; by symmetry of $f$, $-T(\hat\theta)$ solves \Cref{eq:R-BR} too. Intuitively, $\pm T$ capture the principal's ``best response'' to the agent's threshold $\hat\theta$.\footnote{To be precise, $\pm T$ are not exactly the principal's best responses, but rather equilibrium conditions on $\tau$, since in deriving \Cref{eq:R-BR} we already use the agent's best response.}

\begin{lemma}
\label{lemma:T-Ushaped}
    $T(\hat\theta;\precision)$ is strictly decreasing for $\hat\theta\in(\theta^\dagger,\tilde\theta)$ and increasing for $\hat\theta\in(\tilde\theta,\overline\theta)$, with $T(\theta^\dagger;\precision)=+\infty$.
\end{lemma}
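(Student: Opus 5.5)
Rearrange \Cref{eq:R-BR} to isolate the standard. For $\hat\theta$ with $\int_{\underline\theta}^{\hat\theta}v g\,\mathrm{d}\theta<0$, which holds for every $\hat\theta\in(\underline\theta,\overline\theta)$ because $v<0$ below $\tilde\theta$ and $\mathbb{E}[v]\le 0$, the condition becomes
\[
f_\precision(T)=\Phi(\hat\theta):=\frac{\int_{\hat\theta}^{\overline\theta}v(\theta)g(\theta)/\theta\,\mathrm{d}\theta}{-\int_{\underline\theta}^{\hat\theta}v(\theta)g(\theta)\,\mathrm{d}\theta},
\qquad T(\hat\theta;\precision)=f_\precision^{-1}\big(\Phi(\hat\theta)\big).
\]
Here $f_\precision^{-1}$ is continuous and strictly decreasing on $[0,f_\precision(0)]$, with $f_\precision^{-1}(0)=+\infty$. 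This uses the symmetry, log-concavity and strictly diminishing bad luck in \Cref{as:noise}, which make $f$ strictly decreasing on $z>0$ wherever positive. Hence the monotonicity of $T$ in $\hat\theta$ is exactly the reverse of that of $\Phi$. The whole proof therefore reduces to showing that $\Phi$ is strictly increasing on $(\theta^\dagger,\tilde\theta)$, decreasing on $(\tilde\theta,\overline\theta)$, and zero at $\theta^\dagger$.

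\textbf{Boundary value.} At $\hat\theta=\theta^\dagger$, continuity of the numerator and the definition \eqref{eq:thetadagger} give $\int_{\theta^\dagger}^{\overline\theta}vg/\theta=0$, unless $\theta^\dagger=\underline\theta$. In that edge case we would note that the denominator also vanishes, or simply restrict to the interior. So $\Phi(\theta^\dagger)=0$ and $T(\theta^\dagger)=f_\precision^{-1}(0)=+\infty$.

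\textbf{Monotonicity.} Write $N(\hat\theta)$ for the numerator and $D(\hat\theta)>0$ for the denominator. Differentiating,
\[
\Phi'(\hat\theta)=\frac{-v(\hat\theta)g(\hat\theta)}{D^2}\Big(\frac{D}{\hat\theta}+N\Big).
\]
On $(\theta^\dagger,\tilde\theta)$ we have $v(\hat\theta)<0$ and $g>0$, so $-vg>0$; moreover $N>0$ there, because $\int_{\hat\theta}^{\overline\theta}vg/\theta$ is increasing in $\hat\theta$ below $\tilde\theta$ and is zero at $\theta^\dagger$. Hence $\Phi'>0$ and $T$ is strictly decreasing. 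On $(\tilde\theta,\overline\theta)$ we have $v\ge0$, so $-vg\le 0$, while $N\ge0$ and $D>0$. Thus $\Phi'\le 0$ and $T$ is (weakly) increasing, as claimed.

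\textbf{Main obstacle: the range constraint $\Phi(\hat\theta)\le f_\precision(0)=\precision f(0)$.} This is needed for $T$ to be defined via $f_\precision^{-1}$. Where $\Phi$ exceeds $\precision f(0)$, no positive $\tau$ solves \Cref{eq:R-BR}, and I would adopt the convention $T:=0$ there. Under this convention $T$ is only weakly monotone on that region, so the strict claim should be read on the set where $T>0$. Near $\theta^\dagger$, $\Phi$ is small, so that region is nonempty. I would verify that this truncation is compatible with the statement, possibly noting that for $\precision\ge\tilde\precision$ it does not bind on the relevant interval.
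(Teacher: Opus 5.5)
There is a genuine gap, and it sits in the one computation the proof rests on. Since $N'(\hat\theta)=-v(\hat\theta)g(\hat\theta)/\hat\theta$ and $D'(\hat\theta)=-v(\hat\theta)g(\hat\theta)$, the quotient rule gives
\[
\Phi'(\hat\theta)=\frac{N'D-ND'}{D^2}=\frac{-v(\hat\theta)g(\hat\theta)}{D^2}\Big(\frac{D}{\hat\theta}-N\Big).
\]
The sign in front of $N$ is minus, not plus. This agrees with the paper, where $\Phi=-R$ and $R'=-vg\,[1/\hat\theta+R]/\int_{\underline\theta}^{\hat\theta}vg\,\mathrm{d}\theta$.

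On $(\theta^\dagger,\tilde\theta)$ both $D/\hat\theta$ and $N$ are positive, as you yourself observe. So the sign of the bracket is exactly what has to be proved: you need $N(\hat\theta)<D(\hat\theta)/\hat\theta$, i.e.\ $\Phi(\hat\theta)<1/\hat\theta$. That is the paper's separate lemma $1/\hat\theta+R(\hat\theta)>0$, and it is where the pessimistic prior $\mathbb{E}[v(\theta)]\le0$ does its real work. Your argument uses that assumption only to get $D>0$.

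The paper's supplemental appendix confirms that something is missing. When $\mathbb{E}[v]>0>\mathbb{E}[v/\theta]$, the bracket changes sign at an interior $\theta_0$ and the U-shape fails. Yet every sign fact your argument invokes still holds there: the sign of $-vg$, $N\ge0$ above $\theta^\dagger$, and $D>0$ on the domain of $T$.

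To close the gap, write
\[
\frac{D}{\hat\theta}-N=-\frac{1}{\hat\theta}\int_{\underline\theta}^{\hat\theta}vg\,\mathrm{d}\theta-\int_{\hat\theta}^{\overline\theta}\frac{vg}{\theta}\,\mathrm{d}\theta
\]
and compare weights.
\begin{itemize}
\item If $\hat\theta\ge\tilde\theta$: on $[\hat\theta,\overline\theta]$ we have $v\ge0$ and $1/\theta\le1/\hat\theta$. Hence $D/\hat\theta-N>-\mathbb{E}[v]/\hat\theta\ge0$, with strict inequality because $v>0$ near $\overline\theta$.
\item If $\hat\theta<\tilde\theta$: replace every weight by $1/\tilde\theta$. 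On $[\underline\theta,\tilde\theta)$ we have $v<0$, and the weights $1/\hat\theta$ and $1/\theta$ are at least $1/\tilde\theta$ (strictly for the first). On $[\tilde\theta,\overline\theta]$ we have $v\ge0$ and $1/\theta\le1/\tilde\theta$. Hence $D/\hat\theta-N>-\mathbb{E}[v]/\tilde\theta\ge0$.
\end{itemize}
With this, $\Phi'$ has the sign of $-v$. The rest of your proof (inverting $f_\precision$ and $\Phi(\theta^\dagger)=0$) then coincides with the paper's.

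The same inequality also settles your range concern on the relevant interval. Since $\Phi(\hat\theta)<1/\hat\theta$, the equation $f_\precision(T)=\Phi(\hat\theta)$ has a solution whenever $\precision f(0)\ge1/\hat\theta$, i.e.\ whenever type $\hat\theta$ has an interior effort option at all.
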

That is, the principal's optimal approval standard $T$ is U-shaped in $\hat\theta$. Recall that $\tilde\theta$ is the threshold separating good and bad types. Intuitively, when $\hat\theta<\tilde\theta$, since a higher $\hat\theta$ means fewer bad types exert positive effort, the principal would like to lower the standard to approve more good types (without having to approve too many bad types since more of them pool on zero effort). In contrast, when $\hat\theta>\tilde\theta$, as $\hat\theta$ increases, fewer good types exert positive effort, which implies a smaller marginal benefit of lowering the standard, and thus in response the principal would set a higher standard.

\paragraph{Step 3: Equilibrium via Mutual Best Responses.} Equilibria are characterized by fixed points of the two best responses, $\widehat\Theta\circ T$ or $\widehat\Theta\circ(-T)$. It can be shown that $\widehat\Theta\circ(-T)$ has no fixed point. And for $\widehat\Theta\circ T$, given that $T$ is U-shaped in $\hat\theta$ and $\widehat\Theta$ is increasing in $\tau$, on the decreasing part of $T$, i.e., to the left of $\tilde\theta$, $\widehat\Theta\circ T$ must have a unique fixed point, denoted by $(\tau_-,\hat\theta_-)$, so long as the graphs of $T$ and $\widehat\Theta$ ever intersect, which happens when $\precision$ is larger than a threshold $\tilde{\precision}>0$. There are possibly other fixed points on the increasing part of $T$, including the pooling equilibrium $(\tau,\hat\theta)=(+\infty,\overline\theta)$. \Cref{prop:eqm} thus follows.

\paragraph{Vanishing Noise and Equilibrium Selection.}
Let $(\tau_-(\precision),\hat\theta_-(\precision))$ denote the semi-separating equilibrium identified in \Cref{prop:eqm}, indexed by the precision level $\precision$. It can be shown that $\lim_{\precision\to\infty}(\tau_-(\precision),\hat\theta_-(\precision))=(\theta^\dagger,\theta^\dagger)$, which characterizes the vanishing noise limit of the semi-separating equilibria, where $\theta^\dagger$ is defined in \Cref{eq:thetadagger}. The following result shows that only this limit equilibrium and the pooling one survive in the vanishing noise limit.

\begin{proposition}
\label{prop:limit}
    In the limit of $\precision\to\infty$, equilibria converge to either the pooling equilibrium or the limit semi-separating one: for any sequence of $\precision_k>0$ and equilibria under $\precision_k$ with $\hat\theta_k$ such that $\lim_{k\to\infty}\precision_k=\infty$, it holds that $\lim_{k\to\infty}\hat\theta_k=\overline\theta$ or $\theta^\dagger$.
\end{proposition}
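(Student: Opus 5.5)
Every equilibrium is either the pooling one or a fixed point $(\tau,\hat\theta)$ of the best-response system, i.e.\ a solution of \Cref{eq:S-BR} and \Cref{eq:R-BR} (Step~3 rules out $-T$). So I will fix a sequence $\precision_k\to\infty$ with equilibria $(\tau_k,\hat\theta_k)$, pass to a subsequence along which $\hat\theta_k\to\theta^*\in[\underline\theta,\overline\theta]$ and $\tau_k\to\tau^*\in\overline{\mathbb R}$, and show that $\theta^*\in\{\theta^\dagger,\overline\theta\}$. Since every subsequence then has a further subsequence converging to one of these two points, the full sequence does too. Pooling equilibria already have $\hat\theta_k=\overline\theta$, so I only need to treat interior fixed points with $\hat\theta_k<\overline\theta$. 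For these, \Cref{eq:R-BR} forces $\hat\theta_k\ge\theta^\dagger$.

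\textbf{Step 1: asymptotics of \Cref{eq:S-BR}.} I will use $f_{\precision}(z)=\precision f(\precision z)$, the first-order condition $f_{\precision}(\tau-\hat e)=1/\theta$, and $f_{\precision}^{-1}(p)=\frac1\precision f^{-1}(p/\precision)$. Together these give
\[
\hat e(\theta;\tau)=\tau+\tfrac{1}{\precision}f^{-1}\!\big(1/(\theta\precision)\big).
\]
Log-concavity gives at least exponential tails, so $f^{-1}(1/(\theta\precision))=O(\log\precision)$ and $\hat e-\tau\to0$ uniformly in $\theta$. Likewise $1-F_{\precision}(\tau-\hat e)\to1$ uniformly, since $F(-f^{-1}(1/(\theta\precision)))\lesssim f(f^{-1}(\cdot))\cdot\text{const}$ by log-concavity. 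Also $1-F_{\precision}(\tau)\to\mathbf 1\{\tau<0\}$ for fixed $\tau\neq0$. Plugging these into \Cref{eq:S-BR}, an interior threshold must satisfy, in the limit, $\mathbf 1\{\tau^*<0\}=1-\tau^*/\theta^*$ when $\tau^*\neq0$ is finite. Since the principal sets $\tau\geq 0$, this gives $\tau^*=\theta^*$ (the cost of reaching $\tau$ equals the prize). The case $\tau^*=0$ is handled directly: it is incompatible with \Cref{eq:R-BR}, because $f_{\precision}(\tau_k)$ would blow up unless $\int_{\underline\theta}^{\hat\theta}vg\approx0$, which is impossible for $\hat\theta<\tilde\theta$. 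Corner thresholds are consistent with $\tau^*\ge\overline\theta$ or $\tau^*\le\underline\theta$.

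\textbf{Step 2: asymptotics of \Cref{eq:R-BR}.} With $\tau_k\to\tau^*>0$, the term $f_{\precision_k}(\tau_k)=\precision_k f(\precision_k\tau_k)\to0$. Hence \Cref{eq:R-BR} reduces in the limit to $\int_{\theta^*}^{\overline\theta}v(\theta)g(\theta)/\theta\,\mathrm d\theta=0$. I will then check the zeros of $\Phi(x):=\int_x^{\overline\theta}vg/\theta$ on $[\theta^\dagger,\overline\theta]$. $\Phi$ is increasing on $[\underline\theta,\tilde\theta]$ and decreasing on $[\tilde\theta,\overline\theta]$, with $\Phi(\overline\theta)=0$ and $\Phi>0$ on $(\theta^\dagger,\overline\theta)$. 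So its zeros on this interval are exactly $\theta^\dagger$ and $\overline\theta$. If the threshold is at the corner $\hat\theta_k=\underline\theta$ infinitely often, then $\Phi(\underline\theta)<0$ contradicts $\hat\theta_k\ge\theta^\dagger$, given that $\theta^\dagger>\underline\theta$ by \Cref{as:ex-ante-bad}. Finally, the case $\tau^*=+\infty$ forces $\theta^*=\overline\theta$ via Step~1, because a finite $\theta^*$ cannot afford effort $\approx\tau_k\to\infty$.

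\textbf{Main obstacle.} The delicate step is making Step~1 rigorous uniformly. The approval probability at zero effort, $1-F(\precision\tau)$, vanishes, while the bad-luck term $F(-f^{-1}(1/(\theta\precision)))$ and the overshoot $\hat e-\tau$ vanish at rates depending on the tail of $f$. I also need $\tau_k$ to stay bounded away from $0$ uniformly so that $\precision_k\tau_k\to\infty$. For this I would first try \Cref{lemma:Theta-increasing} and \Cref{lemma:T-Ushaped}: bound $\tau_k$ from below using the explicit form of \Cref{eq:R-BR}, noting that $f_{\precision}(\tau)\le\Phi(\hat\theta)/|\int_{\underline\theta}^{\hat\theta}vg|$ is bounded, which forces $\precision_k\tau_k\to\infty$.
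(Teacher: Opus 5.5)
\textbf{Gap: the case $\tau^*=0$.} Your handling of this case is wrong, and it is not a side case: every sequence you need to rule out has $\tau^*=0$.

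Condition \eqref{eq:R-BR} does not make $f_{\precision_k}(\tau_k)$ blow up as $\tau_k\to0$. It pins $f_{\precision_k}(\tau_k)=\Phi(\hat\theta_k)\big/\bigl|\int_{\underline\theta}^{\hat\theta_k}vg\bigr|$, which is bounded, as you note yourself at the end. Concretely, suppose $\hat\theta_k\to\theta^*\in(\theta^\dagger,\overline\theta)$. Then this ratio converges to a positive finite number $c_k\to c$, so $\tau_k=\precision_k^{-1}f^{-1}(c_k/\precision_k)$. Hence $\tau_k\to0$ (because $pf^{-1}(p)\to0$) while $\precision_k\tau_k\to\infty$.

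So \eqref{eq:R-BR} is fully compatible with $\tau^*=0$. Your Step~2, which requires $\tau^*>0$, never touches these sequences. For the same reason, the plan in your last paragraph cannot succeed: \eqref{eq:R-BR} alone yields only $\precision_k\tau_k\to\infty$, not a lower bound on $\tau_k$.

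\textbf{Repair.} The contradiction must come from the agent's indifference condition, which is the paper's argument. Since $\tau_k\ge0$, the left side of \eqref{eq:S-BR}, $1-F(\precision_k\tau_k)$, is at most $1/2$; in fact it tends to $0$ because $\precision_k\tau_k\to\infty$. If $\tau^*=0$, the right side tends to $1$. Indeed, effort $\hat e_k=\tau_k+\precision_k^{-1}f^{-1}(1/(\hat\theta_k\precision_k))\to0$ is nearly free and yields approval with probability tending to one. So the threshold type would strictly prefer effort, a contradiction.

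Equivalently, feed $\precision_k\tau_k\to\infty$ into your Step~1. Its limit then reads $0=1-\tau^*/\theta^*$ for every finite $\tau^*\ge0$. This forces $\tau^*=\theta^*\ge\theta^\dagger>0$ and removes the case $\tau^*=0$ altogether. Your Step~2 then correctly gives $\Phi(\theta^*)=0$, hence $\theta^*\in\{\theta^\dagger,\overline\theta\}$.

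The paper runs the same idea contrapositively. It assumes $\hat\theta_k$ stays away from both $\theta^\dagger$ and $\overline\theta$, deduces $\tau_k\to0$ with $\precision_k\tau_k\to\infty$, and contradicts indifference. With the repair, your route is a valid variant that also identifies the limit standard $\tau^*=\theta^*$.
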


As a result, as $\precision\to\infty$, there are only two scenarios: either all types above $\theta^\dagger$ pool on effort level $e=\theta^\dagger$, while types below exert zero effort, and the principal only approves applications from types with positive effort $e=\theta^\dagger$; or all types pool on zero effort and the principal does not approve any application. 

The vanishing noise limit of this noisy signaling model thus provides a refinement for equilibria in the noiseless model. Without noise, in addition to the pooling equilibrium, there is a continuum of ``semi-separating'' weak PBE, indexed by thresholds $\hat\theta$ such that $\mathbb{E}[v(\theta)|\theta\geq\hat\theta]\geq0$, where types above $\hat\theta$ pool on effort level $\hat\theta$, types below exert zero effort, and the principal only approves types with effort level $\hat\theta$. Among these equilibria, all semi-separating ones with threshold $\hat\theta\leq\tilde\theta$ can pass the D1 criterion \citep{cho1987signaling,banks1987equilibrium}. In contrast, our vanishing noise refinement selects a unique semi-separating equilibrium with $\hat\theta=\theta^\dagger$, alongside the pooling one.

As a corollary of \Cref{prop:limit}, when $\precision$ is large, both the principal's and the agents' payoffs in equilibrium $(\tau_-(\precision),\hat\theta_-(\precision))$---approximately $\int_{\theta^\dagger}^{\overline\theta}v(\theta)g(\theta)\mathrm{d}\theta>0$ for the principal and $\max\{1-\theta^\dagger/\theta,0\}$ for the agent of type $\theta$---are higher than the zero or almost-zero payoffs in any other equilibrium. 

\begin{corollary}
\label{cor:pareto}
When $\precision$ is sufficiently large, the semi-separating equilibrium identified in \Cref{prop:eqm} is Pareto-optimal among all equilibria.
\end{corollary}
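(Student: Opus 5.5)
We read Pareto-optimality as follows: for all large $\precision$, no other equilibrium gives the principal a weakly higher payoff and every agent type a weakly higher payoff with at least one strict inequality. The plan is to compare limiting payoffs along sequences, using \Cref{prop:limit} to sort every other equilibrium into two classes.

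\emph{Step 1: limit payoffs of the selected equilibrium.}
Since $(\tau_-(\precision),\hat\theta_-(\precision))\to(\theta^\dagger,\theta^\dagger)$, the principal's payoff in the selected equilibrium converges to $V^\dagger:=\int_{\theta^\dagger}^{\overline\theta}v(\theta)g(\theta)\mathrm{d}\theta$.

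To show this, split types at $\hat\theta_-$. Types below $\hat\theta_-$ exert zero effort, so their approval probability is $1-F(\precision\tau_-)$. This tends to $0$ because $\tau_-\to\theta^\dagger>0$.

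Types above $\hat\theta_-$ satisfy $f_\precision(\tau_--\hat e)=1/\theta$, so $\hat e-\tau_-=\frac1\precision f^{-1}(1/(\precision\theta))$. Their approval probability is $F\bigl(f^{-1}(1/(\precision\theta))\bigr)$, which tends to $1$ uniformly in $\theta$. The effort itself tends to $\theta^\dagger$, using that $f^{-1}(p)\to\infty$ slower than $1/p$ for log-concave $f$.

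Dominated convergence then gives the limit. $V^\dagger>0$ because $\theta^\dagger<\tilde\theta$ and the integral starting at $\theta^\dagger$ is nonnegative even with weight $1/\theta$; reweighting toward low types only hurts there. Similarly, type $\theta$'s payoff converges to $\max\{1-\theta^\dagger/\theta,0\}$, which is strictly positive for $\theta>\theta^\dagger$.

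\emph{Step 2: every other equilibrium is worse for some player.}
Argue by contradiction. Suppose there is a sequence $\precision_k\to\infty$ and equilibria with thresholds $\hat\theta_k$, distinct from the selected one, that weakly Pareto-dominate it. By \Cref{prop:limit}, pass to a subsequence with $\hat\theta_k\to\overline\theta$ or $\hat\theta_k\to\theta^\dagger$.

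\emph{Case $\hat\theta_k\to\overline\theta$.} Only a vanishing mass of types exerts positive effort, so the principal's payoff from those types is $o(1)$. The remaining mass has zero effort. The principal approves zero-effort types only with probability $1-F(\precision_k\tau_k)$, and because $\mathbb{E}[v]\le 0$ this contribution is at most $o(1)$. The inequality from \Cref{eq:R-BR} gives $\tau_k\ge 0$, so this term is $\le 0+o(1)$. Hence the principal's payoff is at most $o(1)<V^\dagger$, contradicting weak dominance for the principal.

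\emph{Case $\hat\theta_k\to\theta^\dagger$.} Here the equilibrium lies on the same branch as the selected one. Interior equilibria satisfy the fixed-point conditions \Cref{eq:S-BR} and \Cref{eq:R-BR} with $\tau=T(\hat\theta)$. \Cref{lemma:T-Ushaped} and \Cref{lemma:Theta-increasing} imply that on $(\theta^\dagger,\tilde\theta)$ the fixed point is unique. So for large $k$ this equilibrium is $(\tau_-,\hat\theta_-)$ itself, contradicting distinctness. The pooling equilibrium is already covered by the first case.

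\emph{Main obstacle.} The hardest part is Step 2's second case. The step that actually needs work is showing that $\hat\theta_k\to\theta^\dagger<\tilde\theta$ forces the equilibrium onto the decreasing part of $T$ for large $k$, using monotonicity plus uniqueness of the crossing there.

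A fallback is available if that cannot be established. In that case, compare payoffs using the fact that all such equilibria have the same limiting payoffs. Then weak dominance can fail only by $o(1)$ margins, and the argument must instead use the strict ordering supplied by the paper's remark that $(\tau_-,\hat\theta_-)$ is the most efficient equilibrium at high precision.

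Step 1 needs only the uniform convergence of the approval probabilities, which follows from $\precision f(\precision z)\to 0$ for $z\neq0$.
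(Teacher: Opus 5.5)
Your argument is correct and is essentially the paper's: \Cref{prop:limit} pushes every other equilibrium toward the pooling outcome, where the principal gets at most $o(1)<\int_{\theta^\dagger}^{\overline\theta}v(\theta)g(\theta)\mathrm{d}\theta$ (comparing only her payoff is enough for non-domination). An equilibrium with $\hat\theta_k\to\theta^\dagger$ must, for large $\precision$, be the unique fixed point of $\widehat\Theta\circ\widehat{T}$ on $(\theta^\dagger,\tilde\theta]$, once you cite that the $-\widehat{T}$ branch and $\hat\theta\le\theta^\dagger$ are already ruled out in the equilibrium characterization. The step you flag as the main obstacle is immediate---convergence alone places $\hat\theta_k$ in $(\theta^\dagger,\tilde\theta)$ for large $k$---so you should drop the fallback, which is circular because it appeals to the ``most efficient equilibrium'' remark.
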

For comparative statics, we thus focus on this semi-separating equilibrium.

\section{The Pitfall of Precision}
\label{sect:cs}
This section studies the impact of the precision in the signaling process on equilibrium outcomes. Among other things, we highlight a strategic effect of precision: a higher precision increases the value of signaling and incentivizes more low types to exert effort. In \Cref{sect:main result}, we show that, due to this strategic effect, the principal's welfare is non-monotone in the precision. In particular, we identify a pitfall of precision: the principal is worse off when the process becomes too precise (i.e., $\precision$ large). In \Cref{sect:accuracy}, we further unravel the pitfall of precision by discussing the implications of precision on the endogenous screening efficiency and accuracy of information transmission.

\subsection{Welfare Implications of Precision}
\label{sect:main result}
To understand how the precision affects the principal's payoff, let us first investigate its impact on equilibrium behavior. Recall that the equilibrium of interest is the semi-separating one $(\tau_-(\precision),\hat{\theta}_-(\precision))$ in \Cref{prop:eqm} when $\precision\in[\tilde{\precision},\infty)$. 

Let $\hat\tau_-(\precision):=\precision\tau_-(\precision)$ be the \emph{``precision-adjusted'' standard}. We are interested in it because $\hat\tau$ is a more relevant policy variable for both the principal and the agent by taking the noise into account; for example, it directly characterizes the approval probability at zero effort: $\text{Pr}(0+\epsilon/\precision>\tau)=1-F(\precision\tau)=1-F(\hat\tau)$. 
\begin{lemma}
\label{lemma:increasing-indifference}
    For $\precision\in[\tilde{\precision},\infty)$, $\hat\theta_-(\precision)$ is strictly decreasing 
    and $\hat\tau_-(\precision)$ is strictly increasing. 
\end{lemma}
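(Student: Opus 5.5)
Throughout, $(\tau_-,\hat\theta_-)$ is the fixed point of $\widehat\Theta\circ T$ on the branch $\hat\theta\in(\theta^\dagger,\tilde\theta)$. The plan is to rewrite the two equilibrium conditions \Cref{eq:S-BR} and \Cref{eq:R-BR} in the precision-adjusted variable $\hat\tau=\precision\tau$, eliminate $\precision$ from one combination of them, and read off the comparative statics from monotone relations.

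\emph{Step 1 (reparametrize).} Let $k:=f^{-1}\big(1/(\precision\hat\theta)\big)\ge 0$. The first-order condition $\precision f(\precision(\tau-e))=1/\hat\theta$ together with $\hat e\ge\tau$ gives $\precision(\hat e-\tau)=k$. Using the symmetry of $f$, \Cref{eq:S-BR} becomes
\[
\phi(\hat\tau,k):=F(\hat\tau)+F(k)-1-f(k)(\hat\tau+k)=0 .
\]
This equation involves $\precision$ only through $k$. I would show it defines $k=K(\hat\tau)$ for $\hat\tau\ge0$. This is the content of \Cref{lemma:Theta-increasing} transcribed to these variables: $\partial_k\phi=-f'(k)(\hat\tau+k)>0$ for $k>0$, and $\partial_{\hat\tau}\phi=f(\hat\tau)-f(k)$. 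Writing $A(\hat\theta):=\int_{\underline\theta}^{\hat\theta}vg<0$ and $B(\hat\theta):=\int_{\hat\theta}^{\overline\theta}vg/\theta$, \Cref{eq:R-BR} becomes $\precision f(\hat\tau)|A(\hat\theta)|=B(\hat\theta)$. The agent's condition is $\precision\hat\theta f(K(\hat\tau))=1$.

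\emph{Step 2 (eliminate $\precision$).} Dividing the two conditions gives
\[
\frac{f(\hat\tau)}{f(K(\hat\tau))}=\frac{\hat\theta B(\hat\theta)}{|A(\hat\theta)|}=:R(\hat\theta).
\]
This relation is free of $\precision$. On $(\theta^\dagger,\tilde\theta)$ we have $B'=-vg/\hat\theta>0$ and $|A|'=-vg>0$, so $|A|$ is decreasing. Also $B>0$ there, hence $R$ is strictly increasing. On the left, I need $L(\hat\tau):=f(\hat\tau)/f(K(\hat\tau))$ to be strictly decreasing. We have $f(\hat\tau)$ decreasing, and $f(K(\hat\tau))$ should be increasing if $K$ is decreasing. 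From $\partial_{\hat\tau}\phi=f(\hat\tau)-f(k)$, $K$ is decreasing exactly when $f(\hat\tau)<f(k)$, i.e. $k<\hat\tau$.

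I expect this sign to be the main obstacle. I would try first to prove $K(\hat\tau)<\hat\tau$ directly from $\phi=0$. At $k=\hat\tau$ one gets $\phi=2F(\hat\tau)-1-2\hat\tau f(\hat\tau)=2\int_0^{\hat\tau}(f(z)-f(\hat\tau))\,\mathrm{d}z>0$. Since $\phi$ is increasing in $k$, the root lies strictly below $\hat\tau$. If instead $K$ turns out not to be monotone, I would differentiate $\log L$ directly and control $f'/f$ via log-concavity. Given $L$ decreasing, the relation defines $\hat\tau=\Phi(\hat\theta):=L^{-1}(R(\hat\theta))$, which is strictly decreasing and independent of $\precision$.

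\emph{Step 3 (conclude).} Substitute into the agent condition: $\precision=1/\big(\hat\theta f(K(\Phi(\hat\theta)))\big)$. As $\hat\theta$ rises, $\Phi(\hat\theta)$ falls, so $K(\Phi(\hat\theta))$ rises and $f(K(\cdot))$ falls. The subtlety is that the factor $\hat\theta$ rises at the same time, so monotonicity of the product $\hat\theta f(K(\Phi(\hat\theta)))$ is not immediate. I would instead use the principal condition, $\precision=B(\hat\theta)/\big(f(\Phi(\hat\theta))|A(\hat\theta)|\big)$. As $\hat\theta$ increases, $B/|A|$ strictly increases and $f(\Phi(\hat\theta))$ increases, so a sign comparison is still needed there. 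If this direct route stalls, I would fall back on the implicit function theorem applied to the two-equation system in $(\hat\tau,\hat\theta)$. The Jacobian determinant has a definite sign by the increasing/decreasing crossing structure from Step 3 of the equilibrium analysis. The $\precision$-derivatives are explicit: $f(\hat\tau)|A|$ for the principal and $\hat\theta f(k)$ for the agent. Cramer's rule then gives $\mathrm{d}\hat\theta_-/\mathrm{d}\precision<0$, and $\mathrm{d}\hat\tau_-/\mathrm{d}\precision>0$ follows from $\hat\tau=\Phi(\hat\theta)$ with $\Phi$ decreasing.
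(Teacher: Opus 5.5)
\textbf{There is a genuine gap, and it sits exactly at the step you flag as the main obstacle.}

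Your own computation gives $K(\hat\tau)<\hat\tau$, hence $\partial_{\hat\tau}\phi=f(\hat\tau)-f(K)<0$. Together with $\partial_k\phi>0$ this yields $K'=-\partial_{\hat\tau}\phi/\partial_k\phi>0$, so $K$ is \emph{increasing}, not decreasing. It has to be: in your variables the threshold is $\hat\theta=1/(\precision f(K(\hat\tau)))$, and \Cref{lemma:Theta-increasing} says this increases in $\hat\tau$.

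Consequently both $f(\hat\tau)$ and $f(K(\hat\tau))$ decrease in $\hat\tau$, and monotonicity of $L=f(\hat\tau)/f(K(\hat\tau))$ does not follow from monotone pieces. This monotonicity is in fact the whole analytic content of the lemma. Use the formula for $K'$, then substitute $f(K)(\hat\tau+K)=F(K)-F(-\hat\tau)$ from $\phi=0$, to get
\[
\frac{\mathrm{d}\log L}{\mathrm{d}\hat\tau}=-\frac{f'(-\hat\tau)}{f(-\hat\tau)}+\frac{f(K)-f(-\hat\tau)}{F(K)-F(-\hat\tau)}.
\]
This is strictly negative by \Cref{lemma:log-concave} applied with $x=K>0>x'=-\hat\tau$. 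It is precisely the inequality the paper uses to sign $\partial P/\partial\sigma$ in \eqref{eq:dhattheta/dsigma-num}. The idea of rewriting $f(K)(\hat\tau+K)$ as an increment of $F$ via the indifference condition is missing from your proposal.

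Your Cram\'er's-rule fallback does not bypass this. Up to sign, the numerator for $\mathrm{d}\hat\theta_-/\mathrm{d}\precision$ is $\precision\hat\theta|A|\,[f'(\hat\tau)f(K)-f(\hat\tau)f'(K)K']$. Its two terms have opposite signs, and it equals $\precision\hat\theta|A|f(\hat\tau)f(K)\cdot\mathrm{d}\log L/\mathrm{d}\hat\tau$. So the explicit $\precision$-derivatives do not settle its sign; you are back to the same inequality.

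\textbf{A second, smaller slip.} On $(\theta^\dagger,\tilde\theta)$ you have $|A|'=-vg>0$, so $|A|$ is increasing, not decreasing. Then $\hat\theta B/|A|$ is a ratio of increasing functions, and its monotonicity is not free. It requires $(B/|A|)'=(-vg)\,(|A|/\hat\theta-B)/|A|^2>0$. That condition is exactly $1/\hat\theta+B/A>0$, i.e.\ \Cref{lemma:1/theta+R}, which uses the pessimistic prior. You should cite \Cref{lemma:R-Ushaped} here.

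\textbf{What your route buys once fixed.} With both signs corrected, your architecture closes more cleanly than you feared. The locus $\Phi=L^{-1}\circ(\hat\theta B/|A|)$ is strictly decreasing and free of $\precision$. Along it, $\hat\theta\uparrow$ implies $\Phi\downarrow$, hence $K\circ\Phi\downarrow$, hence $f(K\circ\Phi)\uparrow$. So $\hat\theta f(K(\Phi(\hat\theta)))$ is strictly increasing, and $\precision=1/[\hat\theta f(K(\Phi(\hat\theta)))]$ is strictly decreasing in $\hat\theta$. Combined with uniqueness of the fixed point on $(\theta^\dagger,\tilde\theta]$, this delivers both claims without the paper's equilibrium-level implicit function argument and regularity lemma. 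That is a legitimate alternative route, but only after the log-concavity inequality above is supplied.
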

In words, as the precision increases, more types engage in costly signaling (as the equilibrium threshold type is decreasing in precision). This is because greater precision increases the value of signaling: with greater precision, an agent who pays the cost to exceed the standard can be more confident that doing so will lead to approval.

On the other hand, the equilibrium precision-adjusted standard increases with precision. In a less noisy environment, the principal can afford to be more lenient by lowering the standard, while still limiting the approval probability of low types with zero effort, $1-F(\precision\tau)$. This leniency allows her to approve more high types with positive effort. However, to balance this trade-off, the principal does not lower the standard proportionally with precision, leading to an increased precision-adjusted standard.

Now we are ready to explore the welfare implications of the precision. 

\paragraph{Principal Payoff.}
Let $V(\precision)$ be the principal's ex ante payoff at $\precision$. Recall that $\text{Pr}(0+\epsilon/\precision>\tau)=1-F(\hat\tau)$, and given that $1/\theta=f_{\precision}(\tau-\hat{e}(\theta))=\precision f(\precision(\tau-\hat{e}(\theta)))$ and $\hat{e}(\theta)\geq\tau$, it can be shown that 
    \[
    \text{Pr}\left(\hat{e}(\theta;\tau)+\epsilon/\precision>\tau\right)=1-F\left(\precision(\tau-\hat{e}(\theta))\right)=1-F(-f^{-1}(1/(\precision\theta))).
    \]
Therefore, the principal's ex ante payoff can be written as
\begin{align*}
V(\precision)
=\int_{\underline\theta}^{\hat\theta_-(\precision)} v(\theta)g(\theta)\mathrm{d}\theta\cdot [1-F(\hat\tau_-(\precision))]+\int_{\hat\theta_-(\precision)}^{\overline\theta} v(\theta)g(\theta)[1-F(-f^{-1}(1/(\precision\theta)))]\mathrm{d}\theta.
\end{align*}
An increase in precision has three effects:
\begin{enumerate}
    \item \textbf{Rejecting more low types}: By \Cref{lemma:increasing-indifference}, $1-F(\hat\tau_-(\precision))$ is decreasing in $\precision$. Therefore, as the precision increases, bad types who exert zero effort are less likely to be approved simply by good luck. 
    This direct effect leads to a decrease of the Type II error and is good for the principal.
    
    \item \textbf{Approving more high types}: An increased precision also limits the scope of bad luck and leads to more approval of high types ($\theta>\hat\theta_-(\precision)$) who actively invest effort. 
    In detail, 
    \[
    \frac{\mathrm{d}}{\mathrm{d}\precision}[1-F(-f^{-1}(1/(\precision\theta)))]=-\frac{1}{\precision^3\theta^2 f'(f^{-1}(1/(\precision\theta)))}>0.
    \]
    Approving more good types decreases the Type I error and is surely beneficial, but not all types above $\hat\theta_-(\precision)$ are good types. A higher precision thus also results in bad types in $(\hat\theta_-,\tilde\theta)$ being more likely to be approved, generating loss. Hence, the overall effect is ambiguous for the principal.
    
    When $\precision$ is close to $\tilde{\precision}$, the equilibrium threshold $\hat\theta_-(\precision)$ is close to $\tilde\theta$, that is, very few bad types invest effort, so the loss is negligible compared to the benefit and thus this second direct effect is also good for the principal. 
    
    In contrast, when $\precision$ is large, $\hat\theta_-(\precision)$ is close to $\theta^\dagger$, the loss from falsely approving more bad types outweighs the benefit from rejecting fewer good types. This is due to the diminishing bad luck and the log-concavity of $f$, which imply that the marginal increase in the approval probability is smaller for higher types who exert higher effort.  
    
    \item \textbf{Strategic effect}: More importantly, the precision also affects the agent's strategic signaling behavior. According to \Cref{lemma:increasing-indifference}, as the precision increases, more low types are incentivized by the increased precision to invest effort and try to meet the standard (in terms of a decrease in $\hat\theta_-(\precision)$). 
    Given that $\hat\theta_-(\precision)\leq\tilde\theta$ and $v(\hat\theta_-(\precision))\leq0$, the types who are incentivized are bad types, implying an increase of the Type II error. Hence, this effect generates a loss for the principal. When $\precision$ is close to $\tilde{\precision}$, the marginally incentivized types are close to $\tilde\theta$, so this negative effect vanishes at $\precision=\tilde{\precision}$.
    
\end{enumerate}

Combining these three effects, when $\precision$ is relatively low (close to $\tilde{\precision}$), the strategic effect vanishes, 
while the two direct effects of an increased precision are both good for the principal, 
therefore the principal benefits from a higher precision level.

In contrast, when precision is large ($\precision\to\infty$), the negative strategic effect becomes first-order relative to the direct effects. This is because the approval probabilities of types with zero and positive effort converge to zero and one, respectively, rendering the direct effects negligible. Meanwhile, the strategic effect operates on the extensive margin: low types switch from near-zero to near-one approval probability. In the proof, we show that the direct effects vanish faster than the strategic effect captured by $\mathrm{d}\hat\theta_-/\mathrm{d}\sigma$ (see \Cref{lm:dtheta/dsigma-order}, Claims \autoref{claim:a1} and \autoref{claim:A2}). Due to the dominant strategic effect, the principal is worse off under higher precision---a phenomenon we term the \textit{pitfall of precision}.

\Cref{prop:non-monotone} summarizes these observations.

\begin{proposition}
\label{prop:non-monotone}
    The principal's ex ante payoff $V(\precision)$ is non-monotone in $\precision$:
    \begin{itemize}
        \item[(a)] increasing when $\precision$ is low: $V'(\tilde{\precision}+)>0$, where $\tilde{\precision}$ is the threshold introduced in \Cref{prop:eqm};
        \item[(b)] decreasing when $\precision$ is high: $V'(\precision)<0$, for $\precision>\bar{\precision}$ with some $\bar{\precision}>\tilde{\precision}$.
    \end{itemize}
\end{proposition}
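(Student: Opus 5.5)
Our plan is to differentiate the closed form of $V(\precision)$ given before the statement and to sign each term separately near $\tilde\precision$ and as $\precision\to\infty$. Write $A(\hat\theta):=\int_{\underline\theta}^{\hat\theta}v g\,\mathrm{d}\theta\le 0$ and $q(\theta,\precision):=1-F(-f^{-1}(1/(\precision\theta)))$. Differentiating gives
\begin{align*}
V'(\precision)=-A(\hat\theta_-)f(\hat\tau_-)\hat\tau_-'(\precision)+\int_{\hat\theta_-}^{\overline\theta}v g\,\partial_\precision q\,\mathrm{d}\theta+v(\hat\theta_-)g(\hat\theta_-)\big[1-F(\hat\tau_-)-q(\hat\theta_-,\precision)\big]\hat\theta_-'(\precision).
\end{align*}
The three terms are direct effect I, direct effect II, and the strategic effect. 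By \Cref{lemma:increasing-indifference}, $\hat\tau_-'>0$ and $\hat\theta_-'<0$. We also have $1-F(\hat\tau_-)<q$, since positive effort exceeds $\tau$ while zero effort does not, and $v(\hat\theta_-)\le0$. So the first term is nonnegative and the strategic term is nonpositive. The derivatives $\hat\tau_-',\hat\theta_-'$ would come from implicit differentiation of the system \eqref{eq:S-BR}--\eqref{eq:R-BR}, rewritten in precision-adjusted variables $(\hat\tau,\hat\theta)$ so that $\precision$ enters only through $1/(\precision\theta)$.

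For part (a), the key fact is that at $\precision=\tilde\precision$ the fixed point has $\hat\theta_-=\tilde\theta$. This should follow from the construction of $\tilde\precision$ in \eqref{eq:tildesigma}: the graphs of $T$ and $\widehat\Theta$ first meet at the bottom of the U-shape of $T$ (\Cref{lemma:T-Ushaped}). Then $v(\hat\theta_-)=0$, so the strategic term vanishes provided $\hat\theta_-'$ stays bounded as $\precision\downarrow\tilde\precision$. If the derivative blows up because of the tangency, we would instead bound $v(\hat\theta_-)\hat\theta_-'$ using $v(\hat\theta_-)=O(\tilde\theta-\hat\theta_-)$ together with a square-root rate for $\tilde\theta-\hat\theta_-$. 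In the second term the integrand involves only good types, and $\partial_\precision q>0$ by the displayed formula, so it is strictly positive. The first term is $\ge0$. Hence $V'(\tilde\precision+)>0$.

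For part (b), all three terms go to zero as $\precision\to\infty$, and the task is to compare their rates. \Cref{prop:limit} gives $\hat\theta_-\to\theta^\dagger<\tilde\theta$, so $v(\hat\theta_-)\to v(\theta^\dagger)<0$, and $q-(1-F(\hat\tau_-))\to1$. The strategic term is therefore asymptotically $v(\theta^\dagger)g(\theta^\dagger)\hat\theta_-'(\precision)$. I would prove a rate lemma for $|\hat\theta_-'|$. From \eqref{eq:R-BR}, $\precision f(\hat\tau_-)A(\hat\theta_-)=-\int_{\hat\theta_-}^{\overline\theta}vg/\theta$, and the right side is approximately linear in $\hat\theta_- - \theta^\dagger$. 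From \eqref{eq:S-BR}, the gap $\hat\theta_- - \theta^\dagger$ is governed by the costs $\hat e/\hat\theta$ with $\hat e\approx\tau+f^{-1}(1/(\precision\hat\theta))/\precision$ and by the tail terms. Together these should give $\hat\theta_--\theta^\dagger$ of order $f^{-1}(1/\precision)/\precision$ up to lower-order terms, and hence $|\hat\theta_-'|$ of order $f^{-1}(1/\precision)/\precision^2$ or larger.

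The main obstacle is showing that the direct effects are of strictly smaller order. For the second term, $\partial_\precision q=-1/(\precision^3\theta^2f'(f^{-1}(1/(\precision\theta))))$. Log-concavity gives $|f'/f|$ nondecreasing in the tail, so $\partial_\precision q\lesssim 1/(\precision^2\,|f'/f|(f^{-1}(1/\precision)))$, which is $o$ of the strategic rate. For the first term, \eqref{eq:R-BR} gives $f(\hat\tau_-)=O(1/\precision)$ times $(\hat\theta_- - \theta^\dagger)$. It remains to show $\hat\tau_-'$ grows slower than the compensating factor, again via implicit differentiation and log-concavity (Mills-ratio type bounds $1-F(z)\le f(z)/|f'/f|(z)$). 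I expect that verifying these rate comparisons uniformly for general log-concave $f$, rather than just for normal or exponential noise, will be the delicate part. Once they hold, $V'(\precision)<0$ for all $\precision>\bar\precision$.
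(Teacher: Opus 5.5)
Your part (a) matches the paper's argument and is correct. The tangency worry does not arise: at $\hat\theta_-=\tilde\theta$ one has $R'(\tilde\theta)=0$, hence $\widehat{T}'=0$. So the implicit-function derivative of the fixed point is finite, and the strategic term vanishes because $v(\tilde\theta)=0$.

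Part (b) has a genuine gap, and it comes from your rate heuristic.
\begin{itemize}
\item You conflate two different gaps. Condition (A-BR) pins down the \emph{standard}, and it is $\theta^\dagger-\tau_-$ that is of order $f^{-1}(1/\rho)/\rho$.
\item The \emph{threshold} gap is pinned by (P-BR). As you wrote yourself, it gives $\hat\theta_--\theta^\dagger\approx C\rho f(\hat\tau_-)$ with $\hat\tau_-=\rho\tau_-\approx\theta^\dagger\rho\to\infty$.
\item So $\hat\theta_--\theta^\dagger$, $\hat\theta_-'(\rho)$, and hence the strategic term are exponentially small in $\rho$.
\item Your bound $\partial_\rho q\lesssim\rho^{-2}/|f'/f|$ gives only polynomial smallness for the second term. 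It therefore cannot show that term is of smaller order than the strategic term.
\end{itemize}
The claim itself is also false in general. Set $\phi:=-\log f$ and $x_\theta:=f^{-1}(1/(\rho\theta))$; then $\partial_\rho q=1/(\rho^2\theta\,\phi'(x_\theta))$. For normal noise, where $\phi'(x)=x$, the second term is approximately $-\rho^{-2}x^{-3}\int_{\theta^\dagger}^{\overline\theta}(vg/\theta)\log(\theta/\tilde\theta)\,\mathrm{d}\theta$ with $x=f^{-1}(1/\rho)$. This is of order $\rho^{-2}(\log\rho)^{-3/2}$ and swamps the strategic term. For exponential noise the comparison you want does hold, but only because of a cancellation that your bound does not see.

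The missing idea is to \emph{sign} direct effect II rather than bound it.
\begin{itemize}
\item Split the second term as $\int_{\hat\theta_-}^{\overline\theta}=\int_{\theta^\dagger}^{\overline\theta}-\int_{\theta^\dagger}^{\hat\theta_-}$.
\item Three facts control the first piece. By definition of $\theta^\dagger$, $\int_{\theta^\dagger}^{\overline\theta}vg/\theta\,\mathrm{d}\theta=0$. The function $v$ changes sign once, at $\tilde\theta$. By log-concavity, $1/\phi'(x_\theta)$ is nonincreasing in $\theta$.
\item Together these give that the first piece equals $\rho^{-2}\int_{\theta^\dagger}^{\overline\theta}(vg/\theta)\bigl[1/\phi'(x_\theta)-1/\phi'(x_{\tilde\theta})\bigr]\,\mathrm{d}\theta\le 0$. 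At high precision, direct effect II is itself weakly harmful; this is the paper's $\overline{A}_2\ge0$ step, written in $\sigma$.
\item Only the remainder, of size $O\bigl(\rho^{-2}(\hat\theta_--\theta^\dagger)\bigr)$, needs to be negligible.
\end{itemize}
For that remainder you need a \emph{relative} statement, not an absolute rate: $(\hat\theta_--\theta^\dagger)/(\mathrm{d}\hat\theta_-/\mathrm{d}\sigma)\to0$. The paper obtains this from the implicit-derivative formula for $\hat\theta_-$, by showing $R(\hat\theta_-)/(\mathrm{d}\hat\theta_-/\mathrm{d}\sigma)\to0$.

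The same relative comparison handles your first term. Differentiate $\hat\tau_-=f^{-1}(-\sigma R(\hat\theta_-))$ to express $\mathrm{d}\hat\tau_-/\mathrm{d}\sigma$ through $R(\hat\theta_-)$ and $\mathrm{d}\hat\theta_-/\mathrm{d}\sigma$. This avoids having to control the large factor $\hat\tau_-'$ separately.

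The conclusion $V'(\rho)<0$ then holds for three reasons: the strategic term is negative, the first term is of smaller order, and the second term is nonpositive up to a smaller-order error. It does not hold because the strategic effect dominates both direct effects.
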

See \Cref{fig:V-non-mono} for illustrations with normal and exponential noise, respectively. Therefore, when the precision is high, the principal benefits from introducing some noise into the screening process.

\begin{figure}[htbp]
    \centering
    \begin{subfigure}{0.48\textwidth}
        \centering
        \includegraphics[width=\linewidth]{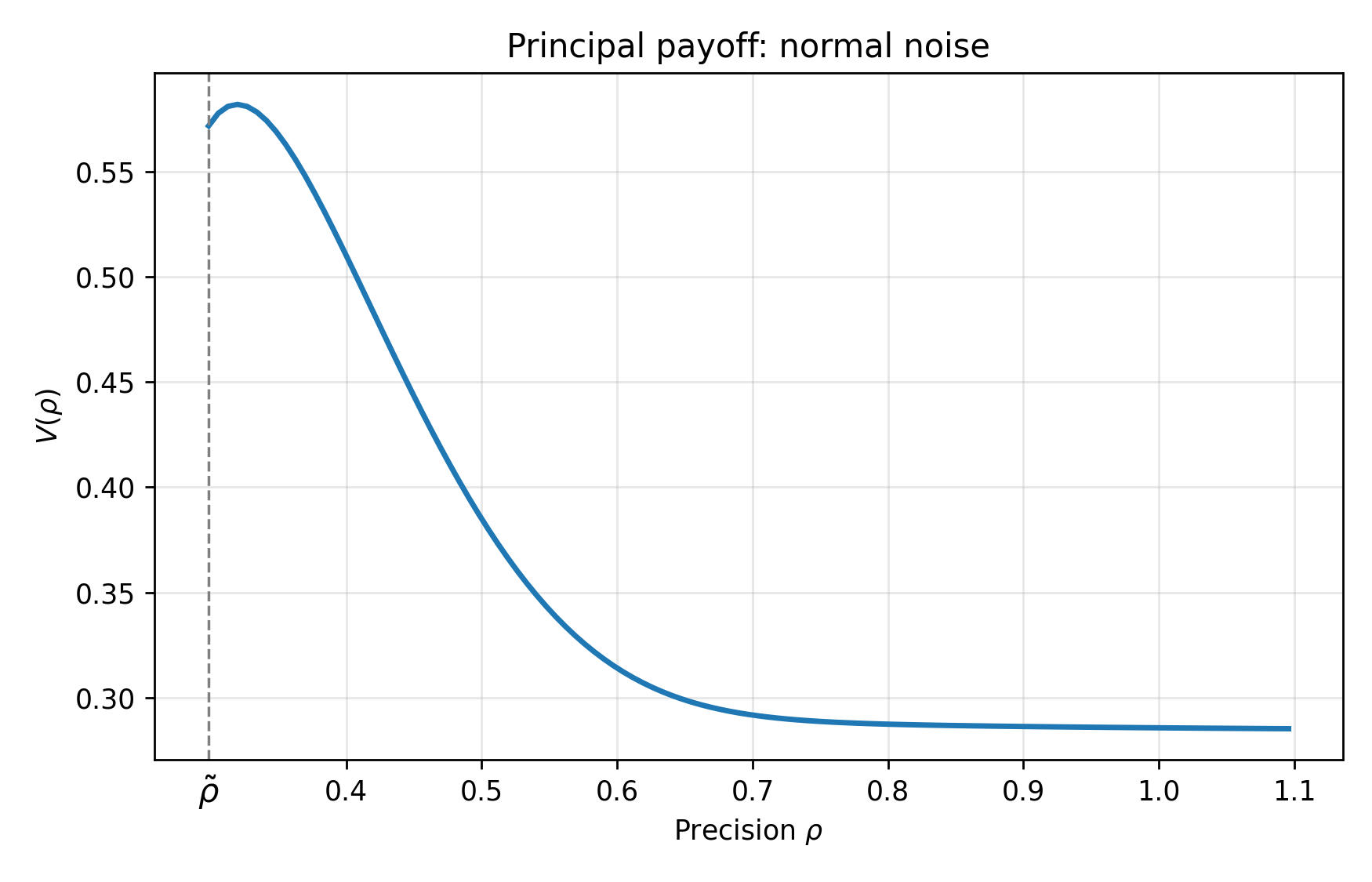}
        \caption{Normal noise.}
    \end{subfigure}
    \hfill
    \begin{subfigure}{0.48\textwidth}
        \centering
        \includegraphics[width=\linewidth]{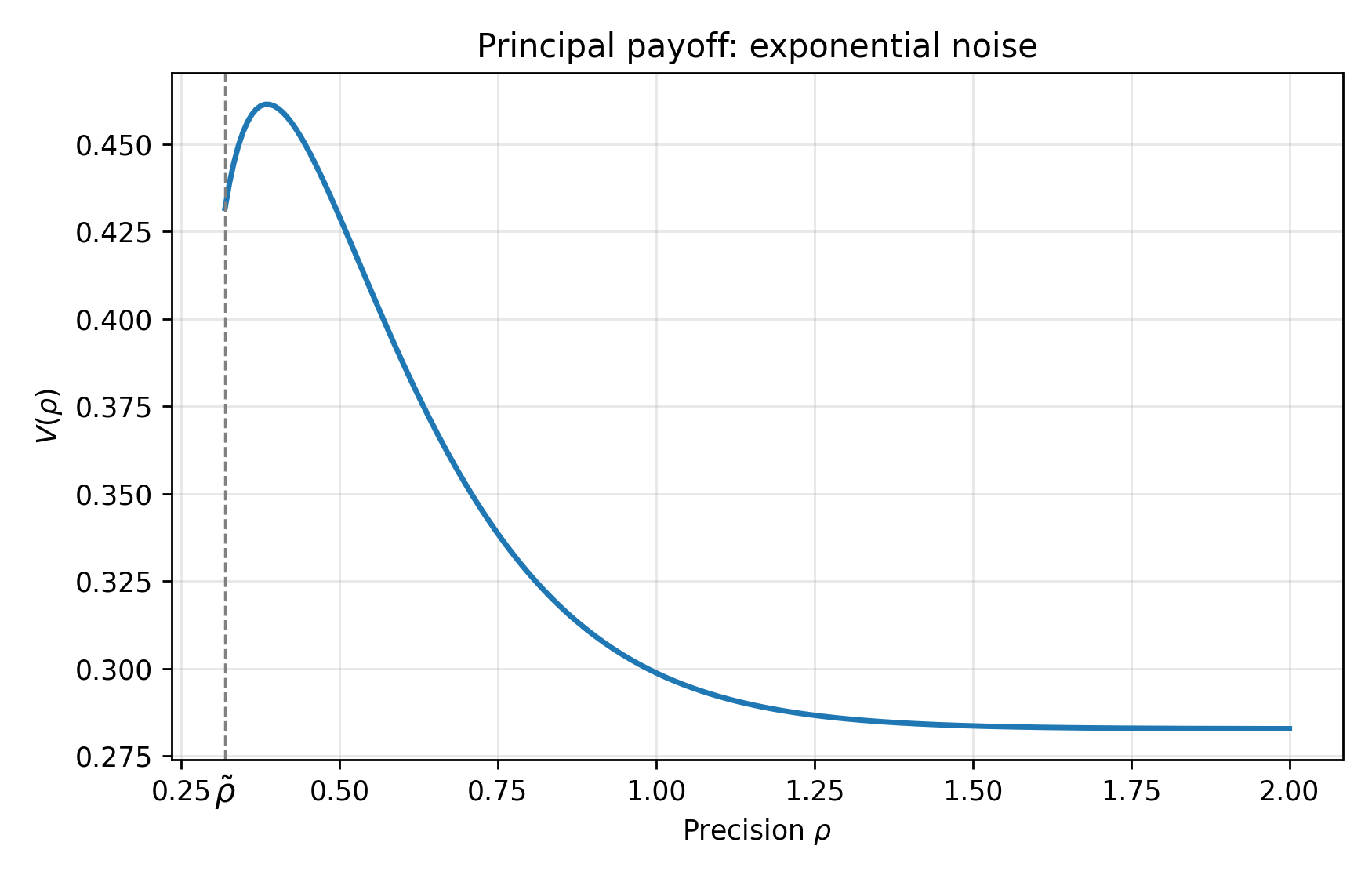}
        \caption{Exponential noise.}
    \end{subfigure}
    \caption{Non-monotonicity of principal payoff $V(\precision)$, where $\theta$ is taken to be uniformly distributed over $[5,15]$, with $v(\theta)=\theta-11$.}
    \label{fig:V-non-mono}
\end{figure}

\begin{remark}[Extensive vs. intensive margin]
\label{rm:extensive}
The pitfall of precision depends critically on the extensive margin of the strategic effect in this approval model. To see this, consider the following linear reputation example. Let
\[
    \theta \sim N(\mu,\omega^2),
    \qquad
    s=e+\epsilon/\precision,
    \qquad
    \epsilon\sim N(0,1).
\]
Observing signal $s$, the market forms a posterior mean $\mathbb{E}[\theta\mid s]$, capturing the agent's reputation. Suppose that the agent's payoff is
\[
    \mathbb{E}[\theta\mid s]-\frac{1}{2}(e-\theta)^2.
\]
The unique linear equilibrium features ``effort'' $e_{\precision}(\theta)=\theta+k_{\precision}$ and posterior mean
\[
    \mathbb{E}[\theta\mid s]
    =\mu+k_{\precision}(\theta-\mu+\epsilon/\precision)
    \sim N\!(\mu,\,q_{\precision}),
    \qquad
    k_{\precision}:=\frac{\omega^2}{\omega^2+1/\precision^2},\; q_{\precision}:=\frac{\omega^4}{\omega^2+1/\precision^2}.
\]

Higher precision still intensifies signaling incentives, but on the intensive margin across all types: as $k_{\precision}$ increases in $\precision$, every type exerts higher ``effort''. Nevertheless, the posterior mean becomes more dispersed, as $q_{\precision}$ increases in $\precision$, making equilibrium signaling more informative. Hence, if the principal has a convex payoff over posterior means, she always benefits from higher precision---no pitfall arises. 

This contrast clarifies the source of the pitfall. In our model, the approval-rejection environment generates non-monotone marginal benefit from effort (single-peaked around the standard; unlike the constant one under linear reputation), producing a zero-versus-positive effort cutoff.\footnote{
This is also by linear costs. With general convex costs, the single-peaked marginal benefit can still generate a low-versus-high effort cutoff and thus the extensive margin; see \Cref{sect:simulation}.} This creates an extensive margin for the strategic effect, which is absent in the linear reputation example. This example shows that precision-induced signaling on the intensive margin alone may never outweigh the direct benefits of precision, and thus need not generate the pitfall.

More broadly, the extensive margin is not unique to binary treatment environments. We conjecture that discrete treatments can in general generate extensive margins and hence the pitfall. Extensive margins can also arise mechanically from discrete effort choices (e.g., binary effort); see \Cref{sect:simulation} for an illustration.
\end{remark}

\paragraph{Approval Rate and Agent Payoff.} 

Let us now consider the agent's welfare. It can be shown that when the precision is high, both the equilibrium approval rate in the population and the agent's average/ex ante payoff is higher with a higher precision. 

Let $AR(\precision)$ and $U(\precision)$ be the equilibrium approval rate and agent ex ante payoff, respectively:
\begin{align*}
    AR(\precision):=&G(\hat\theta_-(\precision))[1-F(\hat\tau_-(\precision))]+\int_{\hat\theta_-(\precision)}^{\overline\theta} g(\theta)[1-F(-f^{-1}(1/(\precision\theta)))]\mathrm{d}\theta,\\
    U(\precision):=&G(\hat\theta_-(\precision))[1-F(\hat\tau_-(\precision))]+\int_{\hat\theta_-(\precision)}^{\overline\theta} g(\theta)\left\{1-F(-f^{-1}(1/(\precision\theta)))-\frac{1}{\precision\theta}[\hat\tau_-(\precision)+f^{-1}(1/(\precision\theta))]\right\}\mathrm{d}\theta.
\end{align*}

\begin{proposition}
\label{prop:approval}
    The following hold:
    \begin{itemize}
        \item[(a)] The approval rate $AR(\precision)$ is increasing when $\precision$ is large.
        \item[(b)] Suppose that $\lim_{z\to\infty}\frac{f(z)f''(z)}{[f'(z)]^2}=1$. Then the agent's ex ante payoff $U(\precision)$ is increasing when $\precision$ is large.
    \end{itemize}
\end{proposition}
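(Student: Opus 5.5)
We differentiate $AR(\precision)$ and $U(\precision)$ with respect to $\precision$ and track which terms dominate as $\precision\to\infty$, using the same high-precision asymptotics as in the proof of \Cref{prop:non-monotone} (in particular \Cref{lm:dtheta/dsigma-order} on the order of $\mathrm{d}\hat\theta_-/\mathrm{d}\precision$, and \Cref{lemma:increasing-indifference}). Write $p_0(\precision):=1-F(\hat\tau_-(\precision))$ for the approval probability at zero effort and $p_1(\theta;\precision):=1-F(-f^{-1}(1/(\precision\theta)))$ for that at positive effort. By Leibniz's rule,
\[
AR'(\precision)=-g(\hat\theta_-)\,[p_1(\hat\theta_-;\precision)-p_0(\precision)]\,\hat\theta_-'(\precision)+G(\hat\theta_-)\,p_0'(\precision)+\int_{\hat\theta_-}^{\overline\theta}g(\theta)\,\partial_\precision p_1(\theta;\precision)\,\mathrm{d}\theta .
\]

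For part (a), the first term is the extensive-margin (strategic) effect. It is strictly positive because $\hat\theta_-'<0$ and $p_1-p_0\to1$. The third term is positive by the displayed derivative $-1/(\precision^3\theta^2f'(f^{-1}(\cdot)))>0$. The only negative term is $G(\hat\theta_-)p_0'$, the direct effect of rejecting more zero-effort types. Unlike in \Cref{prop:non-monotone}, no sign from $v$ enters, so the task is to show that this term is dominated. I would show that $p_0'(\precision)=-f(\hat\tau_-)\hat\tau_-'(\precision)$ vanishes faster than $|\hat\theta_-'|$. For this I would reuse Claims \ref{claim:a1} and \ref{claim:A2}, which assert that the direct effects are of lower order than $\mathrm{d}\hat\theta_-/\mathrm{d}\precision$. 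Since those claims bound the same quantities $f(\hat\tau_-)\hat\tau_-'$ and $\int\partial_\precision p_1$, only with weights $v g$ in place of $g$, and the weights are bounded, the same bounds apply here. It follows that $AR'>0$ for large $\precision$.

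For part (b), $U$ adds the cost term $-\int_{\hat\theta_-}^{\overline\theta}g(\theta)\frac{1}{\precision\theta}[\hat\tau_-+f^{-1}(1/(\precision\theta))]\,\mathrm{d}\theta$. Because the indifference condition (A-BR) holds at $\hat\theta_-$, the boundary terms in $U'$ cancel (envelope argument), and the extensive margin has no first-order effect on $U$. This leaves
\[
U'(\precision)=G(\hat\theta_-)p_0'+\int_{\hat\theta_-}^{\overline\theta}g(\theta)\Big\{\partial_\precision p_1-\partial_\precision\Big[\tfrac{\hat\tau_-+f^{-1}(1/(\precision\theta))}{\precision\theta}\Big]\Big\}\mathrm{d}\theta .
\]
Substituting $\tau=\hat\tau_-/\precision$ and applying the envelope theorem type by type (the effort $\hat e$ is optimal given $\tau$), the integrand reduces to two pieces. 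One is the gain from less bad luck at fixed $\tau$. The other is the change in effort cost through $\tau_-(\precision)=\hat\tau_-/\precision$, namely $-\tau_-'(\precision)/\theta$. Since $\tau_-\to\theta^\dagger$, the sign of $\tau_-'$ and its order matter.

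The main obstacle is comparing these orders, and this is where the tail condition $ff''/(f')^2\to1$ enters. That condition says that $-f'/f$ is asymptotically slowly varying along the tail, which gives sharp asymptotics for $f^{-1}(1/(\precision\theta))$ and for $\hat\tau_-$. Concretely, I would show three things. First, $\precision\,\tau_-'(\precision)$ behaves like $\mathrm{d}/\mathrm{d}\precision$ of $f^{-1}(\cdot)/\precision$ terms. Second, $\tau_-$ decreases toward $\theta^\dagger$, so $-\tau_-'>0$ lowers the effort cost. Third, the zero-effort loss $G p_0'$ is of smaller order, because under the tail condition $f(\hat\tau_-)$ decays faster than $1/(\precision\theta)$. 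I would first attempt to express everything via $L(\precision):=f^{-1}(1/\precision)$ and derive $L'(\precision)\sim 1/(\precision\,|f'/f|(L))$. Combined with the condition, this shows that the positive cost-saving and bad-luck terms dominate, giving $U'>0$ for large $\precision$.
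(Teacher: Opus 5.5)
Your part (a) is the paper's argument: the same three-term decomposition, with the extensive-margin and bad-luck terms positive and the zero-effort term $G(\hat\theta_-)p_0'$ negligible relative to $\hat\theta_-'$ by Claim~\ref{claim:a1}. That claim is all you need. Claim~\ref{claim:A2} does not show the bad-luck term is of lower order. It relies on the cancellation $\int_{\theta^\dagger}^{\overline\theta}vg/\theta\,\mathrm{d}\theta=0$, which is lost once the weight is $g$. In fact $\int g\,\partial_\rho p_1\,\mathrm{d}\theta$ is of order $\rho^{-2}b(1/\rho)$, where $b(p)=-p/f'(f^{-1}(p))$, which is far larger than $|\hat\theta_-'|$, itself exponentially small in $\rho$. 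Since that term is positive, the misattribution does no harm in (a).

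Part (b), however, contains a sign error that hides the real difficulty. Your type-by-type envelope computation is correct: for $\theta>\hat\theta_-$,
\[
\frac{\mathrm{d}}{\mathrm{d}\rho}u(\theta,\rho)=\frac{1}{\theta}\Big[\frac{f^{-1}(1/(\rho\theta))}{\rho^{2}}-\tau_-'(\rho)\Big],
\]
and the boundary terms in $U'$ do cancel. But $\tau_-$ does not decrease to $\theta^\dagger$. The marginal type's indifference gives
\[
\tau_-=\hat\theta_--\rho^{-1}\Psi\big(f^{-1}(1/(\rho\hat\theta_-))\big)-\hat\theta_-F(-\hat\tau_-),\qquad \Psi(x)=x+\frac{1-F(x)}{f(x)}.
\]
Moreover, $\hat\theta_--\theta^\dagger$ is exponentially small, because $R(\hat\theta_-)=-\rho f(\hat\tau_-)$. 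So $\tau_-$ approaches $\theta^\dagger$ from below and \emph{increases} in $\rho$: the safety margin $\rho^{-1}f^{-1}(1/(\rho\theta))$ shrinks, so the standard rises.

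Take exponential noise, which satisfies the tail condition. Up to exponentially small terms, $\tau_-=\theta^\dagger-\rho^{-1}[1+\log(\rho\theta^\dagger/2)]$, so $\rho^2\tau_-'(\rho)=\log(\rho\theta^\dagger/2)$ while $f^{-1}(1/(\rho\theta))=\log(\rho\theta/2)$. The two pieces in the bracket are both of order $\rho^{-2}\log\rho$, enter with opposite signs, and cancel at leading order. Only $\rho^{-2}\log(\theta/\theta^\dagger)$ survives. In general the survivor is
\[
\rho^{-2}\big[f^{-1}(1/(\rho\theta))-f^{-1}(1/(\rho\theta^\dagger))\big]\approx\rho^{-2}b(1/\rho)\log(\theta/\theta^\dagger).
\]
So there is no ``cost saving.'' Your plan of showing that two positive terms dominate cannot work, because the sign of $U'$ rests on a residual of strictly lower order than either piece.

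To close the gap you must control $\tau_-'(\rho)$ to within $o(\rho^{-2}b(1/\rho))$. The paper does this in two steps. It substitutes the $\Psi$-representation of $\tau_-$ into $u$ and uses the tail condition to get $\Psi'\to1$ and slow variation of $b$ (\Cref{lm:b-linear,lm:f-inverse}). This yields $f^{-1}(\sigma/\theta)-f^{-1}(\sigma/\theta^\dagger)=b(\sigma)\log(\theta/\theta^\dagger)+o(b(\sigma))$. It then shows that $r'$, $\hat\theta_-'$, and $\frac{\mathrm{d}}{\mathrm{d}\sigma}F(-\hat\tau_-)$ are all $o(b(\sigma))$ (\Cref{lm:dthetahat-small}).

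Within your own envelope framework a shorter route is available. Differentiate the marginal type's indifference condition $u(\hat\theta_-(\rho),\rho)=p_0(\rho)$ along the equilibrium path, using $\partial_\theta u=e^*/\theta^2$. This gives
\[
\tau_-'(\rho)=\frac{f^{-1}(1/(\rho\hat\theta_-))}{\rho^2}+\frac{e^*(\hat\theta_-)}{\hat\theta_-}\,\hat\theta_-'(\rho)-\hat\theta_-\,p_0'(\rho).
\]
Substituting this into your expression makes the positive residual $\rho^{-2}[f^{-1}(1/(\rho\theta))-f^{-1}(1/(\rho\hat\theta_-))]/\theta$ explicit. The $\hat\theta_-'$ term then has a favorable sign. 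What remains is to show $|p_0'|=o(\rho^{-2}b(1/(\rho\overline\theta)))$, and it is this comparison, not one against $1/(\rho\theta)$, where the tail condition enters via \Cref{lm:tail-bounds}.
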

\begin{figure}[!htbp]
    \centering
    \begin{subfigure}{0.48\textwidth}
        \centering
        \includegraphics[width=\linewidth]{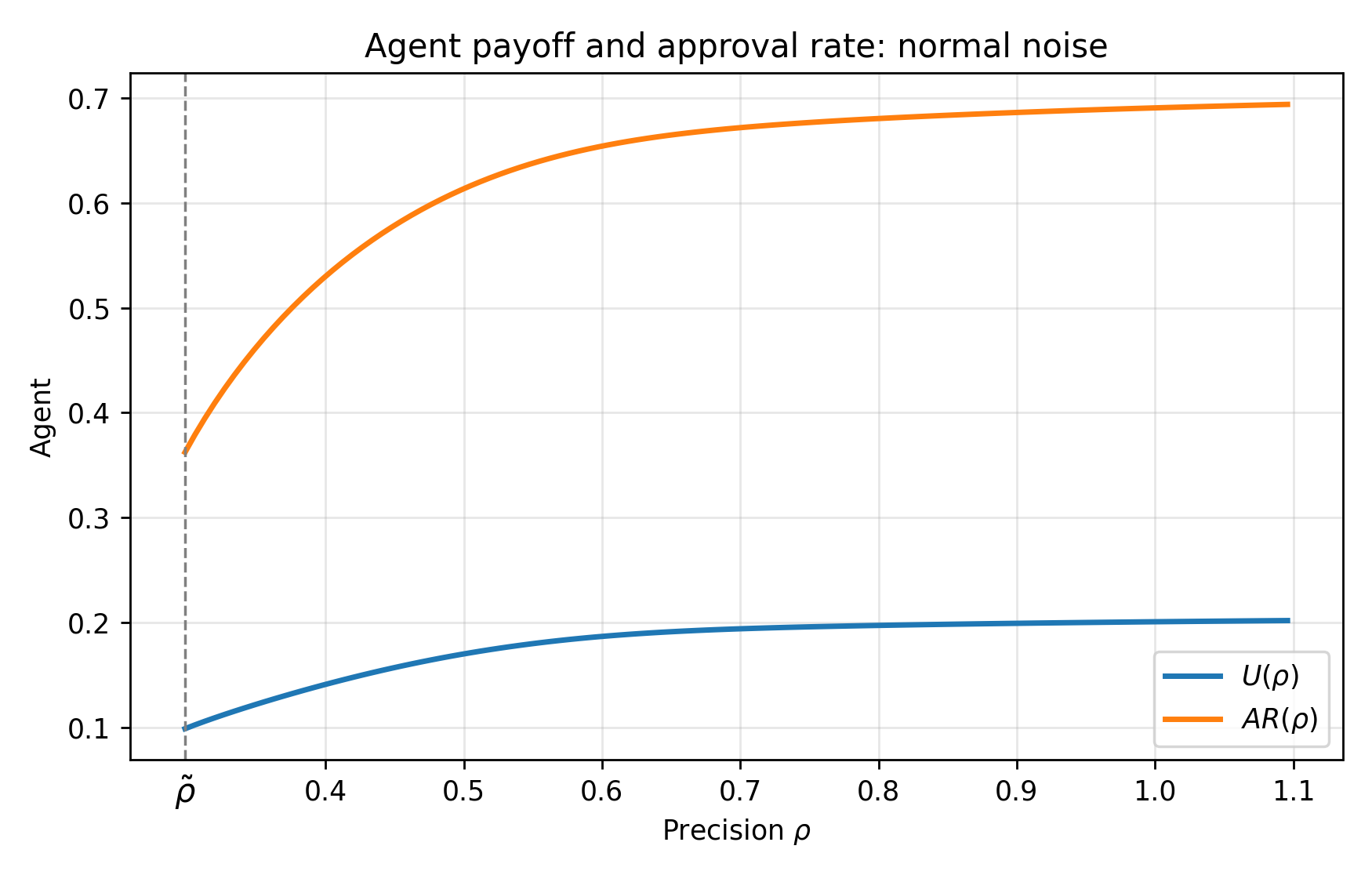}
        \caption{Normal noise.}
    \end{subfigure}
    \hfill
    \begin{subfigure}{0.48\textwidth}
        \centering
        \includegraphics[width=\linewidth]{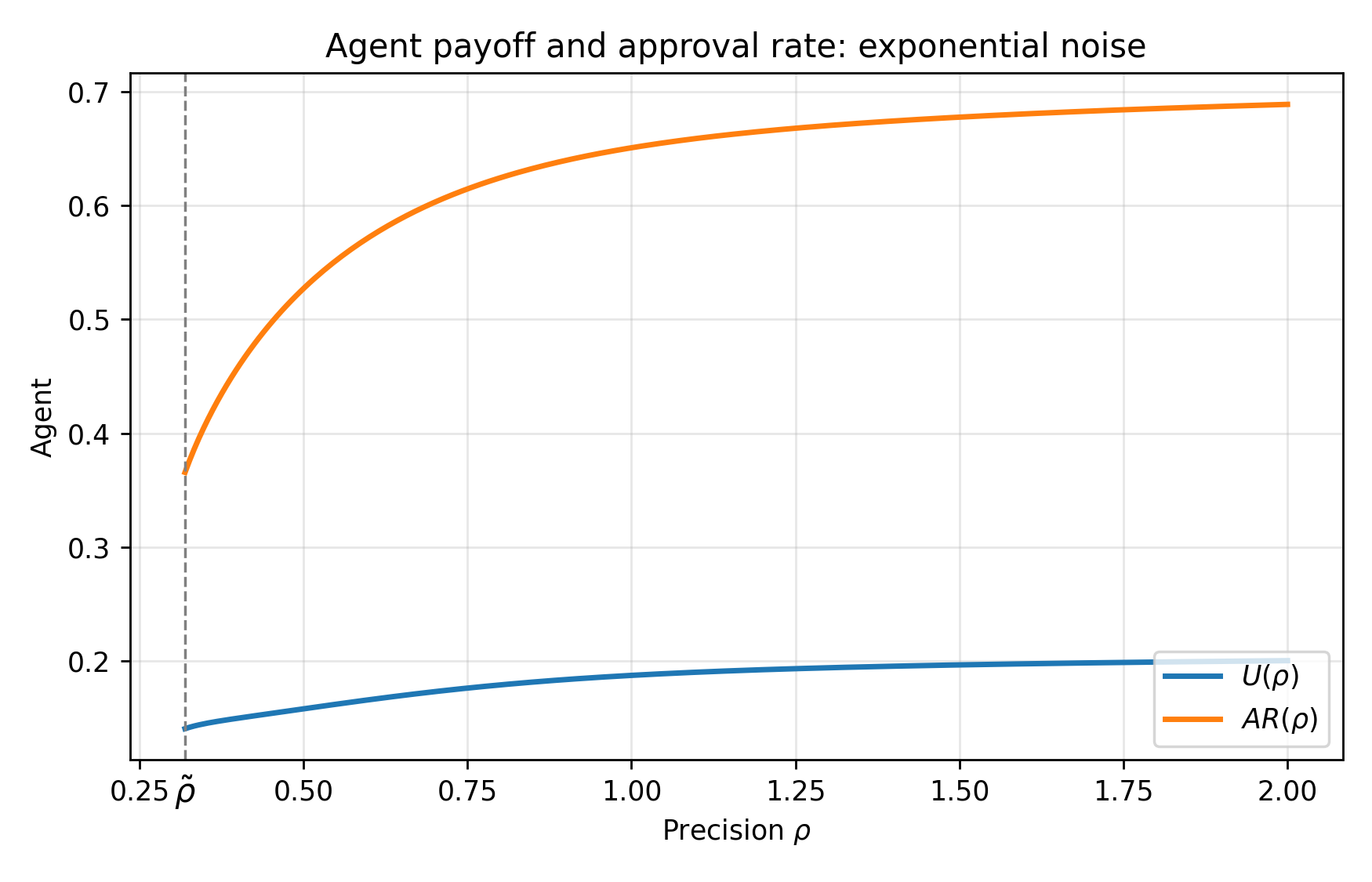}
        \caption{Exponential noise.}
    \end{subfigure}
    \caption{Agent payoff (blue) and approval rate (orange), where $\theta$ is taken to be uniformly distributed over $[5,15]$, with $v(\theta)=\theta-11$.}
    \label{fig:UAR-mono}
\end{figure}
Hence, in terms of both approval outcomes and payoffs (net of signaling costs), the agent on average benefits from higher precision, as illustrated in \Cref{fig:UAR-mono}. However, the effects are heterogeneous across types: low types face lower approval rates; high types face higher approval rates; and due to the strategic effect, some marginal types begin investing effort and get approved with higher probability. Higher precision may therefore exacerbate inequality across types. When precision is high, the gains to high types---via higher approval rates despite potentially higher standards and signaling costs---dominate the losses to low types, so average agent payoff increases in $\precision$.

Similar to \Cref{prop:non-monotone}, \Cref{prop:approval} relies on a high-precision asymptotic analysis ($\precision\to\infty$) that we develop to isolate the first-order effects of precision. 
The tail regularity condition $\lim_{z\to\infty}\frac{f(z)f''(z)}{[f'(z)]^2}=1$ ensures stable asymptotic behavior in the tail; in particular, it implies $\lim_{z\to\infty}[\frac{1-F(z)}{f(z)}]'=0$, meaning $\frac{1-F(z)}{f(z)}$ varies slowly in the tail.\footnote{The regularity condition is satisfied by many common distributions, e.g., exponential, generalized normal, and Weibull.}


\paragraph{Statistical Discrimination.}
The above results also have implications for statistical discrimination \`a~la \citet{phelps1972statistical}. As in the standard framework, agents from different groups (e.g., by gender or ethnicity) share the same distribution of productivity but differ in their access to signaling technologies. \Cref{prop:approval} implies that groups with noisier signaling technologies face lower approval rates and lower average payoffs.\footnote{This comparison remains valid even when the principal faces a binding quota across groups. In that case, it suffices to replace the principal's payoff from approval $v(\theta)$ with $v(\theta)-\lambda$, where $\lambda$ is the shadow cost of the quota constraint. The comparative statics of $AR$ and $U$ thus continue to hold as long as $\mathbb{E}[v(\theta)-\lambda]\leq 0$.} This constitutes group-level statistical discrimination against groups with less informative signaling technologies, consistent with classical statistical discrimination theory \citep{aigner1977statistical,chambers2021characterisation}.\footnote{Group-level statistical discrimination \citep[cf.][]{aigner1977statistical} arises when two groups with the same underlying productivity distribution but different signaling technologies receive different outcomes on average.}

However, \Cref{prop:non-monotone} reveals a reversal: the principal may actually favor groups with noisier technologies ex ante, precisely because their agents behave less strategically. If the principal solicits applications ex ante, she will prefer to target groups with noisier technologies. This provides a rationale for why institutions may wish to direct more favorable quotas or recruitment attention toward groups that are otherwise disfavored at the approval stage.

\begin{corollary}
    Suppose two groups share the same productivity distribution but differ in signaling precision, with $\precision_A>\precision_B>\bar\precision$. Then group~$B$ has a lower 
    approval rate and lower average payoff during screening. If, prior to screening, the principal has a single application slot and chooses which group to solicit, she strictly prefers group~$B$.
\end{corollary}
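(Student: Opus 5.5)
The corollary follows by combining \Cref{prop:approval} and \Cref{prop:non-monotone} with a monotonicity-on-an-interval argument. Throughout, both groups are assumed to play the Pareto-optimal semi-separating equilibrium $(\tau_-(\precision),\hat\theta_-(\precision))$, justified by \Cref{cor:pareto} once precision exceeds the relevant threshold. I first fix the threshold. Let $\bar\precision$ be the maximum of three numbers: the $\bar\precision$ of \Cref{prop:non-monotone}(b), the threshold beyond which $AR$ is increasing in \Cref{prop:approval}(a), and the threshold beyond which $U$ is increasing in \Cref{prop:approval}(b). Enlarging $\bar\precision$ if necessary, I also make it exceed the threshold in \Cref{cor:pareto}. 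This is legitimate because each cited result holds for all sufficiently large $\precision$. The payoff comparison requires the tail regularity condition of \Cref{prop:approval}(b), which I carry as a standing hypothesis, as the corollary implicitly does.

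\textbf{Screening stage.} On $(\bar\precision,\infty)$, $AR$ and $U$ are differentiable along the equilibrium path, since $\hat\theta_-$ and $\hat\tau_-$ are differentiable by the implicit function theorem applied to \Cref{eq:S-BR} and \Cref{eq:R-BR}, as used in \Cref{lemma:increasing-indifference}. Both functions have positive derivative there. By the mean value theorem, $\precision_A>\precision_B>\bar\precision$ then gives $AR(\precision_B)<AR(\precision_A)$ and $U(\precision_B)<U(\precision_A)$. Strictness needs the derivatives to be strictly positive; that is what the asymptotic arguments behind \Cref{prop:approval} deliver, since the leading term has a strict sign.

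\textbf{Solicitation stage.} With a single slot, soliciting group $j$ yields the principal exactly her ex ante equilibrium payoff $V(\precision_j)$ from screening one agent drawn from $G$ under precision $\precision_j$. By \Cref{prop:non-monotone}(b), $V'<0$ on $(\bar\precision,\infty)$, so $V(\precision_B)>V(\precision_A)$ and she strictly prefers $B$.

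\textbf{Main obstacle.} The only delicate step is making sure the thresholds are uniform and that the monotonicity is strict on the whole interval $(\bar\precision,\infty)$, not just asymptotically at isolated points. \Cref{prop:non-monotone}(b) is stated this way. For $AR$ and $U$, I would read ``increasing when $\precision$ is large'' as ``there exists a threshold beyond which the derivative is strictly positive,'' which is how the asymptotic sign argument is structured. If only weak monotonicity were available, the strict inequalities at the screening stage would need that sharpening.
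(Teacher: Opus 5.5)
Your proposal is correct and matches the paper, which gives no separate argument and obtains the corollary directly by combining Proposition~\ref{prop:non-monotone}(b) with Proposition~\ref{prop:approval}(a)--(b), applied to the Pareto-optimal semi-separating equilibrium. Your extra care is warranted on two points. The solicitation comparison holds with the paper's $\bar\precision$ as stated. The screening-stage comparisons, however, genuinely require enlarging $\bar\precision$ to the thresholds of Proposition~\ref{prop:approval}, whose proofs do deliver strictly signed derivatives on an interval. For average payoffs they also require the tail condition $\lim_{z\to\infty} f(z)f''(z)/[f'(z)]^2=1$, which the statement leaves implicit.
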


This reversal carries a broader methodological lesson: a disparity at a single stage need not constitute evidence of discrimination nor reflect a more global pattern, as outcomes depend on the full trajectory of interactions between groups and institutions. See \citet{bohren2025systemic} for a related discussion on the measurement of discrimination.

\subsection{Screening Efficiency and Information Transmission}
\label{sect:accuracy}
\paragraph{Screening Efficiency.}
The trade-off behind the non-monotonicity of $V(\precision)$ can be captured by the impact of the precision on screening efficiency, which is decomposed into Type I and Type II errors:
\begin{align*}
\alpha(\precision):=&\text{Pr}\Big(\text{rejecting types above $\tilde\theta$}\Big)=\int_{\tilde\theta}^{\overline{\theta}}g(\theta)F(-f^{-1}(1/(\precision\theta)))\mathrm{d}\theta,\\
\beta(\precision):=&\text{Pr}\Big(\text{approving types below $\tilde\theta$}\Big)\\
=&G(\hat\theta_-(\precision))[1-F(\hat\tau_-(\precision))]+\int_{\hat\theta_-(\precision)}^{\tilde\theta}g(\theta)[1-F(-f^{-1}(1/(\precision\theta)))]\mathrm{d}\theta.
\end{align*}
\begin{proposition}
\label{prop:error}
    The Type I error $\alpha(\precision)$ is decreasing in $\precision$, while the Type II error $\beta(\precision)$ is increasing in $\precision$ when $\precision$ is large. 
\end{proposition}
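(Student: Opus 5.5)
The two parts of \Cref{prop:error} are handled separately. The Type I error $\alpha(\precision)$ is treated pointwise, and the Type II error $\beta(\precision)$ by a high-precision asymptotic comparison that mirrors the one behind \Cref{prop:non-monotone}(b).

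\textbf{Type I error.} Since $\hat\theta_-(\precision)\leq\tilde\theta$ for all $\precision\ge\tilde\precision$ by \Cref{prop:eqm}, every good type exerts positive effort. Each such type is therefore rejected with probability $F(-f^{-1}(1/(\precision\theta)))$, and this expression does not involve $\hat\tau_-$ or $\hat\theta_-$. The argument $1/(\precision\theta)$ decreases in $\precision$, and $f^{-1}$ is decreasing, so $f^{-1}(1/(\precision\theta))$ increases. Hence $F(-f^{-1}(\cdot))$ decreases in $\precision$. This is exactly the derivative computation already displayed in the text, with sign $-1/(\precision^3\theta^2 f'(f^{-1}(\cdot)))>0$ for the approval probability. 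Integrating against $g$ over $[\tilde\theta,\overline\theta]$ gives that $\alpha$ is decreasing in $\precision$ for all $\precision$.

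\textbf{Type II error.} Differentiating $\beta$ with Leibniz's rule gives three terms:
\[
\beta'(\precision)=-G(\hat\theta_-)f(\hat\tau_-)\hat\tau_-'\;-\;g(\hat\theta_-)\big[F(\hat\tau_-)-F(-f^{-1}(1/(\precision\hat\theta_-)))\big]\hat\theta_-'\;+\;\int_{\hat\theta_-}^{\tilde\theta}g(\theta)\frac{-1}{\precision^3\theta^2f'(f^{-1}(1/(\precision\theta)))}\mathrm d\theta .
\]
Here we use $1-F(\hat\tau)-[1-F(-f^{-1})]=F(-f^{-1})-F(\hat\tau)$ at the lower limit. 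The third term is positive. The middle term is positive because $\hat\theta_-'<0$ (\Cref{lemma:increasing-indifference}) and the bracket is positive: the indifference condition \eqref{eq:S-BR} implies that the threshold type has strictly higher approval probability with positive effort, so $F(\hat\tau_-)>F(-f^{-1})$. This term is the strategic effect. The first term is negative, since $\hat\tau_-'>0$; it is direct effect I.

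\textbf{Main obstacle.} The hard part is showing that the strategic term dominates the first term as $\precision\to\infty$. The plan is to reuse the asymptotic orders from the proof of \Cref{prop:non-monotone}: \Cref{lm:dtheta/dsigma-order} gives the order of $\mathrm d\hat\theta_-/\mathrm d\precision$, and Claims \ref{claim:a1} and \ref{claim:A2} show that the direct-effect terms vanish faster. As $\precision\to\infty$, the bracket in the middle term tends to $1$, because $\hat\tau_-\to\infty$ and $f^{-1}(1/(\precision\hat\theta_-))\to\infty$. Also $g(\hat\theta_-)\to g(\theta^\dagger)>0$ by \Cref{prop:limit}. So the middle term is of order $|\hat\theta_-'|$. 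The first term is $G(\hat\theta_-)f(\hat\tau_-)\hat\tau_-'$, which is essentially the derivative of the zero-effort approval probability. This is the same object, weighted by $\int_{\underline\theta}^{\hat\theta_-}vg$ instead of $G$, that Claim \ref{claim:a1} shows is $o(|\hat\theta_-'|)$. The weights differ only by bounded positive constants, so the same order comparison applies.

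To make this concrete, I would use \eqref{eq:R-BR} to write $f(\hat\tau_-)=\frac{1}{\precision}\cdot\frac{\int_{\hat\theta_-}^{\overline\theta}vg/\theta}{-\int_{\underline\theta}^{\hat\theta_-}vg}$. The numerator vanishes like $(\hat\theta_--\theta^\dagger)$, so $f(\hat\tau_-)\hat\tau_-'$ can be compared directly with $\hat\theta_-'$ through the differentiated equilibrium system. Combining the three terms then yields $\beta'(\precision)>0$ for all $\precision$ above some threshold.
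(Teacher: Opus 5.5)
Your proposal is correct and follows essentially the same route as the paper's proof. Both use the same three-term Leibniz decomposition of $\beta'$ with signs taken from \Cref{lemma:increasing-indifference}, and both rely on the key limit from the proof of Claim~\ref{claim:a1}, namely $f(\hat\tau_-)\,\hat\tau_-'/\hat\theta_-'\to0$, which makes direct effect I negligible against the strategic term; the remaining term already has the favorable sign, so your appeal to Claim~\ref{claim:A2} is unnecessary here, and the $\alpha$ part is the same pointwise monotonicity the paper calls straightforward.
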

Precision always mitigates the Type I error. However, when precision is already very high, the strategic effect of precision is of the first order: a higher precision incentivizes the low-type agent to exert effort, leading to a greater Type II error and making the principal worse off. 

\paragraph{Information Transmission and Accuracy.}
As seen before, 
the strategic effect plays an important role behind the downside of precision. Suppose that instead the agent is non-strategic and commits to an increasing strategy $e(\theta)$ contingent on his types. Then according to \citet{lehmann1988comparing}, the higher the precision, the more accurate the signal, the higher the principal's expected payoff. 
This stands in contrast to our result and suggests that the agent's strategic response to precision must distort the accuracy of the information faced by the principal.

To investigate transmitted information, we adopt the quantile-function approach introduced by \citet{lehmann1988comparing}. Let $e^*(\theta;\precision)$ denote the agent's effort strategy in equilibrium $(\hat\tau_-,\hat\theta_-)$. Define $H_{\precision}(s|\theta):=F_\precision(s-e^*(\theta;\precision))=F(\precision(s-e^*(\theta;\precision)))$ as the equilibrium signal distribution conditional on the agent being type $\theta$. 

The defined $H_{\precision}(s|\theta)$ is the information structure faced by the principal that is endogenous in equilibrium. In general, these endogenous information structures with different $\precision$'s are not comparable by either the Blackwell order \citep{blackwell1951comparison,blackwell1953equivalent} or the accuracy order. Instead, we focus on information structures restricted to pairs of types, i.e., $\left(H_{\precision}(s|\theta),H_{\precision}(s|\theta')\right)_{s\in\mathbb{R}}$ for any $\theta,\theta'$. 

Fix $\theta$ and $\theta'$ such that $\theta<\theta'$. For information structures with binary types $\theta,\theta'$, $H$ is more accurate \citep{lehmann1988comparing} than $\tilde{H}$ if\footnote{With binary state, the accuracy order is equivalent to the Blackwell order, but much easier to work with when information structures satisfy MLRP.} 
\begin{align*}
H^{-1}(\tilde{H}(s|\theta)|\theta)&\leq H^{-1}(\tilde{H}(s|\theta')|\theta'),\quad\text{for all }s\in\mathbb{R},\\
\text{or equivalently,}\quad H(H^{-1}(p|\theta)|\theta')&\leq \tilde{H}(\tilde{H}^{-1}(p|\theta)|\theta'),\quad\text{for all }p\in[0,1].
\end{align*}

\begin{proposition}
\label{prop:accuracy}
    For $\precision>\precision'>\tilde{\precision}$, 
    \begin{itemize}
        \item[(a)] For $\theta<\hat\theta_-(\precision)<\tilde\theta$ and $\theta'>\tilde\theta$, the transmitted information about $\theta,\theta'$ in equilibrium is more accurate under $\precision$ than under $\precision'$; 
        \item[(b)] For $\theta\in(\hat\theta_-(\precision),\tilde\theta)$ and $\theta'>\tilde\theta$, the transmitted information about $\theta,\theta'$ in equilibrium is less accurate under $\precision$ than under $\precision'$.
    \end{itemize}
\end{proposition}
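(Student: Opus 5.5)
The plan is to reduce every pairwise accuracy comparison to one scalar: the precision-scaled gap in equilibrium effort. In the semi-separating equilibrium, $H_\precision(s|\theta)=F(\precision(s-e^*(\theta;\precision)))$. Hence $H_\precision^{-1}(p|\theta)=e^*(\theta;\precision)+F^{-1}(p)/\precision$, and
\[
H_\precision\big(H_\precision^{-1}(p|\theta)\,\big|\,\theta'\big)=F\big(F^{-1}(p)-D(\precision)\big),\qquad D(\precision):=\precision\big[e^*(\theta';\precision)-e^*(\theta;\precision)\big].
\]
Since $F$ is increasing, the second form of the accuracy definition holds exactly when the scaled gap is larger. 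So the structure under $\precision$ is more (less) accurate than under $\precision'$ iff $D(\precision)\geq D(\precision')$ (resp. $\leq$). Throughout I will use the explicit strategies: $e^*=0$ for types below $\hat\theta_-$, and $\precision e^*(\theta;\precision)=\hat\tau_-(\precision)+f^{-1}(1/(\precision\theta))$ for types above. I will also use \Cref{lemma:increasing-indifference}: $\hat\theta_-$ is decreasing and $\hat\tau_-$ is increasing.

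For part (a), note that $\theta<\hat\theta_-(\precision)<\hat\theta_-(\precision')$, so $\theta$ exerts zero effort under both precisions. Type $\theta'>\tilde\theta$ exerts effort under both. Therefore $D=\hat\tau_-(\precision)+f^{-1}(1/(\precision\theta'))$. The first term is increasing in $\precision$ by \Cref{lemma:increasing-indifference}. The second is increasing because $f^{-1}$ is decreasing and $1/(\precision\theta')$ falls in $\precision$. Hence $D(\precision)>D(\precision')$, which gives (a).

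For part (b), under $\precision$ both types exert effort, so $\hat\tau_-$ cancels and
\[
D(\precision)=\varphi(\precision):=f^{-1}\!\big(1/(\precision\theta')\big)-f^{-1}\!\big(1/(\precision\theta)\big).
\]
The key step is that $\varphi$ is decreasing in $\precision$. Write $L(z):=-\log f(z)$ for $z\geq0$. By log-concavity and strictly diminishing bad luck (with symmetry), $L$ is convex and strictly increasing, so $f^{-1}(y)=L^{-1}(-\log y)$ with $L^{-1}$ concave and increasing. Then $\varphi=L^{-1}(\log\precision+\log\theta')-L^{-1}(\log\precision+\log\theta)$. This is an increment of a concave function over an interval of fixed length $\log\theta'-\log\theta>0$, so it is nonincreasing as $\log\precision$ rises.

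I then split on $\theta$'s behaviour under $\precision'$. If $\theta\geq\hat\theta_-(\precision')$, both types exert effort under $\precision'$ as well, and $D(\precision)=\varphi(\precision)\leq\varphi(\precision')=D(\precision')$. If instead $\theta<\hat\theta_-(\precision')$, then $D(\precision')=\hat\tau_-(\precision')+f^{-1}(1/(\precision'\theta'))$, and I chain
\[
D(\precision)=\varphi(\precision)\leq\varphi(\precision')\leq f^{-1}\!\big(1/(\precision'\theta')\big)<\hat\tau_-(\precision')+f^{-1}\!\big(1/(\precision'\theta')\big)=D(\precision').
\]
This uses $f^{-1}\geq0$ and $\hat\tau_-(\precision')=\precision'\tau_-(\precision')>0$ from \Cref{prop:eqm}.

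I expect the main obstacle to be domain and strictness bookkeeping rather than the core inequality. First, $\varphi(\precision')$ must be well defined, i.e. $1/(\precision'\theta)\leq f(0)$. For $\precision'>\tilde\precision$ I expect this from the definition of $\tilde\precision$ in \Cref{eq:tildesigma}, which should make $\hat e$ well defined for relevant types. If it fails, I would bypass $\varphi(\precision')$ and bound $\varphi(\precision)\leq f^{-1}(1/(\precision\theta'))\leq$ its value along the path, replacing $\varphi(\precision')$ by $\varphi(\precision'')$ for $\precision''\in(\precision',\precision)$ where it is defined. Second, if strict ordering is required in the all-effort case, I would need strict concavity of $L^{-1}$, i.e. strict log-concavity. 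Otherwise I will state that case as a weak inequality: the ordering is not reversed.
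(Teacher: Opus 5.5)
Your reduction to the scaled gap $D(\rho)=\rho[e^*(\theta';\rho)-e^*(\theta;\rho)]$ and your three-way case split are exactly the paper's argument. Part (a) is correct. So is the subcase $\theta\ge\hat\theta_-(\rho')$ of (b): an interior indifference with $\hat\tau>0$ forces $1/(\rho'\hat\theta_-(\rho'))<f(0)$, so $\rho'\theta f(0)>1$, both arguments of $L^{-1}$ lie in its domain, and your concavity step is legitimate.

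\textbf{The gap.} It is in the mixed subcase $\hat\theta_-(\rho)<\theta<\hat\theta_-(\rho')$ with $\rho'\theta f(0)<1$, and neither of your fixes closes it.
\begin{itemize}
\item The condition $\rho'>\tilde\rho$ only makes effort interior for types at or above $\hat\theta_-(\rho')$. The types here lie below it and can be arbitrarily close to $\theta^\dagger$. Moreover, $1/(\rho' f(0))>\theta^\dagger$ for $\rho'$ near $\tilde\rho$ whenever $\tilde\rho\,\theta^\dagger f(0)<1$.
\item Your fallback through $\rho''$ with $\rho''\theta f(0)=1$ only yields $D(\rho)\le f^{-1}(1/(\rho''\theta'))$. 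This exceeds $f^{-1}(1/(\rho'\theta'))$, so $\hat\tau_-(\rho')$ would have to absorb the difference, and nothing forces that.
\item Under the convention $f^{-1}=0$ above $f(0)$, the extended $L^{-1}$ has a convex kink at $L(0)$, so $\varphi(\rho)\le\varphi(\rho')$ simply fails.
\end{itemize}

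\textbf{This is not only a proof gap.} The paper's own third case asserts the same inequality without argument, and it can be false.

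Take $f(z)=\tfrac12 e^{-|z|}$, so $f^{-1}(y)=\log(1/(2y))$. In this subcase $D(\rho)=\log(\theta'/\theta)$. Writing $q'=f(\hat\tau_-(\rho'))$, we get $D(\rho')=\hat\tau_-(\rho')+\log(\rho'\theta'/2)=\log\big(\rho'\theta'/(4q')\big)$. Hence (b) fails for the pair iff $\theta<4q'/\rho'$.

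Let $\kappa(\hat\theta):=-\hat\theta R(\hat\theta)$, $p'=1/(\rho'\hat\theta_-(\rho'))$ and $\kappa'=\kappa(\hat\theta_-(\rho'))$. The equilibrium conditions give $q'=\kappa'p'$ and $1-p'-q'=p'\log\big(1/(4p'q')\big)$. For $\kappa'=0.95$ this yields $p'\approx0.483$, so $4q'/\rho'=4\kappa'p'^2\hat\theta_-(\rho')\approx0.885\,\hat\theta_-(\rho')$.

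Now take the environment $\tilde\theta=1$ and $\mathbb{E}[v]=0$, with good types on $[1,1.01]$, bad types on $(0.5,1)$ with $v$ only slightly negative, and strongly negative $v$ only on $[0.3,0.5]$. Then $\kappa(\hat\theta)\approx\hat\theta$ on $(0.5,1)$, $\theta^\dagger<0.5$, and $\tilde\rho\approx2$. At $\rho'\approx2.17$ one gets $\hat\theta_-(\rho')\approx0.955$, $\hat\tau_-(\rho')\approx0.087$ and $4q'/\rho'\approx0.845$.

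Choose $\rho$ large enough that $\hat\theta_-(\rho)<0.6$, and take $\theta=0.6$ with $\theta'$ just above $1$. Then $D(\rho)\approx0.51>D(\rho')\approx0.17$, so the higher precision is strictly \emph{more} accurate for this pair, contradicting (b). Here $\rho'\theta f(0)\approx0.65<1$, which is exactly the case your argument does not cover.

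\textbf{Repair.} Part (b) needs an extra hypothesis, for example restricting it to $\rho'\theta f(0)\ge1$, or imposing the primitive condition $\tilde\rho\,\theta^\dagger f(0)\ge1$. The latter guarantees $1/(\rho'\theta)<f(0)$ for every $\theta>\hat\theta_-(\rho)$. Under either hypothesis, your chain $\varphi(\rho)\le\varphi(\rho')\le f^{-1}(1/(\rho'\theta'))<D(\rho')$ is a complete proof, and it is more careful than the paper's.
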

Part (a) is consistent with the plain intuition that a more precise screening process should generate more accurate information about the agent; this is true when the principal focuses on screening sufficiently bad types ($\theta<\hat\theta_-(\precision)$) from good types ($\theta'>\tilde\theta$). In contrast, according to part (b), when precision is increased, it gets more difficult to distinguish moderately bad types ($\theta\in(\hat\theta_-(\precision),\tilde\theta)$) and good types ($\theta'>\tilde\theta$) because a higher precision incentivizes moderate types to invest more effort.


\section{Discussion}
\label{sect:discussion}
This section provides further discussions on the mechanisms behind our results. \Cref{sect:absorbing,sect:principal-commit} reveal that the pitfall of precision is not only due to the agent's strategic signaling behavior, but also driven by the principal's lack of commitment.

In particular, \Cref{sect:absorbing} shows that if the principal can choose the precision of the screening technology but has no commitment, she will exactly fall into the pitfall of precision. On the other hand, \Cref{sect:principal-commit} shows commitment power can help the principal overcome the pitfall.

Finally, 
\Cref{sect:simulation} shows, with numerical simulations, that the pitfall extends beyond linear costs. What matters is the existence of an extensive margin in signaling, which can arise under general convex costs or binary effort.


\subsection{The ``Absorbing'' Pitfall}
\label{sect:absorbing}
Suppose that the principal can choose among different screening technologies with different precision levels in $[\tilde{\precision},\precision_0]$, where $\precision_0\gg\bar{\precision}$ and $\bar{\precision}$ is the precision level identified in \Cref{prop:non-monotone}(b).

If the principal can publicly commit to a precision level before the agent exerts effort and sends signals, then she will choose the ex ante optimal precision
\[
\hat{\precision}:=\arg\max_{\precision\in[\tilde{\precision},\precision_0]}V(\precision).
\]
By \Cref{prop:non-monotone}, we know  $\hat{\precision}\in[\tilde{\precision},\bar{\precision}]$.

However, if the principal has no such commitment power and she privately chooses the precision level, then the agent must form an expectation about what precision the principal will choose. It is easy to establish that the principal will choose the technology with the highest precision $\precision_0$ and anticipating this, players will play some equilibrium in the baseline model associated with $\precision_0$. As a result, by \Cref{prop:eqm} and \Cref{cor:pareto}, the principal can get at most $V(\precision_0)<V(\bar{\precision})\leq V(\hat{\precision})$ and thus fall into the pitfall identified in \Cref{prop:non-monotone}.

\begin{proposition}
\label{prop:absorbing}
    If the principal privately chooses a precision level in $[\tilde{\precision},\precision_0]$, then it is dominant to choose $\precision=\precision_0$ and the equilibrium payoff is at most $V(\precision_0)<V(\hat{\precision})$.
\end{proposition}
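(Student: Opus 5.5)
The argument has two parts. First, we show that choosing $\precision_0$ is optimal for the principal at the signal stage, whatever the agent does. Second, we bound her equilibrium payoff using the results for the baseline game at fixed precision.

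\emph{Step 1 (dominance).} Because the choice is private, the agent's effort strategy $e(\cdot)$ cannot depend on the realized precision. Fix any agent strategy $e(\cdot)$ and any belief the agent holds about the chosen precision. The principal's decision problem is then a standard statistical decision problem. The state is $\theta$, distributed according to $G$, and she observes $s=e(\theta)+\epsilon/\precision$. Her payoff from the optimal approval rule is
\[
W(\precision;e):=\mathbb{E}\Big[\max\big\{\mathbb{E}[v(\theta)\mid s],0\big\}\Big],
\]
and she does not bear the agent's effort cost.

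Blackwell dominance holds directly, without any monotonicity of $e$. The signal $e(\theta)+\epsilon/\precision_0$ can be garbled into a signal with the law of $e(\theta)+\epsilon/\precision$ for $\precision<\precision_0$ whenever $\epsilon/\precision$ has the law of $\epsilon/\precision_0+\eta$ for some independent $\eta$. This decomposability is automatic for normal noise. For general log-concave symmetric $f$ it is not, so there I would fall back on \citet{lehmann1988comparing}. In equilibrium the agent uses a monotone (cutoff) strategy by the logic of Step 1 in the characterization: the best response to any expected cutoff $\tau$ is monotone. Under MLRP, \citet{lehmann1988comparing} then gives that $h_{\precision_0}$ is more accurate than $h_\precision$, so $W(\precision_0;e)\ge W(\precision;e)$ for every monotone $e$ and every threshold-type decision problem. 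Hence $\precision_0$ weakly dominates every other choice. Strictness, where we need it, follows because with $e$ non-constant the posterior under $\precision_0$ is strictly more informative near the cutoff.

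\emph{Step 2 (equilibrium reduction).} In any equilibrium, the agent correctly anticipates the choice $\precision_0$ by Step 1. The continuation play is therefore an equilibrium of the baseline game with precision $\precision_0$. By \Cref{prop:eqm} and \Cref{cor:pareto}, which applies because $\precision_0\gg\bar\precision$ is large, the principal's payoff is at most the one from the semi-separating equilibrium, namely $V(\precision_0)$.

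\emph{Step 3 (strict comparison).} By \Cref{prop:non-monotone}(b), $V$ is strictly decreasing on $(\bar\precision,\infty)$, so $V(\precision_0)<V(\bar\precision)$. Since $\hat\precision$ maximizes $V$ over $[\tilde\precision,\precision_0]$ and $\bar\precision$ lies in that interval, $V(\bar\precision)\le V(\hat\precision)$.

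The main obstacle is Step 1 when $f$ is not infinitely divisible. There I would restrict attention to monotone agent strategies, which is justified since equilibrium strategies are cutoff strategies. I would then invoke the accuracy order for MLRP location families, as the paper already asserts that $h_\precision$ is ranked by accuracy in $\precision$.
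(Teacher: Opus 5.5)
Your proposal is correct and follows essentially the paper's route: single crossing makes equilibrium effort monotone, so the induced family $f_\precision(s-e(\theta))$ satisfies MLRP and is ranked by accuracy. A Lehmann-type comparison (the paper cites Theorem 1 of \citet{persico2000information}) then makes $\precision_0$ weakly optimal, and \Cref{cor:pareto} with \Cref{prop:non-monotone}(b) gives $V(\precision_0)<V(\bar\precision)\le V(\hat\precision)$. The Gaussian garbling aside is not needed. Your strictness remark, which rules out equilibria where the principal is indifferent and picks a lower precision, goes slightly beyond what the paper spells out but remains heuristic.
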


Therefore, if the principal can costlessly choose among different screening technologies with different precision levels, but she cannot publicly commit to the precision choice, then falling into the pitfall of precision is unavoidable. 

Indeed, the principal's lack of commitment is an important source of the pitfall of precision. Having some commitment power over the precision choice (e.g. having costs for precision, as a way to gain commitment) can surely help get the principal out of this dilemma. Interestingly, as will be discussed later in \Cref{sect:principal-commit}, if the principal can publicly commit to an approval standard before the agent exerts effort, the pitfall of precision can also be overcome.

\subsection{Strategic Principal and Low Standards}
\label{sect:principal-commit}
Recall that the strategic effect of precision is negative because the marginally incentivized types are bad types as $\hat\theta_-(\precision)<\tilde\theta$. The agent optimally adopts this threshold $\hat\theta_-(\precision)$ for they in equilibrium anticipate that the principal will use the approval standard $\hat\tau_-(\precision)$. 

Although $\hat\tau_-(\precision)$ is a best response to $\hat\theta_-(\precision)$, this equilibrium standard is too low from the ex ante perspective. Should the principal be able to publicly commit to an approval standard, she could set a high standard to deter all bad types from investing effort. As such, a higher precision can only incentivize more good types to exert effort, so the pitfall of precision no longer exists. 

\paragraph{Committing to An Approval Standard.}
When the principal can publicly commit to an approval standard $\hat\tau$, her problem is as follows:\footnote{Since our focus is on the effect of precision, for simplicity, we restrict the principal to choosing an approval standard. In principle, she could adopt more flexible, potentially randomized, approval rules. Note that the principal would still benefit from higher precision: she can always inject additional noise to replicate any optimal approval rule with less precise signals.}
\[
\overline{V}(\precision):=\max_{\hat\tau}  \int_{\underline\theta}^{\widehat\Theta(\hat\tau;\precision)} v(\theta)g(\theta)\mathrm{d}\theta\cdot[1-F(\hat\tau)]+\int_{\widehat\Theta(\hat\tau;\precision)}^{\overline\theta} v(\theta)g(\theta)[1-F(-f^{-1}(1/(\precision\theta)))]\mathrm{d}\theta.
\] 
Let $\hat\tau^*(\precision)$ be the optimal solution to this problem, hence the threshold type is $\hat\theta^*(\precision):=\widehat\Theta(\hat\tau^*(\precision);\precision)$. We have the following results.
\begin{proposition}
\label{prop:commit}
    With commitment, the principal optimally commits to a higher approval standard, $\hat\tau^*(\precision)>\hat\tau_-(\precision)$, incentivizes only good types to exert effort, $\hat\theta^*(\precision)>\tilde\theta$, and benefits from commitment, $\overline{V}(\precision)\geq V(\precision)$. Moreover, $\overline{V}(\precision)$ is increasing in $\precision$.
\end{proposition}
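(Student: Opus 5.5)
The plan is to treat the commitment problem as an unconstrained maximization over $\hat\tau$ of
\[
W(\hat\tau;\precision):=\int_{\underline\theta}^{\widehat\Theta(\hat\tau;\precision)} v g\,\mathrm{d}\theta\,[1-F(\hat\tau)]+\int_{\widehat\Theta(\hat\tau;\precision)}^{\overline\theta} v g\,[1-F(-f^{-1}(1/(\precision\theta)))]\mathrm{d}\theta,
\]
and to compare its first-order condition with \Cref{eq:R-BR}. The claim $\overline V(\precision)\geq V(\precision)$ is immediate: $\hat\tau_-(\precision)$ is feasible under commitment, and by construction $\widehat\Theta(\hat\tau_-;\precision)=\hat\theta_-(\precision)$, so $W(\hat\tau_-;\precision)=V(\precision)$.

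\textbf{Derivative of $W$.} Differentiating gives
\[
\partial_{\hat\tau}W=-f(\hat\tau)\int_{\underline\theta}^{\hat\theta}vg\,\mathrm{d}\theta+v(\hat\theta)g(\hat\theta)\big[F(-f^{-1}(1/(\precision\hat\theta)))-F(\hat\tau)\big]\cdot\partial_{\hat\tau}\widehat\Theta,
\]
with $\hat\theta=\widehat\Theta(\hat\tau)$. The first term is the principal's no-commitment marginal trade-off. In precision-adjusted units, the standard's effect on effort-exerting types' approval drops out because of the agent's envelope condition. Indeed, the condition $f_\precision(\tau-\hat e)=1/\theta$ pins down the approval probability independently of $\tau$, so lowering the standard only shifts effort. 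The principal's no-commitment condition \Cref{eq:R-BR}, however, treats effort as fixed. I therefore expect \Cref{eq:R-BR} to be equivalent to $\partial_{\hat\tau}W$ minus the strategic term $v(\hat\theta)g(\hat\theta)[\cdots]\partial\widehat\Theta$ equalling zero.

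This step needs care. The effort-exerting types' approval probability in $W$ is independent of $\hat\tau$, yet \Cref{eq:R-BR} includes the term $\int_{\hat\theta}v g/\theta$. So I will redo the bookkeeping with the agent's effort held fixed versus re-optimized, and identify $\partial_{\hat\tau}W=(\text{P-BR expression})\times(\text{positive factor})+\text{strategic term}$.

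\textbf{The strategic term and $\hat\tau^*>\hat\tau_-$.} In the strategic term, $\partial\widehat\Theta>0$ by \Cref{lemma:Theta-increasing}. The bracket is negative, since the marginal type's approval probability with effort exceeds its approval probability at zero effort. Hence the term has the sign of $-v(\hat\theta)$: it is positive when $\hat\theta<\tilde\theta$ and nonpositive when $\hat\theta\geq\tilde\theta$. At $\hat\tau_-$ the P-BR part vanishes, so $\partial_{\hat\tau}W>0$ there, and the optimum must satisfy $\hat\tau^*>\hat\tau_-$.

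\textbf{Showing $\hat\theta^*>\tilde\theta$.} For any $\hat\tau$ with $\widehat\Theta(\hat\tau)\leq\tilde\theta$, I would argue that raising $\hat\tau$ until $\widehat\Theta=\tilde\theta$ weakly improves $W$. For the integrand, I would show along this range that $\partial_{\hat\tau}W>0$. The strategic part is positive. The P-BR part is positive whenever $\hat\tau$ is above $T$'s equilibrium value: by \Cref{lemma:T-Ushaped} and monotonicity of $\widehat\Theta$, we are on the side where the principal would want a higher standard, as $\hat\theta<\tilde\theta$ lies on the decreasing part of $T$. At $\hat\theta=\tilde\theta$ the strategic term vanishes, while the P-BR part is still strictly positive, since $T$ is at its minimum and the current standard is below it. So the optimum lies strictly beyond $\tilde\theta$. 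Handling the sign comparison between $\hat\tau$ and $T(\widehat\Theta(\hat\tau))$ is the main obstacle. I would first try to use the fixed-point uniqueness from Step 3 of the equilibrium construction: $\widehat\Theta\circ T$ has a unique fixed point left of $\tilde\theta$, which gives the crossing order.

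\textbf{Monotonicity of $\overline V$.} I will use an envelope argument together with the precision-adjusted parametrization. Fix the optimal $\hat\tau^*(\precision)$ at $\precision$ and consider $\precision'>\precision$ with the same $\hat\tau$. The zero-effort approval probability $1-F(\hat\tau)$ is unchanged, and approval for effort-exerting types, $1-F(-f^{-1}(1/(\precision\theta)))$, rises for every type. The threshold $\widehat\Theta(\hat\tau;\precision')$ decreases, since higher precision raises the value of signaling, as in \Cref{lemma:increasing-indifference}. If it stays above $\tilde\theta$, all newly incentivized types are good, so each change weakly raises $W$. If it falls below $\tilde\theta$, I instead raise $\hat\tau$ so that the threshold returns to $\hat\theta^*(\precision)>\tilde\theta$. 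This is possible by continuity and monotonicity of $\widehat\Theta$. With the threshold fixed, the effort types gain approval and the zero-effort types lose approval, since $\hat\tau$ is now higher and they are bad on aggregate given $\mathbb{E}[v]\leq 0$ and the threshold exceeds $\tilde\theta$. Hence $\overline V(\precision')\geq W(\cdot;\precision')\geq\overline V(\precision)$.
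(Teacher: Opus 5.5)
\textbf{Where the proof breaks.} Your claim $\overline V\ge V$ is fine. The argument for $\hat\theta^*>\tilde\theta$ and $\hat\tau^*>\hat\tau_-$ rests on a false identification. The first term of $\partial_{\hat\tau}W$, namely $-f(\hat\tau)\int_{\underline\theta}^{\hat\theta}vg\,\mathrm{d}\theta$, is not a rescaled (P-BR) expression. So the identity $\partial_{\hat\tau}W=(\text{P-BR})\times(\text{positive})+\text{strategic term}$ that you set out to prove is false.

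Write $I:=\int_{\underline\theta}^{\hat\theta}vg\,\mathrm{d}\theta$ and $J:=\int_{\hat\theta}^{\overline\theta}vg/\theta\,\mathrm{d}\theta$. The (P-BR) expression is $\precision f(\hat\tau)I+J$, and
\[
-f(\hat\tau)I=-\tfrac{1}{\precision}\bigl[\precision f(\hat\tau)I+J\bigr]+\tfrac{1}{\precision}J .
\]
Here $J/\precision$ is exactly the fixed-effort gain on effort-exerting types from lowering the standard. It vanishes under commitment because those types re-optimize. You noticed this tension but resolved it in the wrong direction.

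\textbf{The signs are reversed where it matters.} Take $\hat\tau>\hat\tau_-$ with $\widehat\Theta(\hat\tau)\le\tilde\theta$. Monotonicity of $\widehat\Theta$ and the decreasing branch of $\widehat T=\precision T$ give $\widehat T(\widehat\Theta(\hat\tau))<\widehat T(\hat\theta_-)=\hat\tau_-<\hat\tau$. So the no-commitment principal would want a \emph{lower} standard, and the (P-BR) term $-\frac{1}{\precision}[\precision f(\hat\tau)I+J]$ is negative. In particular, at $\widehat\Theta(\hat\tau)=\tilde\theta$ the standard lies \emph{above} $\widehat T(\tilde\theta)=\min\widehat T$, not below.

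The fixed-point uniqueness you plan to invoke delivers exactly this opposite crossing order. On the interval you must rule out, you would therefore have a negative term against a positive strategic term, and no conclusion. Separately, ``$\partial_{\hat\tau}W>0$ at $\hat\tau_-$, hence $\hat\tau^*>\hat\tau_-$'' does not follow without a global sign property.

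\textbf{The fix.} No comparison with $T$ is needed. Under the pessimistic prior, $I<0$ for every $\hat\theta<\overline\theta$, so $-f(\hat\tau)I>0$ for all $\hat\tau$. At a fixed threshold, raising the standard only lowers approval of the zero-effort pool, which is bad on aggregate.

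The strategic term is $\ge0$ whenever $\widehat\Theta(\hat\tau)\le\tilde\theta$:
\begin{itemize}
\item for $\hat\tau\ge0$, by your own sign argument;
\item for $\hat\tau<0$, because the bracket is zero there, since the marginal type has $f^{-1}(1/(\precision\hat\theta))=-\hat\tau$. Note that $\partial_{\hat\tau}\widehat\Theta<0$ on that range, not $>0$ as you wrote.
\end{itemize}
Hence $\partial_{\hat\tau}W>0$ on all of $\{\hat\tau:\widehat\Theta(\hat\tau)\le\tilde\theta\}$. This forces $\hat\theta^*>\tilde\theta\ge\hat\theta_-$, and then $\hat\tau^*>\hat\tau_-$ follows from monotonicity of $\widehat\Theta$. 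This is the paper's argument.

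\textbf{Monotonicity of $\overline V$.} Your revealed-preference proof that $\overline V$ is increasing is correct, given $\hat\theta^*>\tilde\theta$. It differs from the paper's envelope-theorem computation: it is global and needs no differentiability of $\overline V$ or $\hat\tau^*$. It also rests on exactly the fact that repairs your derivative step: raising $\hat\tau$ at a fixed threshold hurts only the aggregate-bad zero-effort pool. The threshold-restoring adjustment in your second case works in the first case too, so the case split is unnecessary.
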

When the principal makes approval decisions after observing signals, she does not take into account how the approval decision affects agents' incentives. In equilibrium, the principal adopts a low standard to approve as many good types as possible, which however incentivizes some moderately bad types to invest effort. A higher precision encourages even more bad types.

In contrast, if the principal has commitment power, then she will optimally set a high standard and deter all bad types from investing effort, though this means that she has to also reject many good types. With commitment, the principal always benefits from higher precision since the strategic effect of precision is no longer harmful. Hence, commitment can address the pitfall of precision.

\paragraph{Which Kind of Commitment?} We have so far discussed two forms of commitment: over the precision level and over the approval standard. Since $\overline{V}(\precision)$ is increasing in $\precision$, the principal always prefers the highest available precision level when she can commit to an approval standard, and thus gains nothing from additionally being able to commit to the precision level. Since $\overline{V}(\precision)\geq V(\precision)$, commitment over the approval standard is unambiguously more valuable to the principal than commitment over the precision level.

\paragraph{Competition and Commitment.}
    Full commitment can be hard to guarantee in practice. However, in many contexts, the principal is endowed with some partial commitment power by the nature of the problem. For example, competition among agents provides such a situation.
    
    For simplicity, let us assume there are two agents competing for one position and the principal has to choose one of them to fill in the position. Consider symmetric equilibrium where agents use the same strategy; by the single-crossing property, the symmetric strategy must be increasing. Then no matter deciding before or after seeing the signals, it is optimal for the principal to choose one agent if and only if his signal is better than the other. In this sense, competition between agents makes the principal as if committed. 
    
    The principal's selection rule generates a noisy ``winner-take-all'' contest \citep[see][]{tullock1980efficient,lazear1981rank} between the two agents. The symmetric equilibrium strategy $\bar{e}(\theta;\precision)$ is determined by
\[
\bar{e}(\theta;\precision)\in\arg\max_{e\geq0}\text{ }\int_{\underline\theta}^{\overline\theta}g(\theta')\tilde{F}\left(\precision(e-\bar{e}(\theta';\precision))\right)\mathrm{d}\theta'-\frac{e}{\theta},
\]
where $\tilde{F}$ is the distribution of $\epsilon_1-\epsilon_2$ and $\epsilon_1,\epsilon_2$ are independent noises distributed according to $F$.

In the limit of $\precision\to\infty$, the contest becomes an all-pay auction that leads to the efficient allocation. In detail, as $\precision\to\infty$, $\bar{e}$ converges to $b^*(\theta)=\int_{\underline\theta}^\theta g(\theta')\theta'\mathrm{d}\theta'$ and the principal's payoff converges to $\mathbb{E}[v(\theta_1)|\theta_1\geq\theta_2]$, which is the best she can get provided that she has to select one agent. As a result, there is no pitfall of precision.

If we adopt the population interpretation of the model, then it is also possible to consider competition within the population for a fixed quota \citep[cf.][]{adda2024grantmaking} where the principal must approve a given amount of agents. The fixed quota can also give the principal some commitment power and thus may as well alleviate the pitfall of precision.

\subsection{Beyond Linear Costs}
\label{sect:simulation}

The baseline model assumes linear effort costs, which yield a sharp zero-versus-positive effort cutoff. This section shows that the pitfall of precision extends beyond linear costs. What matters is the existence of an extensive margin: as precision increases, additional types switch discretely from low to high effort. We show how this margin can still arise under general convex costs and under binary effort.

\paragraph{General Convex Costs.}

Suppose the agent's effort cost is $C(e,\theta)$, where $C(0,\theta)=0$, $C_e\geq0$, $C_{ee}\geq0$, and $C_{e\theta}\leq0$ (higher types have lower marginal costs). Given approval standard $\tau$, type $\theta$ chooses $e\geq0$ to maximize
\[
    1-F(\precision(\tau-e))-C(e,\theta).
\]
The first-order condition for an interior optimum is
\[
    \precision f(\precision(\tau-e))=C_e(e,\theta).
\]
The left-hand side is single-peaked in $e$ around $\tau$, while the right-hand side is increasing in $e$. With general convex costs, the two may intersect multiple times, producing multiple local optima, possibly including the corner $e=0$. This significantly complicates the analysis, which is why we focus on linear costs in the baseline.

For ease of understanding, suppose the agent's problem has two relevant local optima: a low-effort choice $e_L(\theta;\tau,\precision)$, typically below the standard and possibly zero, and a high-effort choice $e_H(\theta;\tau,\precision)$. The agent chooses the high-effort branch if and only if
\[
    1-F(\precision(\tau-e_H))-C(e_H,\theta)
    \geq
    1-F(\precision(\tau-e_L))-C(e_L,\theta).
\]
By the single-crossing property implied by $C_{e\theta}\leq0$, the agent's best response is characterized by a threshold $\hat\theta(\tau,\precision)$: types above $\hat\theta$ choose the high-effort branch, and types below choose the low-effort branch.

The principal's indifference condition at the approval standard is then
\[
    \int_{\underline\theta}^{\hat\theta}
    v(\theta)g(\theta)f_\precision(\tau-e_L(\theta;\tau,\precision))\mathrm{d}\theta
    +\int_{\hat\theta}^{\overline\theta}
    v(\theta)g(\theta)f_\precision(\tau-e_H(\theta;\tau,\precision))\mathrm{d}\theta=0.
\]
Together with the agent's threshold condition, this characterizes the equilibrium under general convex costs. The structure is analogous to the baseline model, though the equilibrium is significantly harder to analyze.

We numerically simulate the equilibrium under quadratic costs $C(e,\theta)=e^2/(2\theta)$. As \Cref{fig:quadratic-effort-profile} shows, agents indeed switch between two local optima at high precision. Importantly, \Cref{fig:non-linear} shows that the principal's payoff remains non-monotone in precision: the pitfall survives under convex costs. The mechanism is the same as in the baseline model: higher precision raises the value of switching to the high-effort branch, inducing more unqualified types to increase effort on the extensive margin.

\begin{figure}[!htbp]
    \centering
    \includegraphics[width=.5\textwidth]{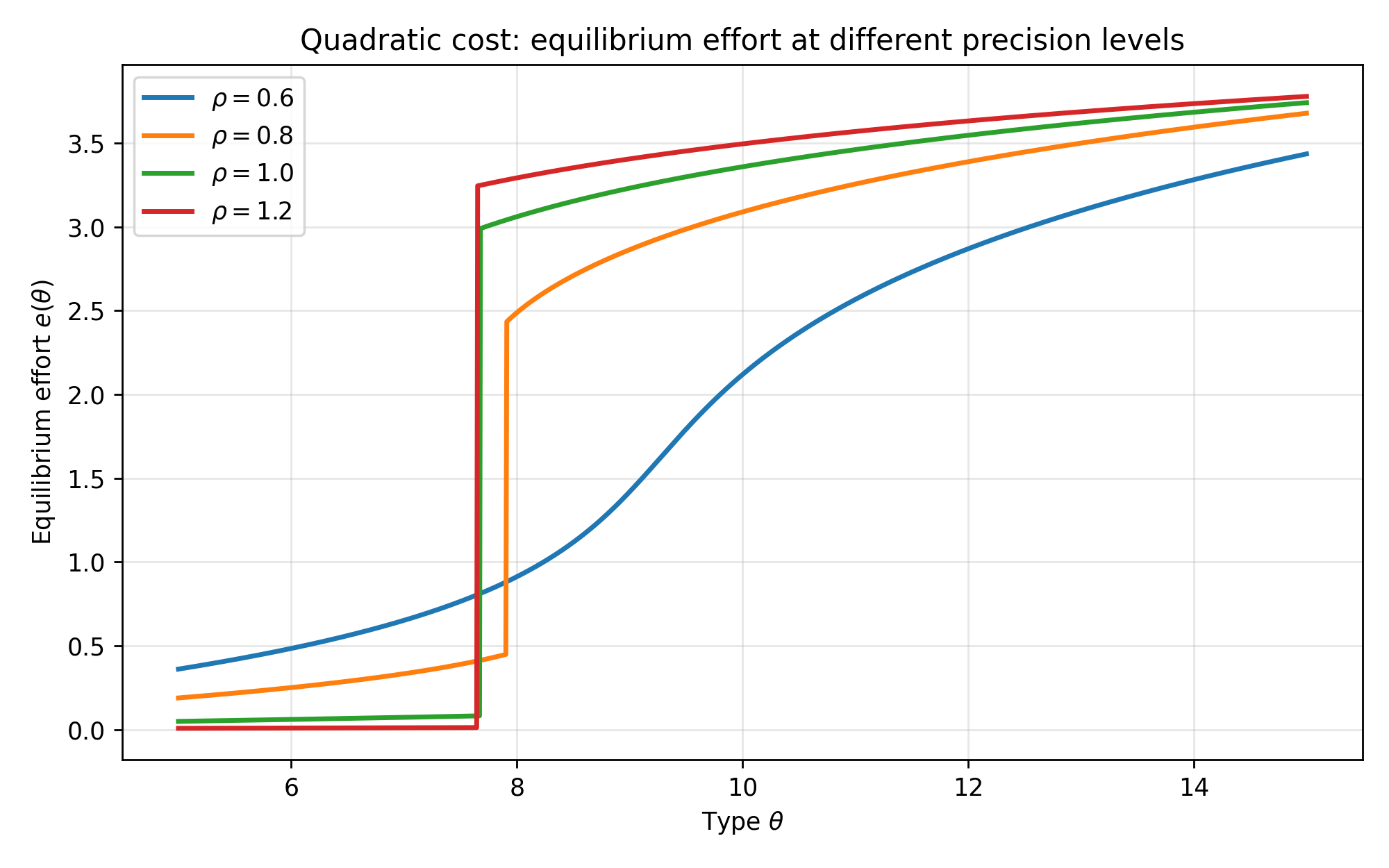}
    \caption{Equilibrium effort under quadratic costs: $C(e,\theta)=e^2/(2\theta)$, 
    normal noise, $\theta\sim U[5,15]$, and $v(\theta)=\theta-11$.}
    \label{fig:quadratic-effort-profile}
\end{figure}

\begin{figure}[!htbp]
    \centering
    \begin{subfigure}{.48\textwidth}
        \centering
        \includegraphics[width=\textwidth]{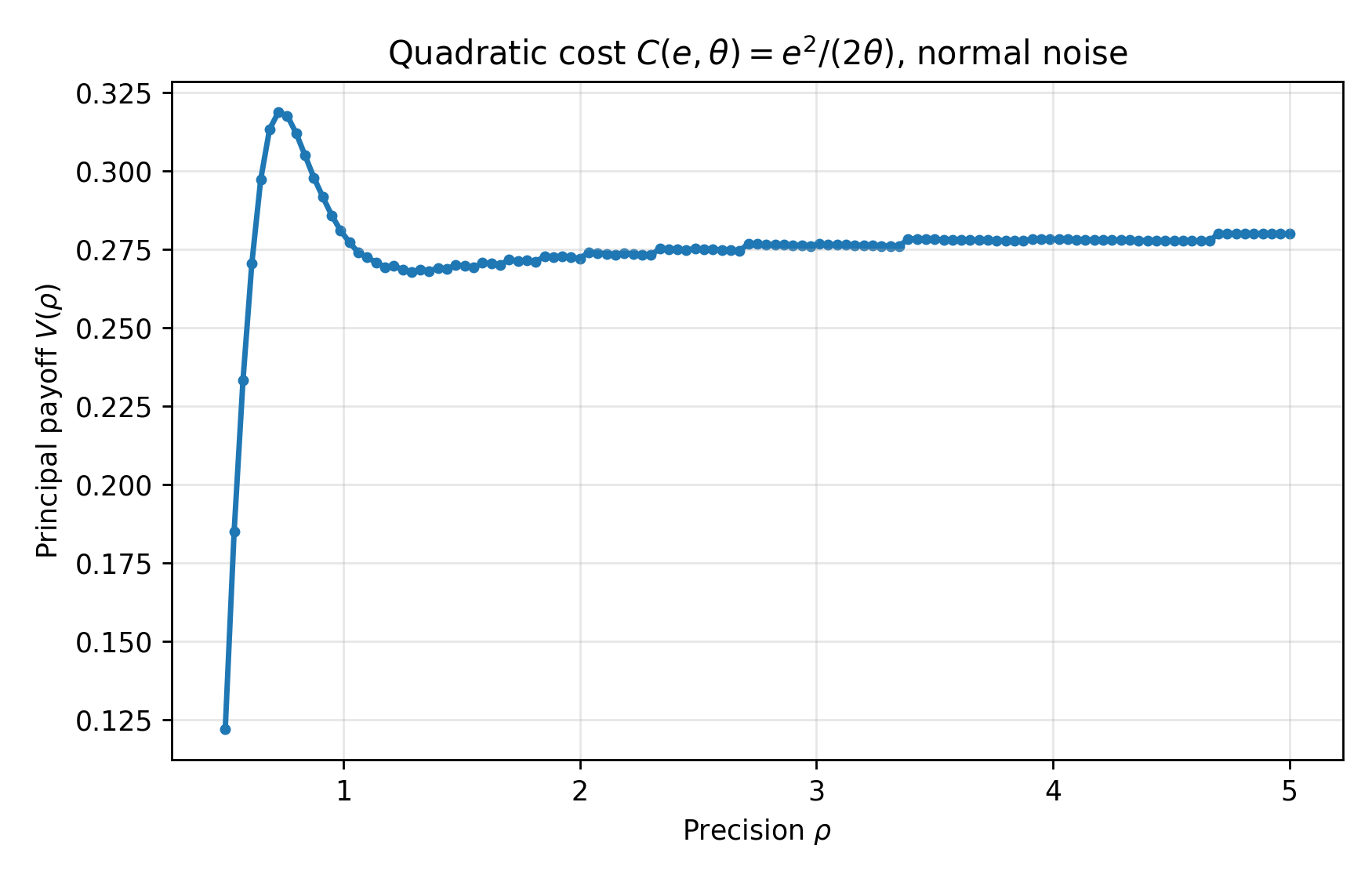}
        \caption{Quadratic effort cost: $C(e,\theta)=e^2/(2\theta)$.}
    \end{subfigure}
    \begin{subfigure}{.48\textwidth}
        \centering
        \includegraphics[width=\textwidth]{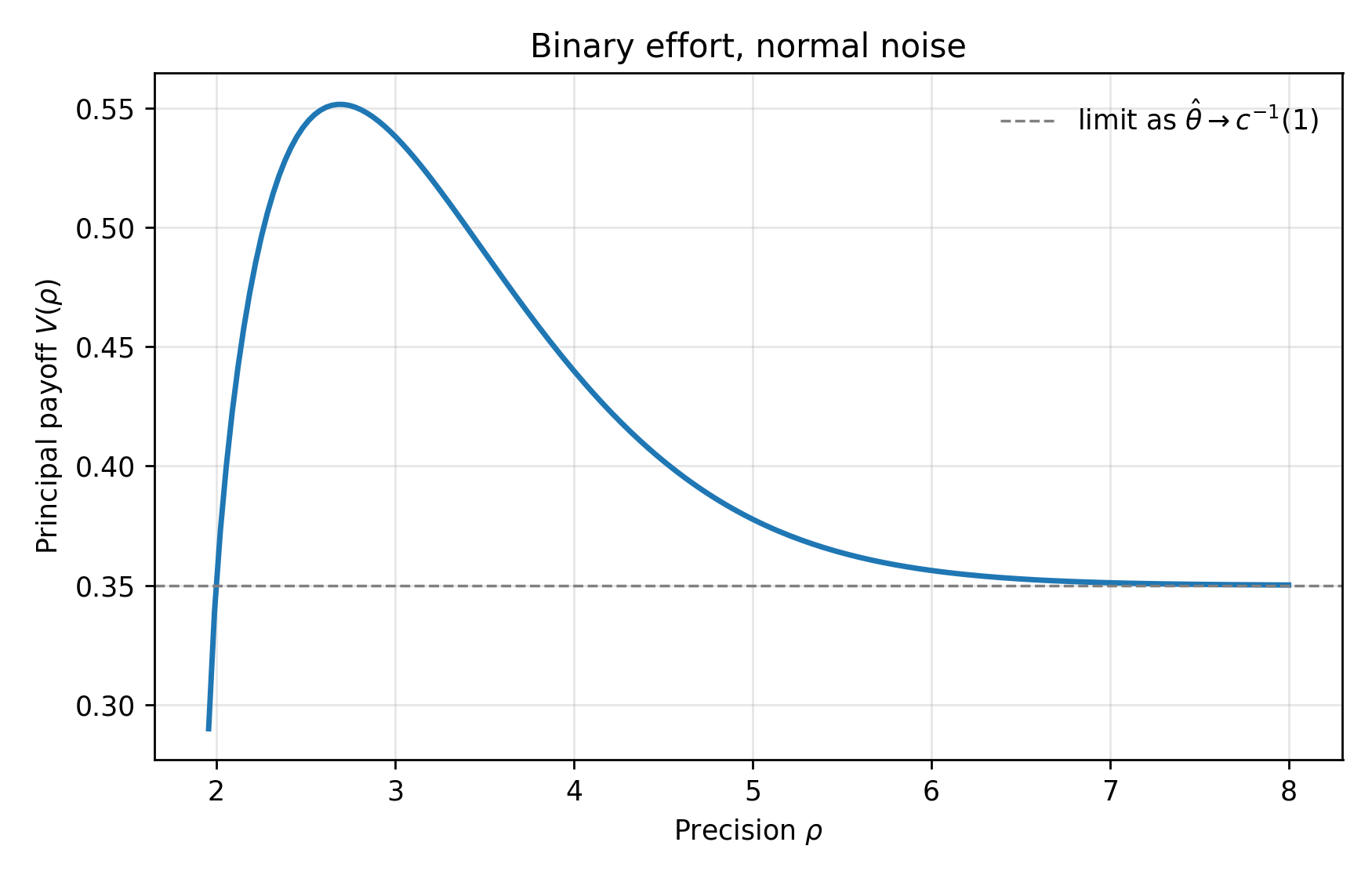}
        \caption{Binary effort: $e\in\{0,1\}$ and $c(\theta)=8/\theta$.}
    \end{subfigure}
    \caption{Principal payoff under normal noise, $\theta\sim U[5,15]$, 
    and $v(\theta)=\theta-11$.}
    \label{fig:non-linear}
\end{figure}

\paragraph{Binary Effort.}

A particularly transparent, though mechanical, way to isolate the extensive margin is to make effort binary. Suppose that $e\in\{0,\bar e\}$, $C(0,\theta)=0$, and $C(\bar e,\theta)=c(\theta)$, where $c(\theta)$ is strictly decreasing. Given standard $\tau$, the benefit of choosing high effort is
\[
    \Delta(\tau,\precision)
    =\Pr(\bar e+\epsilon/\precision\geq\tau)
    -\Pr(\epsilon/\precision\geq\tau)
    =F(\precision\tau)-F(\precision(\tau-\bar e)).
\]
The agent chooses high effort if and only if $c(\theta)\leq\Delta(\tau,\precision)$, so the agent's strategy is characterized by a cutoff
\[
    \hat\theta(\tau,\precision):=c^{-1}(\Delta(\tau,\precision)).
\]
Types above $\hat\theta$ choose $\bar e$; types below choose zero effort.

The principal's indifference condition is similar to before and, together with the cutoff condition, characterizes the equilibrium.

The effect of precision is especially transparent. If the standard lies between $0$ and $\bar e$ (as it does in equilibrium), then $\Delta(\tau,\precision)$ is increasing in $\precision$ and converges to $1$ as $\precision\to\infty$, so $\hat\theta$ decreases toward $c^{-1}(1)$. If $c^{-1}(1)<\tilde\theta$, the marginal types induced to choose high effort at high precision are bad types, harming the principal. \Cref{fig:non-linear} confirms this numerically.

These examples clarify the role of the extensive margin. The pitfall of precision does not hinge on the linear cost assumption; linear costs merely make the extensive margin analytically sharp. More generally, whenever higher precision induces marginal low types to switch from low to high effort, the principal can be harmed by greater precision.

\section{Conclusion}
This paper has shown that greater screening precision can backfire in  noisy signaling environments with costly, strategic effort. Higher precision reduces random errors in selection. But at the same time, it intensifies incentives for marginally unqualified agents to engage in wasteful signaling, hence reducing screening efficiency and lowering the principal's welfare once precision is sufficiently high. We refer to this paradox as the pitfall of precision. This mechanism we identify also highlights the institutional importance of commitment power: if principals could commit ex ante to an optimal approval standard, greater precision would unambiguously improve welfare.


The pitfall of precision sheds light on screening technologies and their heterogeneous impact across groups. Our mechanism reveals a fundamental welfare trade-off beyond screening costs: even if more precise technology were free, it need not be desirable, because it could reduce welfare due to strategic distortions, and increase inequalities when groups differ in ability to exploit technology, effectively subjecting them to different precisions. Our results offer a rationale for why screening institutions, ranging from tax enforcement and welfare eligibility to college admissions, may in practice deliberately limit precision or favor disadvantaged groups with noisier technologies. 
Our analysis also points to a broader research agenda: understanding how  noise and strategic signaling interact to shape welfare, inequality, and institutional design.

\newpage

\appendix
\section{Omitted Results and Proofs}
\label{app}

\paragraph{Notation.}
Throughout the appendix, we work with the noise level $\sigma:=1/\precision$ rather than precision $\precision$, as the analysis is most convenient and transparent in $\sigma$, especially for comparative statics and asymptotic analysis at high precision ($\sigma=1/\precision\to0$). Let $f_\sigma(z):=\frac{1}{\sigma}f(\frac{z}{\sigma})$ denote the pdf of $\sigma\epsilon$, parametrized by $\sigma$.

Equilibrium objects in the main text are written as functions of precision $\precision$; in the appendix we write the corresponding objects as functions of $\sigma$. For each equilibrium object $X$, define
\[
X^\sigma(\sigma):=X^\precision(1/\sigma),
\]
so that $X^\precision(\precision)=X^\sigma(1/\precision)$ and
\[
\frac{\mathrm{d} X^\precision}{\mathrm{d}\precision}(\precision)=-\frac{1}{\precision^2}\frac{\mathrm{d} X^\sigma}{\mathrm{d}\sigma}(1/\precision).
\]
Comparative statics in $\sigma$ translate directly to those in $\precision$ via this identity, with opposite signs. Superscripts $\sigma$ are dropped when there is no confusion.

\subsection{Formal Analysis and Proofs from \Cref{sect:eqm}}
\label{app:eqm-proof}

Since the agent's payoff satisfies the single-crossing property (SCP) in $(e,\theta)$, his equilibrium strategy $e(\theta)$ must be increasing. The monotonicity of $e(\theta)$ and the MLRP of $h_\sigma(s|e)$ together imply that the principal's posterior belief is increasing in $s$ in the first-order stochastic dominance order. Since $v$ is increasing in types, her equilibrium strategy must be a cutoff rule: approve if and only if $s>\tau$ for some $\tau\in\overline{\mathbb{R}}$.

\paragraph{Agent's Best Response.}
Given any cutoff $\tau$, type $\theta$'s payoff from effort $e$ is
\[
\text{Pr}(e+\sigma\cdot\epsilon>\tau)-e/\theta=1-F_\sigma(\tau-e)-e/\theta.
\]
It is easy to see that this payoff is convex for $e\in[0,\tau]$ and concave for $e\geq\tau$. Therefore, the agent's best response is either the corner solution $e=0$, or the interior solution $e=\hat{e}(\theta;\tau):=[\tau+f_{\sigma}^{-1}(1/\theta)]^+=[\tau+\sigma f^{-1}(\sigma/\theta)]^+$, determined by the first order condition that $1/\theta=f_\sigma(\tau-e)=f_\sigma(e-\tau)$ with $e\geq\tau$
.\footnote{Here $(x)^+:=\max\{x,0\}$ for $x\in\mathbb{R}$. Note that the interior candidate effort level, $e=[\tau+f_\sigma^{-1}(1/\theta)]^+$, exists only when $\sigma/\theta<f(0)$. For the consistency of notations, we adopt the convention that $f^{-1}(p):=0$ for $p>f(0)$.} 

Due to the SCP of the agent's payoff, a threshold type $\hat\theta(\tau;\sigma)\in[\underline\theta,\overline{\theta}]$ exists such that all higher types strictly prefer $e=[\tau+f_{\sigma}^{-1}(1/\theta)]^+$ and all lower types strictly prefer $e=0$. The threshold $\hat\theta$ is determined by the indifference condition (\ref{eq:S-BR}) when $\hat\theta$ is interior, which specializes to
\[
1-F_\sigma(\tau-0)=1-F_\sigma(\tau-[\tau+f_\sigma^{-1}(1/\hat\theta)]^+)-[\tau+f_\sigma^{-1}(1/\hat\theta)]^+/\hat\theta.
\]
By the symmetry of $f$, it can be rewritten as follows:
\begin{equation}
\label{eq:S-BR-app}
    F_\sigma(0-\tau)=F_\sigma(\max\{f_\sigma^{-1}(1/\hat\theta),-\tau\})-[\tau+f_\sigma^{-1}(1/\hat\theta)]^+/\hat\theta. \tag{A-BR'}
\end{equation}
Instead, for corner solutions: for $\hat\theta=\overline\theta$, we must have the left-hand side (LHS) is greater than the right-hand side (RHS) in \Cref{eq:S-BR-app}; symmetrically for $\hat\theta=\underline\theta$, we should have the LHS smaller than the RHS.

Let $\hat\tau:=\tau/\sigma$. Define $\widehat\Theta:\overline{\mathbb{R}}\to[\underline\theta,\overline\theta]$ as
\begin{equation}
\label{eq:S-BR-function}
    \widehat\Theta(\hat\tau;\sigma):=\inf\left\{\hat\theta\in[\underline\theta,\overline\theta]:F(-\hat\tau)< F(\max\{f^{-1}(\sigma/\hat\theta),-\hat\tau\})-[\hat\tau+f^{-1}(\sigma/\hat\theta)]^+\sigma/\hat\theta\right\}\tag{A-BR''}
\end{equation}
with the convention that $\widehat\Theta(\hat\tau;\sigma):=\overline\theta$ if the corresponding set is empty. Here for expositional convenience later, we abuse notation and write \(\widehat\Theta\) as a function of \(\hat\tau=\tau/\sigma\) rather than $\tau$ as in the main context; obviously, $\widehat\Theta(\tau)=\widehat\Theta(\sigma\hat\tau)$.

\Cref{lemma:Theta-increasing} in the main text follows from the following result.

\begin{lemma}
\label{lemma:S-BR-increasing}
    For a given $\hat\tau$, if $\widehat\Theta(\hat\tau)\in(\underline\theta,\overline\theta)$, then $\hat\theta=\widehat\Theta(\hat\tau)$ is the unique type that is indifferent between $e=0$ and $e=\sigma[\hat\tau+f^{-1}(\sigma/\theta)]$; if $\widehat\Theta(\hat\tau)=\underline\theta$ (resp., $\overline\theta$), then all types prefer positive effort (resp., all types prefer zero effort). 
    
    For $\hat\tau\geq0$ such that $\widehat\Theta(\hat\tau)\in(\underline\theta,\overline{\theta})$, $\widehat\Theta(\hat\tau)$ is strictly increasing in $\hat\tau$; for $\hat\tau<0$ such that $\widehat\Theta(\hat\tau)\in(\underline\theta,\overline{\theta})$, $\widehat\Theta(\hat\tau)=\sigma/f(-\hat\tau)$, which is strictly decreasing in $\hat\tau$.
\end{lemma}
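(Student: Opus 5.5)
The plan is to study the net gain from positive effort as a function of type, with $\hat\tau$ and $\sigma$ held fixed. For each $\theta$, set
\[
\Delta(\theta;\hat\tau):=F(\max\{f^{-1}(\sigma/\theta),-\hat\tau\})-[\hat\tau+f^{-1}(\sigma/\theta)]^+\sigma/\theta-F(-\hat\tau).
\]
This is type $\theta$'s payoff at the interior candidate effort $e=\sigma[\hat\tau+f^{-1}(\sigma/\theta)]^+$ minus its payoff at $e=0$, after rescaling by $\hat\tau=\tau/\sigma$ and using the symmetry of $f$ as in \Cref{eq:S-BR-app}. By \Cref{eq:S-BR-function}, $\widehat\Theta(\hat\tau)$ is the infimum of the set $\{\theta:\Delta(\theta;\hat\tau)>0\}$.

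\textbf{Single crossing.} The first step is to show that $\Delta$ is strictly increasing in $\theta$ wherever it is near zero. This follows from the envelope theorem: $\Delta$ is the value of $\max_{e}[1-F_\sigma(\tau-e)-e/\theta]$ over the candidate set minus the value at $e=0$. So $\partial\Delta/\partial\theta=e^*/\theta^2\geq0$, where $e^*$ is the interior candidate, and the inequality is strict when $e^*>0$. The only case where $e^*=0$ is when the positive part binds, and then $\Delta=0$ holds identically on a set where the two options coincide. I have to rule out that this degenerate region contains an interior indifferent type, and I will treat it in the second step below.

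\textbf{Uniqueness and corners.} Given the monotonicity, the set $\{\Delta>0\}$ is an upper interval, and continuity of $\Delta$ in $\theta$ (from continuity of $f^{-1}$ and $F$) makes the infimum the unique zero when it is interior. If the set is all of $[\underline\theta,\overline\theta]$, every type prefers positive effort; if it is empty, every type prefers zero effort. This gives the first paragraph of the lemma.

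\textbf{Case $\hat\tau\geq0$.} Here the positive part never binds, because $\hat\tau+f^{-1}(\cdot)>0$. So the indifferent type $\hat\theta$ solves $\Delta(\hat\theta;\hat\tau)=0$ with $\partial_\theta\Delta>0$, and it suffices to show $\partial_{\hat\tau}\Delta<0$ and then apply the implicit function theorem. By the envelope theorem again, taking the derivative in $\tau$,
\[
\partial_{\hat\tau}\Delta=f(-\hat\tau)-\sigma/\theta=f(\hat\tau)-\sigma/\theta,
\]
which is the marginal effect of the standard at zero effort minus the marginal effect at the interior effort, using that the first-order condition gives density $\sigma/\theta$ at the interior point in $\hat\tau$ units. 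I therefore need $f(\hat\tau)<\sigma/\hat\theta$ at indifference. I expect this to be the main obstacle. I will argue by contradiction. If $f(\hat\tau)\geq\sigma/\hat\theta=f(f^{-1}(\sigma/\hat\theta))$, then $\hat\tau\leq f^{-1}(\sigma/\hat\theta)$, so the zero-effort point already sits in the region where the marginal benefit $f$ of raising the signal exceeds the marginal cost. On $[-\hat\tau,f^{-1}(\sigma/\hat\theta)]$ (in noise units) the integrand $f(z)-\sigma/\hat\theta$ is then nonnegative and positive on a set of positive measure, because $f$ is unimodal at $0$ and symmetric. Hence $\Delta=\int(f-\sigma/\hat\theta)>0$, which contradicts indifference. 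So the inequality is strict and $\widehat\Theta$ is strictly increasing.

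\textbf{Case $\hat\tau<0$.} Write $\hat\tau=-a$ with $a>0$. If $f^{-1}(\sigma/\theta)\leq a$, the candidate effort is $0$ and $\Delta=0$. If $f^{-1}(\sigma/\theta)>a$, then
\[
\Delta=\int_{a}^{f^{-1}(\sigma/\theta)}\bigl(f(z)-\sigma/\theta\bigr)\,\mathrm{d}z>0,
\]
since $f$ is strictly decreasing on $(0,\infty)$ by diminishing bad luck and symmetry. So the strict-preference threshold is exactly the type with $f^{-1}(\sigma/\theta)=a$, that is $\theta=\sigma/f(a)=\sigma/f(-\hat\tau)$. This expression is strictly decreasing in $\hat\tau$, because $f(-\hat\tau)=f(a)$ falls as $a$ grows, i.e. as $\hat\tau$ decreases. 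The degenerate-region issue from the first step is resolved here as well: types below this threshold are indifferent only trivially, because their two candidate efforts coincide at $e=0$.
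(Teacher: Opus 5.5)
Your proposal is correct and follows essentially the paper's route: your $\Delta$ is the paper's $Q$, and your envelope-theorem derivatives $\partial_\theta\Delta=\sigma[\hat\tau+f^{-1}(\sigma/\theta)]/\theta^2$ and $\partial_{\hat\tau}\Delta=f(-\hat\tau)-\sigma/\theta$ coincide with what the paper gets by differentiating $Q$ directly; your contradiction via $\Delta=\int_{-\hat\tau}^{f^{-1}(\sigma/\hat\theta)}[f(z)-\sigma/\hat\theta]\,\mathrm{d}z>0$ is exactly the paper's sign argument for the implicit function theorem, and your $\hat\tau<0$ integral spells out the paper's observation that type $\theta$ wants positive effort iff $\hat\tau+f^{-1}(\sigma/\theta)\geq0$. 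One small nit: at $\hat\tau=0$ your claim $\hat\tau+f^{-1}(\sigma/\theta)>0$ fails for types with $\sigma/\theta\geq f(0)$ (there $\Delta\equiv0$), so that boundary point belongs with your $\hat\tau<0$ case, giving $\widehat\Theta(0)=\sigma/f(0)$; this matches how the paper splits $\hat\tau>0$ from $\hat\tau\leq0$.
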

\begin{proof}
First, when $\hat\tau>0$, the indifference condition (\ref{eq:S-BR-app}) can be rewritten as
\[
0=F(f^{-1}(\sigma/\hat\theta))-[\hat\tau+f^{-1}(\sigma/\hat\theta)]\sigma/\hat\theta-F(-\hat\tau)=:Q(\hat\theta,\hat\tau).
\]
Note that $Q$ is strictly increasing in $\hat\theta$ since
\[
\frac{\partial Q}{\partial\hat\theta}=[\hat\tau+f^{-1}(\sigma/\hat\theta)]\sigma/\hat\theta^2>0.
\]
Therefore, for any $\hat\tau>0$, as long as $Q(\underline\theta,\hat\tau)< 0< Q(\overline\theta,\hat\tau)$, which is equivalent to $\widehat\Theta(\hat\tau)\in(\underline\theta,\overline\theta)$ by definition, there exists a unique $\hat\theta\in(\underline\theta,\overline\theta)$ that solves \Cref{eq:S-BR-app}, which is exactly given by $\widehat\Theta(\hat\tau)$. If instead $Q(\underline\theta,\hat\tau)\geq0$ (resp., $Q(\overline\theta,\hat\tau)\leq0$), equivalently $\widehat\Theta(\hat\tau)=\underline\theta$ (resp., $\widehat\Theta(\hat\tau)=\overline\theta$), then all types prefer positive effort (resp., all types prefer zero effort).

Then for $\hat\tau\leq0$, the agent of type $\theta$ wants to exert positive effort if and only if $\hat\tau+f^{-1}(\sigma/\theta)\geq0$. Hence $\widehat\Theta(\hat\tau)=\inf\{\theta\in[\underline\theta,\overline\theta]:\hat\tau+f^{-1}(\sigma/\theta)\geq0\}=\min\{\max\{\sigma/f(-\hat\tau),\underline\theta\},\overline\theta\}$, which is exactly the cutoff type above which all types prefer $e=\sigma[\hat\tau+f^{-1}(\sigma/\theta)]$ and below which all types prefer $e=0$.

Finally, the monotonicity of $\widehat\Theta$ in $\hat\tau\geq0$ is by the Implicit Function Theorem and the fact that when $Q(\hat\theta,\hat\tau)=0$, the partial derivative of $Q$ w.r.t. $\hat\tau$ is negative when $\hat\tau\geq0$:
\[
\frac{\partial Q}{\partial\hat\tau}\Big|_{Q(\hat\theta,\hat\tau)=0}=-[\sigma/\hat\theta-f(0-\hat\tau)]\Big|_{Q(\hat\theta,\hat\tau)=0}<0,
\]
where the inequality is because otherwise $Q(\hat\theta,\hat\tau)=\int_{-\hat\tau}^{f^{-1}(\sigma/\hat\theta)}[f(z)-\sigma/\hat\theta]\mathrm{d}z>0$, contradicting to $Q(\hat\theta,\hat\tau)=0$. Instead, when $\hat\tau<0$ and $\widehat\Theta(\hat\tau)\in(\underline\theta,\overline{\theta})$, $\widehat\Theta(\hat\tau)=\sigma/f(-\hat\tau)$ which is strictly decreasing in $\hat\tau$. 
\end{proof}

\paragraph{Principal's Best Response.}

Given the above cutoff structure of the agent's best response --- $e(\theta)=0$ for $\theta<\hat\theta$ and $e(\theta)=\tau+f_{\sigma}^{-1}(1/\theta)$ for $\theta\geq\hat\theta$ --- the principal's expected payoff from using $\tau$ is
\[
W(\tau;\hat\theta,\sigma):=\int_{\underline\theta}^{\hat\theta} v(\theta)g(\theta)\mathrm{d}\theta\cdot [1-F_\sigma(\tau-0)]+\int_{\hat\theta}^{\overline\theta} v(\theta)g(\theta)[1-F_\sigma(\tau-e(\theta))]\mathrm{d}\theta.
\]
By the monotonicity of $e(\theta)$ and the MLRP and log-concavity of $f_\sigma(s-e)$, $W$ is concave in $\tau$. Hence, the following first-order condition is both necessary and sufficient for any interior $\tau$ to be one of the principal's best responses:
\[
-f_\sigma(\tau-0)\int_{\underline\theta}^{\hat\theta} v(\theta)g(\theta)\mathrm{d}\theta-\int_{\hat\theta}^{\overline\theta} v(\theta)g(\theta)f_\sigma(\tau-e(\theta))\mathrm{d}\theta=0.
\]
In equilibrium $f_\sigma(\tau-e(\theta))=1/\theta$ for $\theta>\hat\theta$, hence the equilibrium condition for the principal can be simplified to:
\begin{equation}
-f_\sigma(\tau-0)\int_{\underline\theta}^{\hat\theta} v(\theta)g(\theta)\mathrm{d}\theta-\int_{\hat\theta}^{\overline\theta} v(\theta)g(\theta)/\theta\mathrm{d}\theta=0.\tag{P-BR}
\end{equation}

\Cref{eq:R-BR} can possibly hold only when $\hat\theta>\theta^\dagger$, with $\theta^\dagger$ defined in \Cref{eq:thetadagger}. If $\hat\theta\leq\theta^\dagger$, raising the approval standard only has benefits, 
so the principal's best response is $\tau=+\infty$. Note that $\theta^\dagger\in(\underline\theta,\tilde\theta)$ when $\mathbb{E}[v(\theta)]\leq0$.

For $\hat\theta\in[\underline\theta,\overline\theta]$, define
\begin{equation}
\label{eq:ratio}    
    R(\hat\theta):=\bigg(\int_{\hat\theta}^{\overline\theta} v(\theta)g(\theta)/\theta\mathrm{d}\theta\bigg)\bigg/\bigg(\int_{\underline\theta}^{\hat\theta} v(\theta)g(\theta)\mathrm{d}\theta\bigg).
\end{equation}
Therefore, \Cref{eq:R-BR} can be rearranged as\footnote{\label{fn:valid}Notice that $\tau=\pm T(\hat\theta)$ only characterizes the equilibrium condition when $-\sigma R(\hat\theta)\leq f(0)$. When $-\sigma R(\hat\theta)>f(0)$, \Cref{eq:R-BR} is not well-defined because the LHS is always negative. This is not an issue if we only consider relatively small noise, e.g., with $\sigma\leq-f(0)/R(\tilde\theta)$.}
\[
\tau=\pm\sigma f^{-1}(-\sigma R(\hat\theta))=: \pm T(\hat\theta;\sigma).
\]

\Cref{lemma:T-Ushaped} follows from the below properties of $R$ (in particular, \Cref{lemma:R-Ushaped}).
\begin{lemma}
\label{lemma:1/theta+R}
    When $\mathbb{E}[v(\theta)]\leq0$, $1/\hat\theta+R(\hat\theta)>0$ for all $\hat\theta\in[\underline\theta,\overline\theta]$.
\end{lemma}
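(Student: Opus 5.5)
The plan is to reduce the inequality to a sign statement about a single weighted integral of $v$, and then use the single crossing of $v$ at $\tilde\theta$ together with a decreasing weight. Write $N(\hat\theta):=\int_{\hat\theta}^{\overline\theta}v(\theta)g(\theta)/\theta\,\mathrm{d}\theta$ and $D(\hat\theta):=\int_{\underline\theta}^{\hat\theta}v(\theta)g(\theta)\,\mathrm{d}\theta$, so that $R=N/D$.

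\emph{Step 1: sign of $D$.} I first show that $D(\hat\theta)<0$ for every $\hat\theta\in(\underline\theta,\overline\theta]$.
\begin{itemize}
\item If $\hat\theta\le\tilde\theta$, then $v<0$ on $[\underline\theta,\hat\theta)$ and $g>0$, so $D<0$.
\item If $\hat\theta>\tilde\theta$, write $D(\hat\theta)=\mathbb{E}[v(\theta)]-\int_{\hat\theta}^{\overline\theta}vg$. The first term is $\le 0$ by \Cref{as:ex-ante-bad}. The subtracted integral is $\ge 0$ because $v\ge0$ above $\tilde\theta$, so again $D<0$ after checking strictness. Strictness needs care only if $\mathbb{E}[v]=0$ and $v\equiv0$ on $[\hat\theta,\overline\theta]$. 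Then $v\le0$ everywhere by monotonicity, contradicting $\mathbb{E}[v\mid\theta\ge\tilde\theta]>0$.
\end{itemize}
At $\hat\theta=\underline\theta$ we have $D=0$, so $R$ must be read as a limit or a convention. I would treat this endpoint separately, or restrict to $\hat\theta>\underline\theta$, which is the only range used later.

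\emph{Step 2: reformulation.} Since $D<0$, the claim $1/\hat\theta+R(\hat\theta)>0$ is equivalent to $D(\hat\theta)/\hat\theta+N(\hat\theta)<0$. Define the weight $\varphi(\theta):=1/\max\{\theta,\hat\theta\}$. It is positive and weakly decreasing, and the left-hand side becomes
\[
\int_{\underline\theta}^{\overline\theta}v(\theta)g(\theta)\varphi(\theta)\,\mathrm{d}\theta .
\]

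\emph{Step 3: the key inequality.} Because $v<0$ below $\tilde\theta$, where $\varphi\ge\varphi(\tilde\theta)$, and $v\ge0$ above $\tilde\theta$, where $\varphi\le\varphi(\tilde\theta)$, we have pointwise
\[
v(\theta)\bigl(\varphi(\theta)-\varphi(\tilde\theta)\bigr)\le0 .
\]
Integrating against $g$ gives
\[
\int vg\varphi\;\le\;\varphi(\tilde\theta)\,\mathbb{E}[v(\theta)]\;\le\;0
\]
by \Cref{as:ex-ante-bad}.

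\emph{Step 4: strictness (the main obstacle).} Equality in Step 3 requires the pointwise product to vanish almost everywhere.
\begin{itemize}
\item On $(\underline\theta,\tilde\theta)$ we have $v<0$, so this forces $\varphi$ to be constant there, i.e.\ $\hat\theta\ge\tilde\theta$.
\item Then $\varphi<\varphi(\tilde\theta)$ strictly on $(\hat\theta,\overline\theta]$, so $v=0$ almost everywhere on that interval. If $\hat\theta<\overline\theta$, this contradicts $v>0$ near $\overline\theta$, which follows from monotonicity and $\mathbb{E}[v\mid\theta\ge\tilde\theta]>0$.
\item The remaining case $\hat\theta=\overline\theta$ is immediate: there $N=0$, so $R=0$ and $1/\overline\theta>0$.
\end{itemize}
Hence the inequality is strict for all $\hat\theta\in(\underline\theta,\overline\theta]$.
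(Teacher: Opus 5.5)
Your proof is correct on $(\underline\theta,\overline\theta)$ and is essentially the paper's argument. The paper splits into the cases $\hat\theta\ge\tilde\theta$ and $\hat\theta<\tilde\theta$. In the reformulated inequality $D(\hat\theta)/\hat\theta+N(\hat\theta)<0$, it replaces every weight by $1/\hat\theta$ in the first case and by $1/\tilde\theta$ in the second. These are exactly the two values of your pivot $\varphi(\tilde\theta)=1/\max\{\tilde\theta,\hat\theta\}$, so your pointwise inequality $v(\theta)\bigl(\varphi(\theta)-\varphi(\tilde\theta)\bigr)\le0$ just merges the two cases. Your Step 1 also proves the sign $D<0$, which the paper uses without proof when it divides at the end.

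There is one slip at the right endpoint. Your strictness argument in Step 1 tacitly needs $\hat\theta<\overline\theta$, because $\int_{\overline\theta}^{\overline\theta}vg=0$ says nothing about $v$. If $\mathbb{E}[v(\theta)]=0$, then $D(\overline\theta)=\mathbb{E}[v(\theta)]=0$. So $R(\overline\theta)$ is $0/0$, and your last bullet, which sets $R(\overline\theta)=0$, fails. In fact, with $\mathbb{E}[v(\theta)]=0$ and $\hat\theta>\tilde\theta$ close to $\overline\theta$,
\[
\frac{1}{\hat\theta}+R(\hat\theta)=\frac{\int_{\hat\theta}^{\overline\theta}v(\theta)g(\theta)\bigl(1/\hat\theta-1/\theta\bigr)\,\mathrm{d}\theta}{\int_{\hat\theta}^{\overline\theta}v(\theta)g(\theta)\,\mathrm{d}\theta}\longrightarrow0\quad\text{as }\hat\theta\uparrow\overline\theta,
\]
so no strict inequality holds there even in the limit.

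The paper's proof has the same defect: its first strict inequality in the case $\hat\theta\ge\tilde\theta$ becomes an equality at $\hat\theta=\overline\theta$. This is harmless, since the lemma is only invoked at interior $\hat\theta$. You should state your conclusion for $\hat\theta\in(\underline\theta,\overline\theta)$, and include $\hat\theta=\overline\theta$ only when $\mathbb{E}[v(\theta)]<0$.

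At the lower end, your Step 3 with $\hat\theta=\underline\theta$ gives $N(\underline\theta)=\mathbb{E}[v(\theta)/\theta]<0$. Hence $R(\hat\theta)\to+\infty$ as $\hat\theta\downarrow\underline\theta$, which is consistent with the claim read as a limit.
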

\begin{proof}
    Observe that when $\hat\theta\in[\tilde\theta,\overline\theta]$,
    \begin{align*}
        \frac{1}{\hat\theta}\int_{\underline\theta}^{\hat\theta} v(\theta)g(\theta)\mathrm{d}\theta+\int_{\hat\theta}^{\overline\theta} v(\theta)g(\theta)/\theta\mathrm{d}\theta<&\frac{1}{\hat\theta}\int_{\underline\theta}^{\hat\theta} v(\theta)g(\theta)\mathrm{d}\theta+\frac{1}{\hat\theta}\int_{\hat\theta}^{\overline\theta} v(\theta)g(\theta)\mathrm{d}\theta\\
        =&\frac{1}{\hat\theta}\mathbb{E}[v(\theta)]\leq0.
    \end{align*}
    When $\hat\theta<\tilde\theta$, instead
    \begin{align*}
        \frac{1}{\hat\theta}\int_{\underline\theta}^{\hat\theta} v(\theta)g(\theta)\mathrm{d}\theta+&\int_{\hat\theta}^{\overline\theta} v(\theta)g(\theta)/\theta\mathrm{d}\theta<\frac{1}{\tilde\theta}\int_{\underline\theta}^{\hat\theta} v(\theta)g(\theta)\mathrm{d}\theta+\int_{\hat\theta}^{\tilde\theta} v(\theta)g(\theta)/\theta\mathrm{d}\theta\\
        &\qquad\qquad\qquad\qquad\qquad+\int_{\tilde\theta}^{\overline\theta} v(\theta)g(\theta)/\theta\mathrm{d}\theta\\
        \leq&\frac{1}{\tilde\theta}\int_{\underline\theta}^{\hat\theta} v(\theta)g(\theta)\mathrm{d}\theta+\frac{1}{\tilde\theta}\int_{\hat\theta}^{\tilde\theta} v(\theta)g(\theta)\mathrm{d}\theta+\frac{1}{\tilde\theta}\int_{\tilde\theta}^{\overline\theta} v(\theta)g(\theta)\mathrm{d}\theta\\
        =&\frac{1}{\tilde\theta}\mathbb{E}[v(\theta)]\leq0.
    \end{align*}
    As a result, in either case, we have $1/\hat\theta+R(\hat\theta)>0$.
\end{proof}

\begin{lemma}
\label{lemma:R-Ushaped}
    When $\mathbb{E}[v(\theta)]\leq0$, $R(\hat\theta)$ is strictly decreasing for $\hat\theta<\tilde\theta$ and increasing for $\hat\theta\geq\tilde\theta$.
\end{lemma}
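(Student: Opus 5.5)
Write $N(\hat\theta):=\int_{\hat\theta}^{\overline\theta} v(\theta)g(\theta)/\theta\,\mathrm{d}\theta$ and $D(\hat\theta):=\int_{\underline\theta}^{\hat\theta} v(\theta)g(\theta)\,\mathrm{d}\theta$, so that $R=N/D$. The plan is to differentiate $R$ and determine the sign of the derivative with \Cref{lemma:1/theta+R}. First note that $D(\hat\theta)<0$ for every $\hat\theta\in(\underline\theta,\overline\theta]$. For $\hat\theta\le\tilde\theta$ this holds because the integrand is nonpositive and is strictly negative near $\underline\theta$. For $\hat\theta>\tilde\theta$ it holds because $D(\hat\theta)=\mathbb{E}[v(\theta)]-\int_{\hat\theta}^{\overline\theta} vg\le -\int_{\hat\theta}^{\overline\theta}vg$. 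This last integral is positive unless $v\equiv0$ on $[\hat\theta,\overline\theta]$, which is a degenerate case one can handle or exclude separately. So $R$ is well-defined and continuously differentiable on $(\underline\theta,\overline\theta]$.

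By the quotient rule,
\[
R'(\hat\theta)=\frac{-v(\hat\theta)g(\hat\theta)/\hat\theta\cdot D-N\cdot v(\hat\theta)g(\hat\theta)}{D^2}
=-\frac{v(\hat\theta)g(\hat\theta)}{D(\hat\theta)}\Big(\frac{1}{\hat\theta}+R(\hat\theta)\Big).
\]
By \Cref{lemma:1/theta+R}, the bracket is strictly positive. Since $g>0$ and $D<0$, the factor $-g/D$ is also strictly positive. The sign of $R'(\hat\theta)$ is therefore exactly the sign of $v(\hat\theta)$.

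By the definition of $\tilde\theta$ and the monotonicity and continuity of $v$, we have $v(\hat\theta)<0$ for $\hat\theta<\tilde\theta$ and $v(\hat\theta)\ge0$ for $\hat\theta\ge\tilde\theta$. Hence $R'<0$ on $(\underline\theta,\tilde\theta)$, which gives that $R$ is strictly decreasing there; this extends to $\underline\theta$ by continuity, or because $R(\underline\theta^+)$ tends to $-\infty$ since $D\to0^-$ while $N(\underline\theta)>0$ in the pessimistic case. On $[\tilde\theta,\overline\theta]$ we get $R'\ge0$, so $R$ is increasing. Since $v$ is only weakly increasing, it may vanish on an interval, so this half is stated as weakly increasing, matching the lemma.

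The main obstacle is establishing the signs of the two auxiliary factors: that $D<0$ everywhere, and that $1/\hat\theta+R>0$. The latter is supplied by \Cref{lemma:1/theta+R}, so the only remaining care concerns the endpoint $\underline\theta$, where $D=0$, and the possible flat region of $v$ above $\tilde\theta$.
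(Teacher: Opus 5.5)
Your argument is correct and is essentially the paper's proof: differentiate $R=N/D$ to get $R'(\hat\theta)=-v(\hat\theta)g(\hat\theta)[1/\hat\theta+R(\hat\theta)]/D(\hat\theta)$, then read off the sign of $R'$ from the sign of $v(\hat\theta)$, using $D<0$ and \Cref{lemma:1/theta+R}.

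Your side remark about the left endpoint has a sign error, though nothing depends on it. Under the pessimistic prior, $v(\theta)/\theta\le v(\theta)/\tilde\theta$ holds pointwise. This gives $N(\underline\theta)=\mathbb{E}[v(\theta)/\theta]<\mathbb{E}[v(\theta)]/\tilde\theta\le 0$, which is exactly why $\theta^\dagger>\underline\theta$. Hence $R\to+\infty$, not $-\infty$, as $\hat\theta\downarrow\underline\theta$; a limit of $-\infty$ would in fact contradict $R$ being decreasing there.

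Your ``degenerate case'' ($v\equiv 0$ on $[\hat\theta,\overline\theta]$) cannot occur for $\hat\theta<\overline\theta$. The assumption $\mathbb{E}[v(\theta)\mid\theta\geq\tilde\theta]>0$, together with monotonicity and continuity, forces $v>0$ near $\overline\theta$.
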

\begin{proof}
Differentiating $R$, we can derive
    \[
R'(\hat\theta)
=-v(\hat\theta)g(\hat\theta)[1/\hat\theta+R(\hat\theta)]\bigg/\bigg(\int_{\underline\theta}^{\hat\theta} v(\theta)g(\theta)\mathrm{d}\theta\bigg).
\]
Because $\mathbb{E}[v(\theta)]\leq0$, we have $\int_{\underline\theta}^{\hat\theta} v(\theta)g(\theta)\mathrm{d}\theta<0$ and $1/\hat\theta+R(\hat\theta)\geq0$ by \Cref{lemma:1/theta+R}. Since $v(\hat\theta)\geq(<)0$ if and only if $\hat\theta\geq(<)\tilde\theta$, the result follows.
\end{proof}

\paragraph{Equilibrium via Mutual Best Responses.}In sum, $(\tau,\hat\theta)\in\overline{\mathbb{R}}\times(\underline\theta,\overline\theta)$ is an interior equilibrium if and only if \Cref{eq:S-BR-app,eq:R-BR} hold when $\hat\theta\in[\theta^\dagger,\overline\theta)$, and instead \Cref{eq:S-BR-app} holds and $\tau=+\infty$ when $\hat\theta\in(\underline\theta,\theta^\dagger)$. The latter case is impossible because $\tau=+\infty$ implies $\hat\theta=\overline\theta$ according to \Cref{eq:S-BR-app}, contradicting to $\hat\theta\in(\underline\theta,\theta^\dagger)$. Therefore, we can focus on $\hat\theta\in[\theta^\dagger,\overline\theta)$. 

Define
    \[
    \widehat{T}(\hat\theta;\sigma):=T(\hat\theta;\sigma)/\sigma=f^{-1}(-\sigma R(\hat\theta)),\quad\forall\hat\theta\in[\theta^\dagger,\overline\theta).
    \]
Hence, interior equilibria are characterized by fixed points of $\widehat\Theta\circ\widehat{T}$ or $\widehat\Theta\circ(-\widehat{T})$. 

For completeness, the only corner equilibrium is $(\tau,\hat\theta)$ where $\hat\theta=\overline\theta$ (i.e., all types pooling at zero effort) and $\tau$ is such that \Cref{eq:R-BR} holds. In contrast, $\hat\theta=\underline\theta$ is never possible in equilibrium by the same argument as above for the impossibility of $\hat\theta\in(\underline\theta,\theta^\dagger)$. 

\paragraph{Proof of \Cref{prop:eqm}}
\begin{proof}
    For its simplicity, we omit the existence proof for the pooling equilibrium.
    For the uniqueness result in part (a), when $\sigma/\overline\theta>f(0)$, the RHS of \Cref{eq:S-BR-app} is always smaller than the LHS for all $\hat\theta\in[\underline\theta,\overline\theta]$---that is, all types prefer zero effort regardless of the approval standard---hence $\hat\theta=\overline\theta$ and the only equilibrium is the pooling equilibrium.
    
    We now prove part (b). First, provided that for $\hat\tau<0$, $\widehat\Theta(\hat\tau)=\inf\{\theta\in[\underline\theta,\overline\theta]:\hat\tau+f^{-1}(\sigma/\theta)\geq0\}=\sigma/f(-\hat\tau)$ by \Cref{lemma:S-BR-increasing} and $1/\hat\theta>-R(\hat\theta)$ for all $\hat\theta\in(\theta^\dagger,\overline\theta)$ by \Cref{lemma:1/theta+R}, we have
    \[\widehat\Theta(-\widehat{T}(\hat\theta;\sigma);\sigma)=\inf\left\{\theta\in[\underline\theta,\overline\theta]:-f^{-1}(-\sigma R(\hat\theta))+f^{-1}(\sigma/\theta)\geq0\right\}>\hat\theta.
    \]
Hence, $\widehat\Theta\circ(-\widehat{T})$ has no fixed point in $(\theta^\dagger,\overline\theta)$. 

Then, consider $\widehat\Theta\circ\widehat{T}$. Define
    \begin{equation}
    \label{eq:tildesigma}
    \tilde\sigma:=\inf\left\{\sigma>0:\widehat\Theta(\widehat{T}(\tilde\theta;\sigma);\sigma)\geq\tilde\theta\right\}
    \quad\text{and}\quad
    \tilde\precision:=1/\tilde\sigma
    \end{equation}
    as the threshold such that $\widehat{T}$ and $\widehat\Theta$ intersect to the left of $\tilde\theta$.\footnote{It can be verified that $-\tilde{\sigma}R(\tilde\theta)\leq\tilde{\sigma}/\tilde\theta\leq f(0)$. Hence, for any $\sigma\leq\tilde\sigma$, $\widehat{T}$ indeed captures the equilibrium condition for the principal; see \Cref{fn:valid}.}
    
    Since by \Cref{lemma:T-Ushaped}, $\widehat{T}$ is U-shaped with the minimum attained at $\hat\theta=\tilde\theta$ and $\widehat{T}(\underline\theta^\dagger)=+\infty$, and by \Cref{lemma:Theta-increasing}, $\widehat\Theta$ is increasing when $\hat\tau\geq0$, as long as $\widehat\Theta(\widehat{T}(\tilde\theta;\sigma);\sigma)\leq\tilde\theta$ holds, $\widehat\Theta\circ\widehat{T}$ has a unique fixed point in $(\theta^\dagger,\tilde\theta]$, denoted by $\hat\theta_-(\sigma)\in(\theta^\dagger,\tilde\theta]$.

Notice that $\widehat\Theta(\widehat{T}(\tilde\theta;\sigma);\sigma)\leq\tilde\theta$ if and only if
\begin{equation}
\frac{\sigma}{\tilde\theta}[f^{-1}(\sigma/\tilde\theta)+f^{-1}(-\sigma R(\tilde\theta))]\leq F(f^{-1}(\sigma/\tilde\theta))-F(-f^{-1}(-\sigma R(\tilde\theta))).  \notag
\end{equation}
The LHS of this condition is converging to zero as $\sigma\to0$, while the RHS is converging to $1$. Consider the threshold $\tilde\sigma$ defined in \Cref{eq:tildesigma}. Therefore, $\tilde{\sigma}>0$ and $\widehat\Theta(\widehat{T}(\tilde\theta;\tilde{\sigma});\tilde{\sigma})=\tilde\theta$. When $\sigma\leq\tilde\sigma$, it holds that $\widehat\Theta(\widehat{T}(\tilde\theta;\sigma);\sigma)\leq\tilde\theta$. It completes the proof for part (b).
\end{proof}

\paragraph{Proof of \Cref{prop:limit}}
\begin{proof}

    Suppose not, then there exist $\epsilon$ and a subsequence $(\sigma_{k_n},(\tau_{k_n},\hat\theta_{k_n}))$ such that $\hat\theta_{k_n}>\theta^\dagger+\epsilon$ and $\hat\theta_{k_n}<\overline\theta-\epsilon$ for all $n$. For any $\hat\theta\in(\theta^\dagger,\overline\theta)$, $\lim_{\sigma\to0}T(\hat{\theta};\sigma)=\lim_{\sigma\to0}\sigma f^{-1}(-\sigma R(\hat\theta))=0$ and $\lim_{\sigma\to0}\sigma f^{-1}(\sigma/\theta)=0$ because $-\sigma R(\hat\theta)>0$ and either $f^{-1}(z)$ is bounded with bounded noise or $\lim_{p\searrow0}pf^{-1}(p)=\lim_{z\nearrow\infty}zf(z)=0$ with unbounded noise. Hence, 
    \[
    0\leq\lim_{n\to\infty}\tau_{k_n}=\lim_{n\to\infty}T(\hat\theta_{k_n};\sigma_{k_n})\leq\lim_{n\to\infty}\min\left\{T(\theta^\dagger+\epsilon;\sigma_{k_n}),T(\overline\theta-\epsilon;\sigma_{k_n})\right\}=0,
    \]
    \[
    \text{and}\quad\quad\lim_{n\to\infty}\sigma_{k_n}f^{-1}(\sigma_{k_n}/\hat\theta_{k_n})=0.
    \]
    Moreover, $\lim_{n\to\infty}\tau_{k_n}/\sigma_{k_n}=\lim_{n\to\infty}\widehat{T}(\hat\theta_{k_n};\sigma_{k_n})=\lim_{n\to\infty}f^{-1}(-\sigma_{k_n} R(\hat\theta_{k_n}))>0$.
    
    However, this leads to a violation of the indifference condition of type $\hat\theta_{k_n}$:  
    \begin{align*}
        \lim_{n\to\infty}& \left\{F(f^{-1}(\sigma_{k_n}/\hat\theta_{k_n}))-[\tau_{k_n}+\sigma_{k_n}f^{-1}(\sigma_{k_n}/\hat\theta_{k_n})]/\hat\theta_{k_n}\right\}=1\\
        &>F(0)>\lim_{n\to\infty} F(-\tau_{k_n}/\sigma_{k_n})=\lim_{n\to\infty} F_{\sigma_{k_n}}(0-\tau_{k_n}).
    \end{align*}
    As a result, either $\lim_{k\to\infty}\hat\theta_k=\theta^\dagger$ or $\lim_{k\to\infty}\hat\theta_k=\overline\theta$.
\end{proof}
The convergence of the sequence of the semi-separating equilibria $(\hat\tau_-(\sigma),\hat\theta_-(\sigma))$ is implied by the monotonicity of $\hat\tau_-(\sigma)$ and $\hat\theta_-(\sigma)$ in \Cref{lemma:increasing-indifference}, which is proved below. Since $\hat\theta_-(\sigma)$ is increasing in $\sigma$, it must converge to $\theta^\dagger$.

\subsection{Proofs from \Cref{sect:cs}}
\label{app:cs-proof}
This section contains the proofs for the comparative statics in \Cref{sect:cs}. Recall that comparative statics in $\sigma$ and $\precision$ always have opposite signs, so results in one translate directly to the other.

We first introduce a useful property of log-concave pdfs $f$ in \Cref{lemma:log-concave}. Our comparative statics results rely on the log-concavity of $f$.
\begin{lemma}
\label{lemma:log-concave}
    If $f$ is log-concave, then for any $x>x'$, 
    \[
    \frac{f'(x')}{f(x')}\geq\frac{f(x)-f(x')}{F(x)-F(x')}\geq\frac{f'(x)}{f(x)}.
    \]
    The inequalities are strict if $f'(x)<0<f'(x')$.
\end{lemma}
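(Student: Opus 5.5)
The plan is to derive both inequalities from a single mean-value identity. Write
\[
\frac{f(x)-f(x')}{F(x)-F(x')}=\frac{\int_{x'}^{x} f'(z)\,\mathrm{d}z}{\int_{x'}^{x} f(z)\,\mathrm{d}z}=\frac{\int_{x'}^{x} \frac{f'(z)}{f(z)}\,f(z)\,\mathrm{d}z}{\int_{x'}^{x} f(z)\,\mathrm{d}z}.
\]
This expresses the middle term as a weighted average of the score $f'/f$ over $[x',x]$, with the positive weights $f(z)/\int_{x'}^x f$. The identity is valid because $f>0$ everywhere by the unbounded-support part of \Cref{as:noise}. Near the possible kink at $0$, $f$ is still absolutely continuous, since it is continuous and piecewise $C^2$, so the fundamental theorem of calculus applies.

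Log-concavity says exactly that $(\log f)'=f'/f$ is weakly decreasing. Where $0$ is a kink, I will use one-sided derivatives, and these are ordered consistently: the left derivative is at least the right derivative. Hence $f'(z)/f(z)\le f'(x')/f(x')$ and $f'(z)/f(z)\ge f'(x)/f(x)$ for all $z\in(x',x)$. A weighted average lies between the sup and inf of the averaged quantity, which gives
\[
\frac{f'(x')}{f(x')}\ge\frac{f(x)-f(x')}{F(x)-F(x')}\ge\frac{f'(x)}{f(x)}.
\]

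For strictness, suppose $f'(x)<0<f'(x')$. By continuity of $f'/f$ near the endpoints (away from the kink), there is a neighborhood of $x'$ inside $(x',x)$ on which $f'/f>0$. Because the score is monotone, $f'/f$ is then strictly below $f'(x')/f(x')$ on a set of positive measure, or else it is constant on an interval. I need to rule out the constant case near $x'$. My argument is that $f'/f$ must decrease from a positive value at $x'$ to a negative value at $x$, so it is not constant on the whole interval. Since the score is monotone, the set where $f'(z)/f(z)<f'(x')/f(x')$ is a nonempty interval $(z_0,x)$ with $z_0<x$, and it has positive $f$-weight. The strict inequality on the left follows. The argument for the right inequality is symmetric: the score exceeds $f'(x)/f(x)<0$ on an interval near $x'$ of positive weight.

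The only delicate point is the kink at $0$ for distributions such as the Laplace noise, where $f'(0)$ is undefined. I handle it by interpreting the scores as one-sided derivatives and noting that the integral identity does not see a single point. I expect no other obstacle.
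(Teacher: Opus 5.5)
Your proposal is correct and follows the paper's proof: both write $f(x)-f(x')=\int_{x'}^{x}\frac{f'(z)}{f(z)}f(z)\,\mathrm{d}z$ and bound the score $f'/f$ by its endpoint values, using that the score is monotone under log-concavity. Your handling of strictness and of the possible kink at $0$ is a more careful version of the paper's one-line remark; for strictness, you use a positive-measure interval near an endpoint on which the score has the opposite sign.
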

\begin{proof}
    For any any $x>x'$, we have
\begin{align*}
    f(x')[f(x)-f(x')]=&f(x')\int_{x'}^xf'(z)\mathrm{d}z=f(x')\int_{x'}^x\frac{f'(z)}{f(z)}f(z)\mathrm{d}z\\
    \leq&f(x')\int_{x'}^{x}\frac{f'(x')}{f(x')}f(z)\mathrm{d}z\\
    =&f'(x')\int_{x'}^{x}f(z)\mathrm{d}z=f'(x')[F(x)-F(x')],
\end{align*}
where the inequality is because $f'/f$ is decreasing by log-concavity of $f$, and we have strict inequality if either $f$ is strictly log-concave or $f'(x)<0<f'(x')$.

The other part can be derived symmetrically.
\end{proof}

\paragraph{Proof of \Cref{lemma:increasing-indifference}}

\begin{proof}
We first establish that both $\hat\theta_-(\sigma)$ and $\hat\tau_-(\sigma)$ are well-behaved.
\begin{lemma}
\label{lemma:well-behaved}
    $\hat\theta_-(\sigma)$ and $\hat\tau_-(\sigma)$ are twice continuously differentiable in $\sigma\in(0,\tilde\sigma]$.
\end{lemma}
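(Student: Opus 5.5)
The plan is to write the semi-separating equilibrium as the solution of a two-equation system in $(\hat\tau,\hat\theta)$ with parameter $\sigma$, and then apply the Implicit Function Theorem. In normalized variables, the equilibrium conditions are
\begin{align*}
\Phi_1(\hat\tau,\hat\theta;\sigma)&:=F(f^{-1}(\sigma/\hat\theta))-[\hat\tau+f^{-1}(\sigma/\hat\theta)]\sigma/\hat\theta-F(-\hat\tau)=0,\\
\Phi_2(\hat\tau,\hat\theta;\sigma)&:=f(\hat\tau)+\sigma R(\hat\theta)=0.
\end{align*}
The first is $Q(\hat\theta,\hat\tau)=0$ from the proof of \Cref{lemma:S-BR-increasing}. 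The second is \Cref{eq:R-BR} divided through, i.e.\ $\hat\tau=\widehat T(\hat\theta;\sigma)$, and it picks the branch $\hat\tau>0$.

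For $\sigma\in(0,\tilde\sigma]$, the proof of \Cref{prop:eqm} gives a unique fixed point with $\hat\theta_-(\sigma)\in(\theta^\dagger,\tilde\theta]$ and $\hat\tau_-(\sigma)>0$. On this region we have $0<\sigma/\hat\theta<f(0)$, so $f^{-1}(\sigma/\hat\theta)\in(0,\infty)$. Also $-\sigma R(\hat\theta)\in(0,f(0))$, using \Cref{lemma:1/theta+R} for the upper bound on the valid range.

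\textbf{Smoothness of $\Phi$.} Since $f$ is $C^2$ away from $0$ and strictly decreasing on $(0,\infty)$, $f^{-1}$ is $C^2$ on $(0,f(0))$ provided $f'\neq0$ there. Strict diminishing bad luck plus log-concavity should give $f'<0$ on $(0,\infty)$; I will verify that $f'$ does not vanish at the relevant points, which holds since those points are interior to the tail. $R$ is $C^2$ on $(\theta^\dagger,\overline\theta)$ because $v$ and $g$ are $C^1$: the numerator and denominator of $R$ are then $C^2$, and the denominator is negative.

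\textbf{Jacobian.} Using the derivatives already computed,
\[
\partial_{\hat\theta}\Phi_1=[\hat\tau+f^{-1}(\sigma/\hat\theta)]\sigma/\hat\theta^2>0,\qquad \partial_{\hat\tau}\Phi_1=-(\sigma/\hat\theta-f(\hat\tau))<0 .
\]
The sign of $\partial_{\hat\tau}\Phi_1$ follows from the argument in \Cref{lemma:S-BR-increasing}. Next, $\partial_{\hat\tau}\Phi_2=f'(\hat\tau)<0$ for $\hat\tau>0$, and $\partial_{\hat\theta}\Phi_2=\sigma R'(\hat\theta)<0$ for $\hat\theta<\tilde\theta$ by \Cref{lemma:R-Ushaped}. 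The determinant is
\[
\partial_{\hat\tau}\Phi_1\,\partial_{\hat\theta}\Phi_2-\partial_{\hat\theta}\Phi_1\,\partial_{\hat\tau}\Phi_2 .
\]
Its first term is positive and its second term is negative, so the determinant is strictly positive. Geometrically, the increasing curve $\widehat\Theta$ crosses the strictly decreasing branch of $\widehat T$ transversally.

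The Implicit Function Theorem, together with uniqueness of the fixed point, then gives a $C^2$ local solution that must coincide with $(\hat\tau_-(\sigma),\hat\theta_-(\sigma))$. Hence the equilibrium map is $C^2$ on $(0,\tilde\sigma)$.

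\textbf{Main obstacle: the endpoint $\sigma=\tilde\sigma$.} There $\hat\theta_-=\tilde\theta$, so $R'(\tilde\theta)=0$ because $v(\tilde\theta)=0$. The determinant then reduces to $-\partial_{\hat\theta}\Phi_1 f'(\hat\tau)$, which is still strictly positive. So the IFT applies at $\tilde\sigma$ as well, giving a $C^2$ extension that yields one-sided differentiability. I would check that $R$ remains $C^2$ across $\tilde\theta$, which holds since $v,g\in C^1$.

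A secondary point is confirming that $f^{-1}$ is twice differentiable at $\sigma/\hat\theta$, which requires $f'(f^{-1}(\sigma/\hat\theta))\neq0$. If log-concavity allowed a flat region of $f$, I would invoke the strictness in \Cref{as:noise}(3) applied via symmetry, which rules this out.
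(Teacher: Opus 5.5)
Your proof is correct and is essentially the paper's argument. The paper applies the Implicit Function Theorem to the one-dimensional map $P(\hat\theta,\sigma)=\widehat\Theta(\widehat{T}(\hat\theta;\sigma);\sigma)-\hat\theta$ and uses $\partial P/\partial\hat\theta=\widehat\Theta'\widehat{T}'-1<0$. That quantity equals your Jacobian determinant divided by $\partial_{\hat\theta}\Phi_1\,\partial_{\hat\tau}\Phi_2<0$, so the transversality condition (increasing $\widehat\Theta$ against non-increasing $\widehat{T}$, including at $\tilde\sigma$ where $R'(\tilde\theta)=0$) is the same.

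Working with the two-equation system gives $C^2$ regularity of $(\hat\tau_-,\hat\theta_-)$ directly and makes the $f'\neq0$ and range-of-$f^{-1}$ checks explicit. One point to tighten: at $\sigma=\tilde\sigma$ you must still show that the IFT branch coincides with $\hat\theta_-(\sigma)$ for $\sigma<\tilde\sigma$. Uniqueness on $(\theta^\dagger,\tilde\theta]$ alone does not rule out the branch lying above $\tilde\theta$; the paper closes this by invoking continuity of the unique fixed point.
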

\begin{proof}
First, $\widehat{T}\in\mathcal{C}^2((\theta^\dagger,\overline\theta)\times(0,\tilde\sigma])$ since both $f^{-1}$ (except at $f(0)$) and $R$ are twice continuously differentiable. 
Second, since $f\in\mathcal{C}^2(\mathbb{R})$ and $\widehat\Theta$ is the implicit function defined by \Cref{eq:S-BR-app}, $\widehat\Theta$ is $\mathcal{C}^2$ over $\big\{(\hat\tau,\sigma)\in\overline{\mathbb{R}}\times(0,\tilde\sigma]:\widehat\Theta(\hat\tau;\sigma)\in(\underline\theta,\overline\theta)\big\}$ (an open set in $\mathbb{R}^2$ with compact closure, by monotonicity of $\widehat\Theta$). Therefore, as the unique fixed point of 
$\widehat\Theta\circ\widehat{T}$ in $(\theta^\dagger,\tilde\theta]$, $\hat\theta_-$ and hence $\hat\tau_-$ are continuous in $\sigma$. 

To show that $\hat\theta_-$ is continuously differentiable, it suffices by the Implicit Function Theorem to verify that the partial derivative of $P(\hat\theta,\sigma):=\widehat\Theta(\widehat{T}(\hat\theta;\sigma);\sigma)-\hat\theta$ with respect to $\hat\theta$ is nonzero at $\hat\theta=\hat\theta_-(\sigma)$:
\begin{equation}
\label{eq:dP/dtheta}
\frac{\partial P}{\partial\hat\theta}\big|_{\hat\theta=\hat\theta_-(\sigma)}=\left[\widehat\Theta'(\widehat{T}(\hat\theta;\sigma))\widehat{T}'(\hat\theta)-1\right]\big|_{\hat\theta=\hat\theta_-(\sigma)}<0,
\end{equation}
since $\widehat{T}'(\hat\theta)\leq0$ for $\hat\theta\leq\tilde\theta$ by \Cref{lemma:R-Ushaped} and $\widehat\Theta'\geq0$ by \Cref{lemma:S-BR-increasing} provided that $\widehat{T}\geq0$. As a result, $\hat\theta_-$ and $\hat\tau_-$ are continuously differentiable in $\sigma\in(0,\tilde\sigma]$. Since $\widehat{T}$ and $\hat\theta$ are twice continuously differentiable, so are $\hat\theta_-$ and $\hat\tau_-$.
\end{proof}

In particular, with $P(\hat\theta,\sigma)=\widehat\Theta(\widehat{T}(\hat\theta;\sigma);\sigma)-\hat\theta$,
\begin{equation}
\label{eq:dhattheta/dsigma}
\frac{\mathrm{d}\hat\theta_-}{\mathrm{d}\sigma}=-\frac{\partial P/\partial \sigma}{\partial P/\partial \hat\theta}\Big|_{\hat\theta=\hat\theta_-(\sigma)}=\frac{\frac{\partial\widehat\Theta}{\partial\sigma}+\widehat\Theta'\frac{\partial\widehat{T}}{\partial\sigma}}{1-\widehat\Theta'\widehat{T}'}\Big|_{\hat\theta=\hat\theta_-(\sigma)}
\end{equation}
where $\widehat\Theta'$ and $\widehat{T}'$ denote $\partial\widehat\Theta/\partial\hat\tau$ and $\partial\widehat{T}/\partial\hat\theta$. Given that $(\partial P/\partial \hat\theta)|_{\hat\theta=\hat\theta_-(\sigma)}<0$ in (\ref{eq:dP/dtheta}), to show $\hat\theta_-(\sigma)$ is strictly increasing in $\sigma\in(0,\tilde\sigma]$, it thus suffices to prove
\[
\left[\frac{\partial\widehat\Theta}{\partial\sigma}+\widehat\Theta'\frac{\partial\widehat{T}}{\partial\sigma}\right]\Big|_{\hat\theta=\hat\theta_-(\sigma)}>0.
\]
In detail, we have $\partial\widehat\Theta/\partial\sigma|_{\hat\theta=\widehat\Theta(\hat\tau)}=\hat\theta/\sigma$,
\[
\widehat\Theta'\big|_{\hat\theta=\widehat\Theta(\hat\tau)}=\frac{\sigma/\hat\theta-f(-\hat\tau)}{[\hat\tau+f^{-1}(\sigma/\hat\theta)]\sigma/\hat\theta^2}=\hat\theta\cdot\frac{f(f^{-1}(\sigma/\hat\theta))-f(-\hat\tau)}{F(f^{-1}(\sigma/\hat\theta))-F(-\hat\tau)}
\]
\[
\text{and}\quad\frac{\partial\widehat{T}}{\partial\sigma}\Big|_{\hat\tau=\widehat{T}(\hat\theta)}=\frac{- R(\hat\theta)}{f'(f^{-1}(-\sigma R(\hat\theta)))}\Big|_{\hat\tau=\widehat{T}(\hat\theta)}=\frac{(1/\sigma)f(-\hat\tau)}{-f'(-\hat\tau)},
\]
where the last equality is by symmetry of $f$ and $\hat\tau=\widehat{T}(\hat\theta)=f^{-1}(-\sigma R(\hat\theta))$.
Hence,
\begin{equation}
\label{eq:dhattheta/dsigma-num}
\left[\frac{\partial\widehat\Theta}{\partial\sigma}+\widehat\Theta'\frac{\partial\widehat{T}}{\partial\sigma}\right]\Big|_{\hat\theta=\hat\theta_-(\sigma)}=\frac{\hat\theta}{\sigma}\left[1-\frac{f(f^{-1}(\sigma/\hat\theta))-f(-\hat\tau)}{F(f^{-1}(\sigma/\hat\theta))-F(-\hat\tau)}\cdot\frac{f(-\hat\tau)}{f'(-\hat\tau)}\right]\Big|_{\hat\theta=\hat\theta_-(\sigma),\hat\tau=\widehat{T}(\hat\theta)}>0,
\end{equation}
by \Cref{lemma:log-concave} and that $f^{-1}(\sigma/\hat\theta)>0>-\hat\tau$ (with $f'(-\hat\tau)>0>f'(f^{-1}(\sigma/\hat\theta))$).

As for $\hat\tau_-(\sigma)$, since $\hat\tau_-(\sigma)=\widehat{T}(\hat\theta_-(\sigma);\sigma)$, we have
\begin{equation}
\label{eq:dhattau/dsigma}
    \frac{\mathrm{d}\hat\tau_-}{\mathrm{d}\sigma}=\frac{-R(\hat\theta_-)-\sigma R'(\hat\theta_-)\frac{\mathrm{d}\hat\theta_-}{\mathrm{d}\sigma}}{f'(f^{-1}(-\sigma R(\hat\theta_-)))}<0\quad \text{given that }\hat\theta_-(\sigma)\in(\theta^\dagger,\tilde\theta].
\end{equation}
    This completes the proof.
\end{proof}

\paragraph{Proof of \Cref{prop:non-monotone}}
\begin{proof}
By symmetry of $f$, the principal's payoff $V$ can be rewritten as
\begin{align*}
V(\sigma)=\int_{\underline\theta}^{\hat\theta_-(\sigma)} v(\theta)g(\theta)\mathrm{d}\theta\cdot F(-\hat\tau_-(\sigma))+\int_{\hat\theta_-(\sigma)}^{\overline\theta} v(\theta)g(\theta)F(f^{-1}(\sigma/\theta))\mathrm{d}\theta.
\end{align*}
    By \Cref{lemma:well-behaved}, $V(\sigma)$ should also be continuously differentiable. Hence,
    \begin{align*}
        V'(\sigma)=&-\int_{\underline\theta}^{\hat\theta_-(\sigma)} v(\theta)g(\theta)\mathrm{d}\theta\cdot f(0-\hat\tau_-(\sigma))\frac{\mathrm{d}\hat\tau_-}{\mathrm{d}\sigma}+\int_{\hat\theta_-(\sigma)}^{\overline\theta} v(\theta)g(\theta)/\theta\frac{\sigma/\theta}{f'(f^{-1}(\sigma/\theta))}\mathrm{d}\theta\\
        &+v(\hat\theta_-(\sigma))g(\hat\theta_-(\sigma))\left[F(0-\hat\tau_-(\sigma))-F(f^{-1}(\sigma/\hat\theta_-(\sigma)))\right]\frac{\mathrm{d}\hat\theta_-}{\mathrm{d}\sigma}\\
        =&:A_1(\sigma)+A_2(\sigma)+A_3(\sigma).
    \end{align*}
    In particular, $A_1,A_2$ and $A_3$ are all continuous in $\sigma$.
    
    For part (a), notice that $A_1<0$ since $\frac{\mathrm{d}\hat\tau_-}{\mathrm{d}\sigma}<0$ by \Cref{lemma:increasing-indifference}. And when $\sigma=\tilde\sigma$ and thus $\hat\theta_-(\sigma)=\tilde\theta$, $A_2<0$ and $A_3=0$. As a result, $V'(\tilde\sigma)<0$.

    For part (b), first notice that when $\sigma<\tilde\sigma$, $A_3>0$ since $\frac{\mathrm{d}\hat\theta_-}{\mathrm{d}\sigma}>0$ by \Cref{lemma:increasing-indifference} and $v(\hat\theta_-(\sigma))<0$ provided that $\hat\theta_-(\sigma)<\tilde\theta$. Next, we establish two claims: (1) $\lim_{\sigma\to0}A_1/A_3=0$ and (2) $\lim_{\sigma\to0}A_2/A_3\geq0$. With (1) and (2), given $A_3>0$ and continuity of $A_3$, we conclude that $\underline\sigma>0$ exists such that $V'(\sigma)>0$ for $\sigma\in(0,\underline\sigma)$.

    It is useful to first establish the below lemma that $\hat\theta_-(\sigma)$ converges to $\theta^\dagger$ at a faster rate than $\frac{\mathrm{d}\hat\theta_-}{\mathrm{d}\sigma}$ converges to 0. (Recall $\lim_{\sigma\to0}\hat\theta_-(\sigma)=\theta^\dagger$ by \Cref{prop:limit}.)
    \begin{lemma}
        \label{lm:dtheta/dsigma-order}
        It holds that 
        $
        \lim_{\sigma\to0}\frac{R(\hat\theta_-(\sigma))}{\mathrm{d}\hat\theta_-/\mathrm{d}\sigma}=0
        $, and thus
        $
        \lim_{\sigma\to0}\frac{\hat\theta_-(\sigma)-\theta^\dagger}{\mathrm{d}\hat\theta_-/\mathrm{d}\sigma}=0.
        $
    \end{lemma}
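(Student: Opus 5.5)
The plan is to extract the asymptotic order of $\mathrm{d}\hat\theta_-/\mathrm{d}\sigma$ from the explicit formula \eqref{eq:dhattheta/dsigma}. We will show that it blows up at least like a constant times $1/\sigma$ as long as $R(\hat\theta_-)$ does not vanish too slowly, and that in all cases the ratio $R(\hat\theta_-)/(\mathrm{d}\hat\theta_-/\mathrm{d}\sigma)$ is bounded by a sum of terms each carrying a vanishing factor.

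\textbf{Preliminary limits.} Write $\hat\theta=\hat\theta_-(\sigma)$, $\hat\tau=\hat\tau_-(\sigma)=f^{-1}(-\sigma R(\hat\theta))$, and $a:=f^{-1}(\sigma/\hat\theta)$. By \Cref{prop:limit} and \Cref{lemma:increasing-indifference}, $\hat\theta\to\theta^\dagger\in(\underline\theta,\tilde\theta)$. Since $R(\theta^\dagger)=0$ and $R$ is $\mathcal{C}^1$ near $\theta^\dagger$, $R(\hat\theta)\to0^-$ and $R'(\hat\theta)\to R'(\theta^\dagger)$. The latter is finite and strictly negative by the formula in the proof of \Cref{lemma:R-Ushaped}, because $v(\theta^\dagger)<0$ and $1/\theta^\dagger+R(\theta^\dagger)>0$. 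Hence $-\sigma R\to0$ and $\sigma/\hat\theta\to0$, so $\hat\tau\to\infty$ and $a\to\infty$, and therefore $F(a)-F(-\hat\tau)\to1$.

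\textbf{Tail bound.} The key tail fact comes from log-concavity and strictly diminishing bad luck (\Cref{as:noise}): $f'/f$ is decreasing and $f'(1)<0$. Hence, for $z\geq1$,
\[
c(z):=-f'(z)/f(z)\geq c_0:=-f'(1)/f(1)>0,
\]
and by symmetry $f(-z)/f'(-z)=1/c(z)\leq 1/c_0$.

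\textbf{Numerator of \eqref{eq:dhattheta/dsigma}.} By \eqref{eq:dhattheta/dsigma-num} it equals
\[
\frac{\hat\theta}{\sigma}\left[1-\frac{\sigma/\hat\theta+\sigma R}{F(a)-F(-\hat\tau)}\cdot\frac{1}{c(\hat\tau)}\right].
\]
The subtracted term is $O(\sigma)$ by the tail bound, so the numerator equals $\frac{\theta^\dagger}{\sigma}(1+o(1))$.

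\textbf{Denominator of \eqref{eq:dhattheta/dsigma}.} We have $\widehat\Theta'=\hat\theta\,\sigma(1/\hat\theta+R)/(F(a)-F(-\hat\tau))=O(\sigma)$. Differentiating $\widehat T=f^{-1}(-\sigma R)$ gives $\widehat T'=-\sigma R'/f'(\hat\tau)$. Using $f(\hat\tau)=-\sigma R$,
\[
\widehat T'=\frac{-\sigma R'}{-c(\hat\tau)f(\hat\tau)}=\frac{R'}{-R\,c(\hat\tau)}\leq0 .
\]
Hence
\[
1-\widehat\Theta'\widehat T'=1+O(\sigma)\cdot\frac{|R'|}{|R|\,c(\hat\tau)}\leq 1+K\frac{\sigma}{|R(\hat\theta)|}
\]
for some constant $K$, using $c\geq c_0$ and boundedness of $R'$.

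\textbf{Conclusion.} Combining the two pieces,
\[
\left|\frac{R(\hat\theta_-)}{\mathrm{d}\hat\theta_-/\mathrm{d}\sigma}\right|\leq\frac{\sigma(1+o(1))}{\theta^\dagger}\left(|R(\hat\theta_-)|+K\sigma\right)\to0 .
\]
For the second statement, the mean value theorem gives $R(\hat\theta_-)=R'(\xi)(\hat\theta_--\theta^\dagger)$ with $R'(\xi)\to R'(\theta^\dagger)\neq0$. So $(\hat\theta_--\theta^\dagger)/(\mathrm{d}\hat\theta_-/\mathrm{d}\sigma)$ also tends to $0$.

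\textbf{Main obstacle.} The delicate step is the denominator. The factor $\widehat T'$ is unbounded because it is proportional to $1/R$, and the argument only works because it is multiplied by $\widehat\Theta'=O(\sigma)$. Controlling $f'(\hat\tau)$ relative to $f(\hat\tau)$ uniformly in the tail requires the lower bound $c(\hat\tau)\geq c_0>0$ from log-concavity. I would verify that bound carefully, in particular that it needs no extra tail regularity beyond \Cref{as:noise}.
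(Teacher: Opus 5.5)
Your proof is correct and follows essentially the paper's route. Both substitute the implicit-function formula for $\mathrm{d}\hat\theta_-/\mathrm{d}\sigma$ into the ratio and control it with the same ingredients: $\widehat\Theta'=O(\sigma)$, the log-concavity lower bound on $-f'/f$ at $\hat\tau\to\infty$, and $R'(\theta^\dagger)\neq0$ for the second claim. The difference is only presentational. You bound the numerator and denominator of $\mathrm{d}\hat\theta_-/\mathrm{d}\sigma$ separately, using the indifference identity $F(a)-F(-\hat\tau)=\sigma(\hat\tau+a)/\hat\theta\to1$, which gives the explicit rate $O(\sigma|R|+\sigma^2)$. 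The paper instead combines everything into one fraction and uses $\tau_-(\sigma)\to\theta^\dagger$ to show that its numerator vanishes while its denominator stays bounded away from zero.
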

    \begin{proof}
    According to the proof of \Cref{lemma:increasing-indifference}, in particular, \Cref{eq:dhattheta/dsigma,eq:dhattheta/dsigma-num},
        \begin{align*}
\frac{R(\hat\theta_-(\sigma))}{\mathrm{d}\hat\theta_-/\mathrm{d}\sigma}&=\frac{-f(-\hat\tau)}{\sigma}\frac{1-\frac{\sigma/\hat\theta-f(-\hat\tau)}{[\hat\tau+f^{-1}(\sigma/\hat\theta)]\sigma/\hat\theta^2}\frac{\sigma R'(\hat\theta)}{f'(-\hat\tau)}}{\frac{\hat\theta}{\sigma}-\frac{\hat\theta}{\sigma}\frac{\sigma/\hat\theta-f(-\hat\tau)}{[\hat\tau+f^{-1}(\sigma/\hat\theta)]\sigma/\hat\theta}\cdot\frac{f(-\hat\tau)}{f'(-\hat\tau)}}\\
&=\frac{(\sigma/\hat\theta^2)[\hat\tau+f^{-1}(\sigma/\hat\theta)]f'(-\hat\tau)-[\sigma/\hat\theta-f(-\hat\tau)]\sigma R'(\hat\theta)}{-\frac{\sigma}{\hat\theta}[\hat\tau+f^{-1}(\sigma/\hat\theta)] \frac{f'(-\hat\tau)}{f(-\hat\tau)}+[\sigma/\hat\theta-f(-\hat\tau)]}.
\end{align*}
The second term in the numerator clearly converges to zero as $\sigma\to0$; the first term does as well, since $\sigma\hat\tau_-(\sigma)=\tau_-(\sigma)\to\theta^\dagger$, $(\sigma/\hat\theta_-(\sigma))f^{-1}(\sigma/\hat\theta_-(\sigma))\to0$, and $f'(-\hat\tau_-(\sigma))\to0$. As for the denominator, the second term clearly converges to zero, while the first term is bounded away from zero: $\frac{\sigma}{\hat\theta}[\hat\tau+f^{-1}(\sigma/\hat\theta)]\to\frac{\theta^\dagger}{\theta^\dagger}+0=1$ and $f'(-\hat\tau)/f(-\hat\tau)$ is increasing as $\sigma\to0$ and hence bounded below by some $c>0$ (e.g., $c=f'(-1)/f(-1)$).

As a result, $\lim_{\sigma\to0}\frac{R(\hat\theta_-(\sigma))}{\mathrm{d}\hat\theta_-/\mathrm{d}\sigma}=0$. Moreover, because $R'(\theta^\dagger)\ne0$,  $R(\hat\theta_-(\sigma))$ is of the same order as $\hat\theta_-(\sigma)-\theta^\dagger$, therefore we also have $\lim_{\sigma\to0}\frac{\hat\theta_-(\sigma)-\theta^\dagger}{\mathrm{d}\hat\theta_-/\mathrm{d}\sigma}=0$.
    \end{proof}

    \begin{claim}
    \label{claim:a1}
    $\lim_{\sigma\to0}A_1(\sigma)/A_3(\sigma)=0$.
    \end{claim}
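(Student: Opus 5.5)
The plan is to write $A_1$ and $A_3$ explicitly in terms of $R(\hat\theta_-)$, $\mathrm{d}\hat\theta_-/\mathrm{d}\sigma$ and bounded factors. The ratio then splits into a term controlled by \Cref{lm:dtheta/dsigma-order} and a term that carries an explicit factor $\sigma$.

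\textbf{Denominator $A_3$.} As $\sigma\to0$:
\begin{itemize}
\item $\hat\tau_-(\sigma)\to\infty$. This follows from \Cref{lemma:increasing-indifference} together with $\hat\tau_-=f^{-1}(-\sigma R(\hat\theta_-))$, since $-\sigma R(\hat\theta_-)\to0$. Hence $F(-\hat\tau_-)\to0$.
\item $F(f^{-1}(\sigma/\hat\theta_-))\to1$, because $f^{-1}(\sigma/\hat\theta_-)\to\infty$.
\end{itemize}
So the bracket in $A_3$ tends to $-1$. By \Cref{prop:limit}, $\hat\theta_-\to\theta^\dagger<\tilde\theta$, so $v(\hat\theta_-)g(\hat\theta_-)\to v(\theta^\dagger)g(\theta^\dagger)<0$. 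Therefore
\[
A_3(\sigma)=\kappa(\sigma)\,\frac{\mathrm{d}\hat\theta_-}{\mathrm{d}\sigma},
\]
where $\kappa(\sigma)\to -v(\theta^\dagger)g(\theta^\dagger)>0$, i.e. $\kappa$ is bounded away from zero.

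\textbf{Numerator $A_1$.} Substitute \Cref{eq:dhattau/dsigma} for $\mathrm{d}\hat\tau_-/\mathrm{d}\sigma$. By symmetry, $f'(f^{-1}(-\sigma R))=f'(\hat\tau_-)=-f'(-\hat\tau_-)$, so
\[
A_1(\sigma)=-\Big(\int_{\underline\theta}^{\hat\theta_-}vg\,\mathrm{d}\theta\Big)\frac{f(-\hat\tau_-)}{f'(-\hat\tau_-)}\Big[R(\hat\theta_-)+\sigma R'(\hat\theta_-)\frac{\mathrm{d}\hat\theta_-}{\mathrm{d}\sigma}\Big].
\]
The integral is bounded. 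The ratio $f(-\hat\tau_-)/f'(-\hat\tau_-)$ is bounded above by $1/c$ once $\hat\tau_-\ge1$. This is by log-concavity: $f'/f$ is decreasing, so on $(-\infty,-1]$ it is at least $c:=f'(-1)/f(-1)>0$, exactly as in the proof of \Cref{lm:dtheta/dsigma-order}. Moreover $R'$ is continuous near $\theta^\dagger$ and hence bounded.

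\textbf{Ratio and conclusion.} Dividing,
\[
\Big|\frac{A_1}{A_3}\Big|\le \frac{C}{\kappa(\sigma)}\Big[\Big|\frac{R(\hat\theta_-(\sigma))}{\mathrm{d}\hat\theta_-/\mathrm{d}\sigma}\Big|+\sigma|R'(\hat\theta_-(\sigma))|\Big]
\]
for some constant $C$. The first term tends to zero by \Cref{lm:dtheta/dsigma-order}. The second tends to zero since $\sigma\to0$ and $R'$ is bounded. Since $\kappa$ stays bounded away from zero, the claim follows. The bound can even be sharpened by the vanishing factor $f(-\hat\tau_-)\to0$ if we avoid dividing by $f'$, but this is not needed.

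\textbf{Main obstacle.} The only delicate step is the uniform bound on $f(-\hat\tau)/f'(-\hat\tau)$ in the tail. For the normal distribution this ratio even vanishes, and for the exponential it is constant. In general the bound relies solely on monotonicity of $f'/f$ and on $f'(-1)>0$. The latter follows from strictly diminishing bad luck, which gives $f$ strictly increasing on the negatives; combined with log-concavity this forces $f'(-1)>0$. I would check this point carefully.
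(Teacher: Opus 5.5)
Your proof is correct and follows essentially the same route as the paper. Both substitute \Cref{eq:dhattau/dsigma} into $A_1$, bound $f(-\hat\tau_-)/f'(-\hat\tau_-)$ using the monotonicity of $f'/f$, and reduce the claim to $R(\hat\theta_-)/(\mathrm{d}\hat\theta_-/\mathrm{d}\sigma)\to0$ from \Cref{lm:dtheta/dsigma-order}, with the $\sigma R'$ term vanishing trivially. Your explicit checks that the $A_3$ prefactor tends to $-v(\theta^\dagger)g(\theta^\dagger)>0$ and that $f'(-1)>0$ fill in details the paper leaves implicit. Your aside about ``avoiding division by $f'$'' is moot, since $\mathrm{d}\hat\tau_-/\mathrm{d}\sigma$ inherently carries the factor $1/f'(-\hat\tau_-)$, but nothing in your argument relies on it.
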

    \begin{proof}
    It suffices to show $\lim_{\sigma\to0}f(0-\hat\tau_-(\sigma))\frac{\mathrm{d}\hat\tau_-/\mathrm{d}\sigma}{\mathrm{d}\hat\theta_-/\mathrm{d}\sigma}=0$ since other terms are all bounded from both zero and infinity. According to \Cref{eq:dhattau/dsigma}, 
    \[
    \lim_{\sigma\to0}f(0-\hat\tau_-(\sigma))\frac{\mathrm{d}\hat\tau_-/\mathrm{d}\sigma}{\mathrm{d}\hat\theta_-/\mathrm{d}\sigma}=\lim_{\sigma\to0}\frac{f(-\hat\tau_-(\sigma))}{f'(-\hat\tau_-(\sigma))}\left[\frac{R(\hat\theta_-(\sigma))}{\mathrm{d}\hat\theta_-/\mathrm{d}\sigma}+\sigma R'(\hat\theta_-(\sigma))\right],
    \]
    where $\frac{f(-\hat\tau_-(\sigma))}{f'(-\hat\tau_-(\sigma))}$ is bounded away from infinity when $\sigma$ is small since this ratio is positive and increasing in $\sigma$, and $R'(\hat\theta_-(\sigma))$ is also bounded. Hence it is sufficient to show $\lim_{\sigma\to0}R(\hat\theta_-(\sigma))/\frac{\mathrm{d}\hat\theta_-}{\mathrm{d}\sigma}=0$, which is completed by \Cref{lm:dtheta/dsigma-order}. 
    \end{proof}
    \begin{claim}
    \label{claim:A2}
    $\lim_{\sigma\to0}A_2(\sigma)/A_3(\sigma)\geq0$.
    \end{claim}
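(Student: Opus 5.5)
The plan is to sign $A_2$ up to an error term that is $O\big(R(\hat\theta_-(\sigma))\big)$, and then use \Cref{lm:dtheta/dsigma-order} to show that this error is negligible relative to $A_3$. Recall that $A_3(\sigma)>0$ for $\sigma<\tilde\sigma$. Also, $A_3$ is of the exact order $\mathrm{d}\hat\theta_-/\mathrm{d}\sigma$, for two reasons: $v(\hat\theta_-)g(\hat\theta_-)\to v(\theta^\dagger)g(\theta^\dagger)<0$, and the bracket $F(-\hat\tau_-)-F(f^{-1}(\sigma/\hat\theta_-))\to -1$ because $\hat\tau_-\to\infty$ and $f^{-1}(\sigma/\hat\theta_-)\to\infty$. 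It therefore suffices to show $\liminf_{\sigma\to0}A_2(\sigma)/(\mathrm{d}\hat\theta_-/\mathrm{d}\sigma)\geq0$.

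\textbf{Step 1: rewrite $A_2$.} Write $z(\theta;\sigma):=f^{-1}(\sigma/\theta)>0$, so that $\sigma/\theta=f(z)$ and
\[
A_2(\sigma)=-\int_{\hat\theta_-(\sigma)}^{\overline\theta}\frac{v(\theta)g(\theta)}{\theta}\,h(z(\theta;\sigma))\,\mathrm{d}\theta,\qquad h(z):=\frac{f(z)}{-f'(z)}>0 .
\]
By log-concavity, $f'/f$ is decreasing. On $z>0$ we have $f'<0$, so $h$ is nonincreasing in $z$. Since $z(\theta;\sigma)$ is increasing in $\theta$ (as $f^{-1}$ is decreasing), $\theta\mapsto h(z(\theta;\sigma))$ is nonincreasing. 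It is also bounded uniformly for small $\sigma$: once $\sigma/\underline\theta$ is small, $z\ge z_0>0$, hence $h\le h(z_0)$.

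\textbf{Step 2: single-crossing bound.} The integrand weight $v(\theta)g(\theta)/\theta$ is negative on $(\hat\theta_-,\tilde\theta)$ and nonnegative on $(\tilde\theta,\overline\theta)$, and it is multiplied by the nonincreasing positive function $h(z(\theta;\sigma))$. Splitting the integral at $\tilde\theta$ gives
\[
\int_{\hat\theta_-}^{\overline\theta}\frac{vg}{\theta}h\,\mathrm{d}\theta\;\le\; h(z(\tilde\theta;\sigma))\int_{\hat\theta_-}^{\overline\theta}\frac{vg}{\theta}\,\mathrm{d}\theta
= h(z(\tilde\theta;\sigma))\,R(\hat\theta_-)\int_{\underline\theta}^{\hat\theta_-}vg\,\mathrm{d}\theta .
\]
This is the standard argument: the negative part is weighted at least by $h(z(\tilde\theta;\sigma))$ and the positive part at most by it. Consequently
\[
A_2(\sigma)\;\ge\;-h(z(\tilde\theta;\sigma))\,R(\hat\theta_-(\sigma))\int_{\underline\theta}^{\hat\theta_-(\sigma)}vg\,\mathrm{d}\theta .
\]
Here $h(z(\tilde\theta;\sigma))$ is bounded by Step 1, and $\int_{\underline\theta}^{\hat\theta_-}vg$ is bounded.

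\textbf{Step 3: conclude.} Dividing by $\mathrm{d}\hat\theta_-/\mathrm{d}\sigma>0$, the lower bound on $A_2/(\mathrm{d}\hat\theta_-/\mathrm{d}\sigma)$ is a bounded factor times $R(\hat\theta_-(\sigma))/(\mathrm{d}\hat\theta_-/\mathrm{d}\sigma)$. This ratio tends to $0$ by \Cref{lm:dtheta/dsigma-order}. Hence
\[
\liminf_{\sigma\to0}\frac{A_2}{\mathrm{d}\hat\theta_-/\mathrm{d}\sigma}\ge0,
\]
and since $A_3/(\mathrm{d}\hat\theta_-/\mathrm{d}\sigma)$ converges to a strictly positive constant, $\lim_{\sigma\to0}A_2/A_3\ge0$, understood as a $\liminf$, which is all that part (b) requires.

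\textbf{Main obstacle.} The delicate point is that the naive leading term of $A_2$ cancels. Indeed $\int_{\theta^\dagger}^{\overline\theta}vg/\theta=0$ by the definition of $\theta^\dagger$, so one cannot simply bound $|A_2|$ by a constant times $\sup h$. The monotonicity of $h$ supplied by log-concavity is what makes the residual one-signed up to an $O(R)$ error. I expect to have to check carefully that $h$ is bounded and monotone on the relevant range, including the case where $f'$ may vanish only at $0$, so that the split at $\tilde\theta$ is legitimate.
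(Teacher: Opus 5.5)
Your proof is correct and is essentially the paper's argument: both rest on the monotonicity of $f/f'$ from log-concavity, the sign change of $v$ at $\tilde\theta$, the identity $\int_{\theta^\dagger}^{\overline\theta}v g/\theta\,\mathrm{d}\theta=0$, and \Cref{lm:dtheta/dsigma-order}. The only difference is bookkeeping (and your reading of the claim as a $\liminf$ matches what the paper's argument also delivers): the paper applies the single-crossing bound on $[\theta^\dagger,\overline\theta]$ to get $\overline{A}_2\geq0$ exactly and then shows the leftover $\Delta(\sigma)=O(\hat\theta_-(\sigma)-\theta^\dagger)$ is negligible, whereas you apply it directly on $[\hat\theta_-,\overline\theta]$ and control the residual through $R(\hat\theta_-)/(\mathrm{d}\hat\theta_-/\mathrm{d}\sigma)\to0$.
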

    \begin{proof}    
    We can rewrite $A_2$ as
    \begin{align*}
    A_2(\sigma)&=\int_{\theta^\dagger}^{\overline\theta} v(\theta)g(\theta)/\theta\frac{f(f^{-1}(\sigma/\theta))}{f'(f^{-1}(\sigma/\theta))}\mathrm{d}\theta-\int_{\theta^\dagger}^{\hat\theta_-(\sigma)} v(\theta)g(\theta)/\theta\frac{f(f^{-1}(\sigma/\theta))}{f'(f^{-1}(\sigma/\theta))}\mathrm{d}\theta\\
    &=:\overline{A}_2(\sigma)-\Delta(\sigma).
    \end{align*}
    Note that $\overline{A}_2(\sigma)\geq0$ for all $\sigma<\theta^\dagger f(0)$, since $\int_{\theta^\dagger}^{\overline\theta} v(\theta)g(\theta)/\theta\,\mathrm{d}\theta=0$ and $f'/f<0$ is decreasing by log-concavity of $f$. Hence, to prove the statement, it suffices to show $\lim_{\sigma\to0}\Delta(\sigma)/A_3(\sigma)=0$, or equivalently $\lim_{\sigma\to0}\Delta(\sigma)/\frac{\mathrm{d}\hat\theta_-}{\mathrm{d}\sigma}=0$. Since $\frac{f(f^{-1}(\sigma/\theta))}{f'(f^{-1}(\sigma/\theta))}$ is bounded for small $\sigma$, $\Delta(\sigma)$ has smaller order than $\hat\theta_-(\sigma)-\theta^\dagger$. By \Cref{lm:dtheta/dsigma-order}, we conclude $\lim_{\sigma\to0}\Delta(\sigma)/\frac{\mathrm{d}\hat\theta_-}{\mathrm{d}\sigma}=0$ and $\lim_{\sigma\to0}A_2(\sigma)/A_3(\sigma)\geq0$.
    \end{proof}
    These two claims complete the proof.
\end{proof}

\paragraph{Proof of \Cref{prop:approval}: Part (a)}
\begin{proof}
We have
\begin{align*}
    AR'(\sigma)
    &=-G(\hat\theta_-(\sigma))f(0-\hat\tau_-(\sigma))\frac{\mathrm{d}\hat\tau_-}{\mathrm{d}\sigma}
    +\int_{\hat\theta_-(\sigma)}^{\overline\theta}\frac{g(\theta)}{\theta}
    \cdot\frac{\sigma/\theta}{f'(f^{-1}(\sigma/\theta))}\mathrm{d}\theta\\
    +&g(\hat\theta_-(\sigma))\left[F(0-\hat\tau_-(\sigma))-F(f^{-1}(\sigma/\hat\theta_-(\sigma)))\right]
    \frac{\mathrm{d}\hat\theta_-}{\mathrm{d}\sigma}=: C_1(\sigma)+C_2(\sigma)+C_3(\sigma).
\end{align*}
Note that $C_2,C_3<0$ and $C_1>0$ for all $\sigma$. Since $\lim_{\sigma\to0}f(0-\hat\tau_-(\sigma))\frac{\mathrm{d}\hat\tau_-/\mathrm{d}\sigma}{\mathrm{d}\hat\theta_-/\mathrm{d}\sigma}=0$, we have $\lim_{\sigma\to0}C_1(\sigma)/C_3(\sigma)=0$. Hence by continuity, there exists $\underline\sigma'>0$ such that $AR'(\sigma)<0$ for all $\sigma\in(0,\underline\sigma')$.
\end{proof}

\paragraph{Proof of \Cref{prop:approval}: Part (b)}
The proof of part (b) is much more involved. We record the additional assumption here.
\begin{assumption}
\label{as:limit}
    $\lim_{z\to\infty}\frac{f(z)f''(z)}{[f'(z)]^2}=1$.
\end{assumption}
The proof idea is to show there exists $b(\sigma)>0$ such that, as $\sigma\to0$,
\[
U'(\sigma)
=
-b(\sigma)\int_{\hat\theta}^{\bar\theta}\frac{g(\theta)}{\theta}
\log\frac{\theta}{\theta^\dagger}\,\mathrm{d}\theta
+
o\bigl(b(\sigma)\bigr).
\]
For this sake, define
\[
\phi(x):=-\log f(x), \quad \text{for }x>0.
\]
By symmetry and log-concavity, $\phi$ is increasing and convex on $(0,\infty)$. Moreover, \Cref{as:limit} is equivalent to
\begin{equation}
\label{as:limit-2}
\frac{\phi''(x)}{(\phi'(x))^2}\to0
\qquad\text{as }x\to\infty.\tag{Assumption 3'}
\end{equation}
Define
\[
b(p):=\frac1{\phi'(f^{-1}(p))}=-\frac{f(f^{-1}(p))}{f'(f^{-1}(p))}=-\frac{p}{f'(f^{-1}(p))},
\quad \text{for }p>0.
\]
Hence $b(p)>0$, $b$ is increasing in $p$, and $b(p)=O(1)$ as $p\to0$.

It is useful to establish a series of useful asymptotics related to $b(p)$. We relegate the proofs to \Cref{sect:asymptotics}.

\begin{lemma}
\label{lm:b-linear}
    For any fixed $c>0$,
\[
b(cp)=b(p)+o\bigl(b(p)\bigr),
\quad\text{as }p\to0.
\]
\end{lemma}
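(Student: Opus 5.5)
The plan is to express $b$ through $\phi$ and reduce the claim to a statement about slow variation of $\phi'$ along the inverse of $\phi$. Since $f$ is symmetric and strictly decreasing on $(0,\infty)$, we have $f^{-1}(p)=\phi^{-1}(-\log p)$ for $p\in(0,f(0))$, and $x(p):=f^{-1}(p)\to\infty$ as $p\to0$ by unbounded support. Write $x_1:=f^{-1}(p)$ and $x_c:=f^{-1}(cp)$, so that
\[
\phi(x_c)-\phi(x_1)=-\log c .
\]
Then $b(cp)/b(p)=\phi'(x_1)/\phi'(x_c)$, and it suffices to show that this ratio tends to $1$.

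The first step is to bound the gap $|x_c-x_1|$. By the mean value theorem, $|\log c|=\phi'(\xi)\,|x_c-x_1|$ for some $\xi$ between $x_1$ and $x_c$. Since $\phi$ is convex and increasing on $(0,\infty)$, $\phi'$ is nondecreasing and positive there. Hence
\[
|x_c-x_1|\le \frac{|\log c|}{\phi'(\min\{x_1,x_c\})},
\]
which is $O(b(\cdot))$ evaluated at the smaller of the two points.

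The second step controls how much $\phi'$ can change over a window of that length, using \Cref{as:limit-2}. Set $\psi:=1/\phi'=b\circ f$ on $(0,\infty)$. Then $\psi'=-\phi''/(\phi')^2\to0$, so for large $x$ we have $|\psi'|\le\eta$ for any given $\eta>0$. This gives
\[
|\psi(x_c)-\psi(x_1)|\le \eta\,|x_c-x_1|\le \eta\,|\log c|\,\psi(\min\{x_1,x_c\}).
\]
Dividing by $\psi(\min\{x_1,x_c\})$ yields $\psi(x_c)/\psi(x_1)\to1$, that is, $b(cp)=b(p)(1+o(1))$. To avoid a circularity when passing between $\psi(x_1)$ and $\psi(\min\{x_1,x_c\})$, treat $c>1$ and $c<1$ separately: in each case the minimum is a specific one of the two points, and the inequality above bounds the ratio directly.

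The main obstacle is making sure the tail condition applies uniformly on the whole window between $x_1$ and $x_c$. Both endpoints diverge as $p\to0$, so for $p$ small enough the entire window lies in the region where $|\psi'|\le\eta$, and this closes the argument. One also has to handle the possibility that $\phi'$ is not positive near $0$. This is irrelevant for small $p$, because strictly diminishing bad luck together with log-concavity forces $\phi'>0$ on $(0,\infty)$. Finally, the facts that $b$ is bounded and increasing are not needed; only $\psi'\to0$ is used.
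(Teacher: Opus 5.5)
Your argument is correct and is essentially the paper's proof written in $x$-coordinates. The paper notes that the elasticity $p\,b'(p)/b(p)$ equals $\phi''(f^{-1}(p))/(\phi'(f^{-1}(p)))^2$, which is exactly $-\psi'(f^{-1}(p))$ in your notation, and integrates $b'/b$ over $[p,cp]$. Your mean-value bound on $|x_c-x_1|$ combined with $|\psi'|\le\eta$ is the chain-rule factorization of that same computation. It avoids differentiating $f^{-1}$ and the dominated-convergence step, at the cost of the case split $c>1$ versus $c<1$.
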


\begin{lemma}
\label{lm:f-inverse}
    For any compact $C\subset (0,\infty)$,
\[
f^{-1}(cp)-f^{-1}(p)=-b(p)\log c+o\bigl(b(p)\bigr)
\quad\text{uniformly in $c\in C$, as }p\to0.
\]
\end{lemma}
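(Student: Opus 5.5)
The plan is to write the difference as an integral of $1/f^{-1}$'s derivative and then use Lemma \ref{lm:b-linear} to freeze $b$ along the path of integration. Set $\phi=-\log f$ on $(0,\infty)$, so that $f^{-1}(q)=\phi^{-1}(-\log q)$ for small $q>0$. Differentiating gives
\[
\frac{\mathrm{d}}{\mathrm{d}q}f^{-1}(q)=\frac{1}{f'(f^{-1}(q))}=-\frac{b(q)}{q}.
\]
Hence for fixed $c$, substituting $q=tp$,
\[
f^{-1}(cp)-f^{-1}(p)=-\int_{p}^{cp}\frac{b(q)}{q}\,\mathrm{d}q=-\int_{1}^{c}\frac{b(tp)}{t}\,\mathrm{d}t .
\]
This holds for either order of $c$ and $1$. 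If $b(tp)$ can be replaced by $b(p)$, the integral equals $b(p)\log c$, which gives the claim. It therefore suffices to show
\[
\sup_{t\in[\min(C\cup\{1\}),\max(C\cup\{1\})]}\Big|\frac{b(tp)}{b(p)}-1\Big|\to0 \quad\text{as } p\to0,
\]
since the error term is then bounded by $b(p)\,|\log c|\cdot o(1)$, and $|\log c|$ is bounded on $C$.

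Lemma \ref{lm:b-linear} gives only pointwise convergence in $t$, so the main step is upgrading it to uniform convergence on a compact interval $[a,A]$ with $0<a\le1\le A$. Monotonicity of $b$ does this. By log-concavity, $\phi'$ is increasing, so $b=1/\phi'(f^{-1}(\cdot))$ is increasing in $p$: a larger $p$ gives a smaller $f^{-1}(p)$, hence a smaller $\phi'$. Consequently, for every $t\in[a,A]$,
\[
\frac{b(ap)}{b(p)}\le\frac{b(tp)}{b(p)}\le\frac{b(Ap)}{b(p)} .
\]
By Lemma \ref{lm:b-linear}, both bounds tend to $1$ as $p\to0$, so the ratio converges to $1$ uniformly in $t\in[a,A]$. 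This also makes the result uniform in $c\in C$, because every integration range lies inside $[a,A]$.

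The remaining concern is regularity. The identity $\frac{\mathrm{d}}{\mathrm{d}q}f^{-1}(q)=1/f'(f^{-1}(q))$ needs $f'\neq0$ on the relevant tail. This follows from strictly diminishing bad luck together with symmetry: these make $f$ strictly decreasing on $(0,\infty)$, and log-concavity with $C^2$ smoothness away from $0$ then gives $f'<0$ for large arguments. When $p$ is small, $cp$ is also small and $f^{-1}(cp)$ is large, so the integration path stays in this region. I expect this to be routine. The substantive step is the monotone sandwich that turns Lemma \ref{lm:b-linear} into a uniform statement.
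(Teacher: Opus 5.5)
Your proof is correct. It differs from the paper's mainly in how uniformity in $c$ is obtained.

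The paper works on the log scale, with $t=-\log p$ and $s=-\log c$, and applies the mean value theorem to $\phi^{-1}$. This gives $f^{-1}(cp)-f^{-1}(p)=s/\phi'(\phi^{-1}(t+\xi))=-\log c\cdot b(c'p)$, where $c'=e^{-\xi}$ lies between $1$ and $c$. That is the mean-value version of your identity $-\int_1^c b(tp)\,t^{-1}\,\mathrm{d}t$. The paper then needs exactly the uniform statement you isolate, $b(c'p)/b(p)\to1$ over a compact set of $c'$. It derives this directly from the tail condition $\phi''/(\phi')^2\to0$: the log-ratio $\int_{\phi^{-1}(t)}^{\phi^{-1}(t+\xi)}(\phi''/\phi')\,\mathrm{d}z=\int_{\phi^{-1}(t)}^{\phi^{-1}(t+\xi)}(\phi''/(\phi')^2)\,\phi'\,\mathrm{d}z$ is at most $|\xi|\sup_{z\ge\phi^{-1}(t-|\xi|)}|\phi''(z)|/\phi'(z)^2$. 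The paper leaves this step implicit, asserting only that the integral vanishes ``uniformly for bounded $\xi$.''

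You instead use \Cref{lm:b-linear} as a black box at the two endpoints $a$ and $A$. You then upgrade pointwise convergence to uniform convergence by the monotone sandwich, using that $b$ is increasing by log-concavity.

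Your route is more modular and makes the uniformity step fully explicit. The paper's route is self-contained and gives a rate in terms of the tail supremum of $|\phi''|/(\phi')^2$. It also does not need monotonicity of $b$, only that $\phi'>0$ and $\phi''/(\phi')^2\to0$.

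A minor aside: your regularity argument actually gives $f'<0$ on all of $(0,\infty)$, not just on the tail. A convex $\phi$ that is strictly increasing there cannot have $\phi'(x_0)=0$ at any $x_0>0$. You only need the tail version, so nothing changes.
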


\begin{lemma}
\label{lm:tail-bounds}
For any fixed $t>0$,
\[
f(t/p)=o\bigl(p^2b(p)\bigr)
\qquad\text{and}\qquad
f'(t/p)=o\bigl(p^3 b(p)\bigr),\quad\text{as }p\to0.
\]
\end{lemma}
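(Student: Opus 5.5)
The plan is to show that both $f(t/p)$ and $f'(t/p)$ are exponentially small in $1/p$, while $b(p)$ is bounded below by a positive power of $p$. The ratios in the statement then tend to zero trivially. I will work with $\phi=-\log f$ on $(0,\infty)$, which is increasing and convex by symmetry and log-concavity, and with \Cref{as:limit-2}.

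\emph{Step 1 (exponential tail of $f$).} By strictly diminishing bad luck and symmetry, $f$ is strictly decreasing on $(0,\infty)$. Hence there is $x_0>0$ with $a:=\phi'(x_0)>0$. Convexity then gives $\phi(x)\ge \phi(x_0)+a(x-x_0)$ for $x\ge x_0$, that is, $f(x)\le K e^{-ax}$ for some constant $K$. Consequently, for fixed $t>0$, $f(t/p)\le K e^{-at/p}$ decays faster than any power of $p$.

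\emph{Step 2 (polynomial lower bound on $b$).} This is the step I expect to carry the real content, since log-concavity alone does not prevent $\phi'$ from being huge relative to $\phi$; I will use \Cref{as:limit-2} here. Fix $\varepsilon\in(0,1)$. By \Cref{as:limit-2}, there is $x_1$ such that $\phi''(x)\le \varepsilon\,\phi'(x)^2$ for $x\ge x_1$. Equivalently, $(\log\phi')'=\phi''/\phi'\le \varepsilon\phi'$ on $[x_1,\infty)$. Integrating from $x_1$ to $x$ gives
\[
\log\phi'(x)\le \log\phi'(x_1)+\varepsilon\bigl(\phi(x)-\phi(x_1)\bigr),
\]
so $\phi'(x)\le C e^{\varepsilon\phi(x)}=C f(x)^{-\varepsilon}$ for a constant $C$. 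Now evaluate at $x=f^{-1}(p)$, which tends to infinity as $p\to0$. This yields $\phi'(f^{-1}(p))\le C p^{-\varepsilon}$, that is,
\[
b(p)\ge p^{\varepsilon}/C \quad\text{for all small } p .
\]

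\emph{Step 3 (first bound).} Combining the two steps,
\[
\frac{f(t/p)}{p^2 b(p)}\le KC\,p^{-2-\varepsilon}e^{-at/p}\to 0 .
\]

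\emph{Step 4 (second bound).} Write $|f'(x)|=\phi'(x)f(x)$. The Step 2 bound applied at $x=t/p$ gives $\phi'(t/p)\le C f(t/p)^{-\varepsilon}$, hence $|f'(t/p)|\le C f(t/p)^{1-\varepsilon}\le CK^{1-\varepsilon}e^{-(1-\varepsilon)at/p}$. Dividing by $p^3b(p)\ge p^{3+\varepsilon}/C$, the ratio is a power of $p$ times $e^{-(1-\varepsilon)at/p}$, which tends to zero. This gives $f'(t/p)=o(p^3b(p))$.

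The only delicate point is Step 2: the plan relies on \Cref{as:limit-2} and on choosing $\varepsilon<1$, so that Step 4 still leaves a decaying exponential.
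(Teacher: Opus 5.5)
Your proposal is correct and follows the paper's proof essentially step for step. The paper also derives an exponential tail bound for $f$ from convexity of $\phi$, then gets $\phi'(x)\le C_\varepsilon f(x)^{-\varepsilon}$ by integrating $\phi''/\phi'\le\varepsilon\phi'$ under the tail condition $\phi''/(\phi')^2\to0$; evaluating this at $f^{-1}(p)$ bounds $b(p)$ below, and evaluating it at $t/p$ bounds $|f'|=\phi' f$. The only cosmetic difference is that the paper fixes $\varepsilon=1/2$ for the derivative bound, whereas you allow any $\varepsilon\in(0,1)$.
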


Given these asymptotic facts, we can establish that when $\sigma$ is small, the effects of $\sigma$ on both $\hat\theta_-(\sigma)$ and $F(-\hat\tau_-(\sigma))$ are of smaller order than $b(\sigma)$.
\begin{lemma}
\label{lm:dthetahat-small}
As $\sigma\to0$,
\[
\frac{\mathrm{d}\hat\theta_-}{\mathrm{d}\sigma}=o(b(\sigma))
\quad\text{and}\quad
\frac{\mathrm{d}}{\mathrm{d}\sigma}F(-\hat\tau_-(\sigma))=o(b(\sigma)).
\]
\end{lemma}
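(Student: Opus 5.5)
The plan is to avoid the quotient formula \Cref{eq:dhattheta/dsigma} and instead differentiate the two equilibrium conditions jointly in $\sigma$. This lets an envelope cancellation do the work. Write $\hat\theta=\hat\theta_-(\sigma)$, $\hat\tau=\hat\tau_-(\sigma)$, $x:=f^{-1}(\sigma/\hat\theta)$, and use primes for $\mathrm{d}/\mathrm{d}\sigma$ along the equilibrium path; this is legitimate by \Cref{lemma:well-behaved}. By symmetry of $f$, the two conditions are
\[
f(\hat\tau)=-\sigma R(\hat\theta)
\qquad\text{and}\qquad
\sigma\hat\tau=\hat\theta\bigl[1-F(-x)-F(-\hat\tau)\bigr]-\sigma x .
\]
The first is \Cref{eq:R-BR}; the second is \Cref{eq:S-BR-app} multiplied by $\hat\theta$.

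\textbf{Step 1: differentiate the indifference condition.} When the second identity is differentiated, every term involving the derivative of $x$ (through both $\sigma$ and $\hat\theta$) carries the factor $\hat\theta f(x)-\sigma$. This factor vanishes because $f(x)=\sigma/\hat\theta$; it is the agent's envelope theorem. What remains is
\[
\hat\theta'\bigl[1-F(-x)-F(-\hat\tau)\bigr]=\hat\tau+x+\bigl(\sigma-\hat\theta f(\hat\tau)\bigr)\hat\tau' .
\]

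\textbf{Step 2: substitute the principal's condition.} Differentiating the first identity gives $f'(\hat\tau)\hat\tau'=-R(\hat\theta)-\sigma R'(\hat\theta)\hat\theta'$. Using $-R=f(\hat\tau)/\sigma$, this becomes
\[
\sigma\hat\tau'=-b\bigl(f(\hat\tau)\bigr)-\frac{\sigma^2R'(\hat\theta)}{f'(\hat\tau)}\,\hat\theta' .
\]
Plugging this into Step 1 and collecting the $\hat\theta'$ terms yields
\[
\hat\theta'\Bigl[1-F(-x)-F(-\hat\tau)+\frac{\sigma^2R'(\hat\theta)}{f'(\hat\tau)}\bigl(1-\hat\theta f(\hat\tau)/\sigma\bigr)\Bigr]=\hat\tau+x-b\bigl(f(\hat\tau)\bigr)\bigl(1-\hat\theta f(\hat\tau)/\sigma\bigr).
\]

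\textbf{Step 3: asymptotic bounds.} Since $\sigma\hat\tau_-(\sigma)\to\theta^\dagger>0$ (from the proof of \Cref{prop:limit} and \Cref{lm:dtheta/dsigma-order}), we have $\hat\tau\ge t/\sigma$ for some fixed $t>0$ and all small $\sigma$. Because $f$ and $|f'|$ are eventually decreasing in the tail by log-concavity, \Cref{lm:tail-bounds} gives
\[
f(\hat\tau)=o(\sigma^2b(\sigma))
\qquad\text{and}\qquad
|f'(\hat\tau)|=o(\sigma^3b(\sigma)).
\]
Hence $f(\hat\tau)/\sigma\to0$. Also $b(f(\hat\tau))=O(1)$. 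Moreover, $\sigma x=\sigma f^{-1}(\sigma/\hat\theta)\to0$, so the right-hand side of Step 2 is $O(1/\sigma)$.

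On the left-hand side, $R'(\hat\theta)\to R'(\theta^\dagger)<0$ and $f'(\hat\tau)<0$. So the bracket is dominated by the positive term $\sigma^2|R'|/|f'(\hat\tau)|$, which is $\gg 1/(\sigma b(\sigma))$. Dividing, we get
\[
\hat\theta'=O\!\left(\frac{|f'(\hat\tau)|}{\sigma^3}\right)=o\bigl(b(\sigma)\bigr),
\]
which is the first claim.

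\textbf{Step 4: the second claim.} Write
\[
\frac{\mathrm{d}}{\mathrm{d}\sigma}F(-\hat\tau)=-f(\hat\tau)\hat\tau'
=-\frac{f(\hat\tau)}{f'(\hat\tau)}\Bigl(\frac{f(\hat\tau)}{\sigma}-\sigma R'\hat\theta'\Bigr)
=b\bigl(f(\hat\tau)\bigr)\Bigl(\frac{f(\hat\tau)}{\sigma}-\sigma R'\hat\theta'\Bigr).
\]
Both terms in the last parenthesis are $o(b(\sigma))$, by the bound on $f(\hat\tau)$ and by Step 3, and $b(f(\hat\tau))=O(1)$. This gives the second claim.

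\textbf{Main obstacle.} The delicate point is Step 3. The naive quotient in \Cref{eq:dhattheta/dsigma} has a numerator of order $1/\sigma$. One must see that the huge factor $\sigma^2|R'|/|f'(\hat\tau)|$ in the denominator beats it by more than a factor $1/b(\sigma)$. This is exactly why \Cref{lm:tail-bounds} is stated at the rate $o(p^3b(p))$, and why the envelope cancellation in Step 1 matters: it prevents any $x$-derivative terms of uncontrolled size from appearing.
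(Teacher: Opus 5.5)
Your argument is correct and is essentially the paper's proof. If you multiply the numerator and denominator of your Step~2 identity by $\sigma f'(-\hat\tau)/\hat\theta$ and use the indifference condition $F(x)-F(-\hat\tau)=(\hat\tau+x)\sigma/\hat\theta$, you get exactly the paper's expression for $\mathrm{d}\hat\theta_-/\mathrm{d}\sigma$; your envelope cancellation is what yields $\partial\widehat\Theta/\partial\sigma=\hat\theta/\sigma$ there. Your estimates via \Cref{lm:tail-bounds} and $R'(\theta^\dagger)\neq0$ are also the same as the paper's. The only differences are that you get the second claim directly from the differentiated principal condition rather than through the limit established in Claim~\ref{claim:a1}, and that you apply \Cref{lm:tail-bounds} at $\hat\tau\ge t/\sigma$ instead of at exactly $t/\sigma$, which is more careful than the paper.

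One small correction: eventual monotonicity of $|f'|$ on $(0,\infty)$ does not follow from log-concavity alone. It follows from \Cref{as:limit}, since $(|f'|)'=f\,[\phi''-(\phi')^2]$ there. Alternatively, you can use the envelope $|f'(x)|\le C'e^{-cx/2}$ from the proof of \Cref{lm:tail-bounds}.
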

\begin{proof}
We abuse notation and write $\hat\theta:=\hat\theta_-(\sigma),\hat\tau:=\hat\tau_-(\sigma)$ and $\tau:=\tau_-(\sigma)$. Recall that, according to \Cref{eq:dhattheta/dsigma},
\begin{align*}
\frac{\mathrm{d}\hat\theta_-}{\mathrm{d}\sigma}=&\frac{\frac{\hat\theta}{\sigma}-\frac{\hat\theta}{\sigma}\frac{\sigma/\hat\theta-f(-\hat\tau)}{[\hat\tau+f^{-1}(\sigma/\hat\theta)]\sigma/\hat\theta}\cdot\frac{f(-\hat\tau)}{f'(-\hat\tau)}}{1-\frac{\sigma/\hat\theta-f(-\hat\tau)}{[\hat\tau+f^{-1}(\sigma/\hat\theta)]\sigma/\hat\theta^2}\frac{\sigma R'(\hat\theta)}{f'(-\hat\tau)}}\\
=&\frac{\frac{\sigma}{\hat\theta}[\hat\tau+f^{-1}(\sigma/\hat\theta)]f'(-\hat\tau)-[\sigma/\hat\theta-f(-\hat\tau)]f(-\hat\tau)}{(\sigma/\hat\theta)^2[\hat\tau+f^{-1}(\sigma/\hat\theta)]f'(-\hat\tau)-[\sigma/\hat\theta-f(-\hat\tau)]\sigma^2 R'(\hat\theta)}
\end{align*}
We now estimate the numerator and denominator. Since $\hat\tau_-(\sigma)=\tau_-(\sigma)/\sigma$ and $\tau_-(\sigma)\to\theta^\dagger$, by \Cref{lm:tail-bounds} and symmetry of $f$,
\[
f(-\hat\tau)=o\bigl(\sigma^2b(\sigma)\bigr)
\qquad\text{and}\qquad
f'(-\hat\tau)=o\bigl(\sigma^3 b(\sigma)\bigr).
\]
Given that $\sigma\hat\tau_-(\sigma)\to\theta^\dagger$ and $\lim_{p\searrow0}p f^{-1}(p)=0$, the numerator is thus
\[
o\bigl(\sigma^3b(\sigma)\bigr)+o\bigl(\sigma^3b(\sigma)\bigr)=o\bigl(\sigma^3b(\sigma)\bigr).
\]
Similarly, since $R'(\theta^\dagger)\ne0$ and $b(\sigma)=O(1)$, the denominator is
\[
o\bigl(\sigma^4b(\sigma)\bigr)-O(\sigma^3)\cdot R'(\theta^\dagger)=O(\sigma^3).
\]
Combining the above estimates, we conclude that
\[
\mathrm{d}\hat\theta_-/\mathrm{d}\sigma=o\bigl(b(\sigma)\bigr).
\]
Moreover, recall that $\lim_{\sigma\to0}f(0-\hat\tau_-(\sigma))\frac{\mathrm{d}\hat\tau_-/\mathrm{d}\sigma}{\mathrm{d}\hat\theta_-/\mathrm{d}\sigma}=0$, as shown in the proof of \Cref{prop:non-monotone}. As a result, $\frac{\mathrm{d}}{\mathrm{d}\sigma}F(-\hat\tau_-(\sigma))=o(\mathrm{d}\hat\theta_-/\mathrm{d}\sigma)=o(b(\sigma))$.
\end{proof}

Now we are ready to prove part (b) of \Cref{prop:approval}.
\begin{proof}[Proof of \Cref{prop:approval}: Part (b)]
We abuse notation and write $\hat\theta:=\hat\theta_-(\sigma),\hat\tau:=\hat\tau_-(\sigma)$ and $\tau:=\tau_-(\sigma)$. By the indifference condition,
\begin{align*}
    \tau 
&=\hat\theta F(f^{-1}(\sigma/\hat\theta))
-\sigma f^{-1}(\sigma/\hat\theta)-\hat\theta F(-\hat\tau)\\
&=\hat\theta-\sigma\left[f^{-1}(\sigma/\hat\theta)+\frac{1-F(f^{-1}(\sigma/\hat\theta))}{f(f^{-1}(\sigma/\hat\theta))}\right]-\hat\theta F(-\hat\tau)\\
&=\hat\theta-\sigma\Psi(f^{-1}(\sigma/\hat\theta))-\hat\theta F(-\hat\tau),
\end{align*}
where
\[
\Psi(x):=x+\frac{1-F(x)}{f(x)}.
\]
Define
\[
r(\sigma)
:=
(\hat\theta-\theta^\dagger)-\hat\theta F(-\hat\tau)
-\sigma\left[\Psi(f^{-1}(\sigma/\hat\theta))-\Psi(f^{-1}(\sigma/\theta^\dagger))\right].
\]
Then
\[
\tau
=
\theta^\dagger-\sigma\Psi(f^{-1}(\sigma/\theta^\dagger))+r(\sigma).
\]
For $\theta\ge\hat\theta$, agent type $\theta$'s equilibrium payoff can thus be written as
\begin{align*}
    u(\theta,\sigma)
:=&
F(f^{-1}(\sigma/\theta))-\frac{\tau+\sigma f^{-1}(\sigma/\theta)}{\theta}=1-\frac{\theta^\dagger}{\theta}
-\frac{\sigma}{\theta}\left[\Psi(f^{-1}(\sigma/\theta))-\Psi(f^{-1}(\sigma/\theta^\dagger))\right]
-\frac{r(\sigma)}{\theta}\\
=&1-\frac{\theta^\dagger}{\theta}
-\frac{\sigma}{\theta}D_\Psi(\sigma,\theta)
-\frac{r(\sigma)}{\theta},
\end{align*}
where
\[
D_\Psi(\sigma,\theta):=\Psi(f^{-1}(\sigma/\theta))-\Psi(f^{-1}(\sigma/\theta^\dagger)).
\]
\begin{claim}
\label{claim:diff}
    As $\sigma$ goes to $0$,
    \[
    \frac{\partial}{\partial\sigma}\left[\sigma D_\Psi(\sigma,\theta)\right]=b(\sigma)\log(\theta/\theta^\dagger)+
    o\bigl(b(\sigma)\bigr),\text{ uniformly for } \theta\in[\theta^\dagger,\bar\theta].
    \]
\end{claim}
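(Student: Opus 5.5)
The plan is to differentiate $\sigma D_\Psi(\sigma,\theta)$ directly by the product rule and then estimate each piece with \Cref{lm:b-linear}, \Cref{lm:f-inverse} and \Cref{as:limit-2}. Write $x_\theta(\sigma):=f^{-1}(\sigma/\theta)$, so that
\[
\frac{\partial}{\partial\sigma}\left[\sigma D_\Psi(\sigma,\theta)\right]=D_\Psi(\sigma,\theta)+\sigma\,\Psi'(x_\theta)\frac{\partial x_\theta}{\partial\sigma}-\sigma\,\Psi'(x_{\theta^\dagger})\frac{\partial x_{\theta^\dagger}}{\partial\sigma}.
\]
Differentiating $f(x_\theta)=\sigma/\theta$ gives $\partial x_\theta/\partial\sigma=1/(\theta f'(x_\theta))$. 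By the definition of $b$ this equals $-b(\sigma/\theta)/\sigma$. Hence the last two terms collapse to $-\Psi'(x_\theta)b(\sigma/\theta)+\Psi'(x_{\theta^\dagger})b(\sigma/\theta^\dagger)$.

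\textbf{Step 1 (asymptotic slope of $\Psi$).} I will show that $\Psi'(x)\to1$ as $x\to\infty$. A direct computation gives $\Psi'(x)=\frac{(1-F(x))f'(x)}{-f(x)^2}\cdot(-1)$, i.e.\ $\Psi'(x)=\phi'(x)\frac{1-F(x)}{f(x)}$. I will write this as the ratio $(1-F)/(f/\phi')$ and apply l'H\^opital. The derivative of $f/\phi'$ is $-f\,(1+\phi''/(\phi')^2)$, so by \Cref{as:limit-2} the ratio tends to $1$.

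\textbf{Step 2 (the derivative terms).} By Step 1 and \Cref{lm:b-linear} applied with $c=1/\theta$ and $c=1/\theta^\dagger$, each of $\Psi'(x_\theta)b(\sigma/\theta)$ and $\Psi'(x_{\theta^\dagger})b(\sigma/\theta^\dagger)$ equals $b(\sigma)+o(b(\sigma))$. Their difference is therefore $o(b(\sigma))$.

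\textbf{Step 3 (the level term).} By the mean value theorem and Step 1 (both $x_\theta,x_{\theta^\dagger}\to\infty$),
\[
D_\Psi(\sigma,\theta)=\bigl(x_\theta-x_{\theta^\dagger}\bigr)(1+o(1)).
\]
By \Cref{lm:f-inverse} with $p=\sigma/\theta^\dagger$ and $c=\theta^\dagger/\theta$,
\[
x_\theta-x_{\theta^\dagger}=b(\sigma/\theta^\dagger)\log(\theta/\theta^\dagger)+o(b(\sigma)),
\]
and \Cref{lm:b-linear} then replaces $b(\sigma/\theta^\dagger)$ by $b(\sigma)$. Because $x_\theta-x_{\theta^\dagger}=O(b(\sigma))$, the $(1+o(1))$ factor only contributes an $o(b(\sigma))$ error. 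Summing Steps 2 and 3 yields the claim.

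The main obstacle is \textbf{uniformity in $\theta\in[\theta^\dagger,\overline\theta]$}. \Cref{lm:f-inverse} is already uniform over a compact set of $c$, and the convergence $\Psi'\to1$ is uniform because $x_\theta\ge x_{\overline\theta}\to\infty$. \Cref{lm:b-linear} is stated only for fixed $c$. I will upgrade it to uniform convergence over $c\in[1/\overline\theta,1/\theta^\dagger]$ by sandwiching with the monotonicity of $b$: since $b$ is increasing,
\[
b(\sigma/\overline\theta)\le b(\sigma/\theta)\le b(\sigma/\theta^\dagger),
\]
and both endpoints equal $b(\sigma)(1+o(1))$. This gives all estimates uniformly in $\theta$.
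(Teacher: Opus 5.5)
Your proposal is correct and follows essentially the paper's own argument: the same product-rule split into $D_\Psi$ and $\sigma\,\partial_\sigma D_\Psi$, the same l'H\^opital computation giving $\Psi'\to1$ under \Cref{as:limit}, \Cref{lm:b-linear} to make the derivative terms $o(b(\sigma))$, and the mean value theorem plus \Cref{lm:f-inverse} for the level term. Your monotonicity sandwich for uniformity in $\theta$ fills in what the paper only asserts; two cosmetic slips to fix are that your intermediate expression for $\Psi'$ has the wrong sign (the final form $\phi'(x)(1-F(x))/f(x)$ is right), and that the uniform lower bound should read $x_\theta\ge x_{\theta^\dagger}$, not $x_\theta\ge x_{\overline\theta}$, since $f^{-1}$ is decreasing.
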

\begin{proof}
Notice that
\[
\Psi'(x)=1+\frac{-[f(x)]^2-[1-F(x)]f'(x)}{[f(x)]^2}=\frac{-[1-F(x)]f'(x)}{[f(x)]^2}=\frac{1-F(x)}{f(x)/\phi'(x)}.
\]
Since $1-F(x)\to0$ and $f(x)/\phi'(x)\to0$, by l'Hospital's rule and \ref{as:limit-2},
\[
\lim_{x\to\infty}\frac{1-F(x)}{f(x)/\phi'(x)}=\lim_{x\to\infty}\frac{-f(x)}{\frac{f'(x)\phi'(x)-f(x)\phi''(x)}{[\phi'(x)]^2}}=\lim_{x\to\infty}\frac{1}{1+\frac{\phi''(x)}{[\phi'(x)]^2}}=1.
\]
Therefore,
\[
\Psi'(x)=1+o(1)
\qquad\text{as }x\to\infty.
\]
Hence,
\begin{align*}
    D_\Psi(\sigma,\theta)=\Psi(f^{-1}(\sigma/\theta))-\Psi(f^{-1}(\sigma/\theta^\dagger))
=&\Psi'(\xi)[f^{-1}(\sigma/\theta)-f^{-1}(\sigma/\theta^\dagger)]\\
=&
b(\sigma)\log(\theta/\theta^\dagger)
+
o\bigl(b(\sigma)\bigr),
\end{align*}
where the last equality is by \Cref{lm:f-inverse}. Moreover,
\[
\frac{\partial}{\partial\sigma}f^{-1}(\sigma/\theta)=\frac{1/\theta}{f'(f^{-1}(\sigma/\theta))}=\frac{-b(\sigma/\theta)}{\sigma},
\]
so by \Cref{lm:b-linear},
\[
\sigma\frac{\partial}{\partial\sigma}D_\Psi(\sigma,\theta)
=
-\Psi'(f^{-1}(\sigma/\theta))b(\sigma/\theta)+\Psi'(f^{-1}(\sigma/\theta^\dagger))b(\sigma/\theta^\dagger)
=
o\bigl(b(\sigma)\bigr)
\]
uniformly in $\theta$. As a result, uniformly for $\theta\in[\theta^\dagger,\bar\theta]$,
\[
    \frac{\partial}{\partial\sigma}\left[\sigma D_\Psi(\sigma,\theta)\right]=D_\Psi(\sigma,\theta)+\sigma\frac{\partial}{\partial\sigma}D_\Psi(\sigma,\theta)=b(\sigma)\log(\theta/\theta^\dagger)+
    o\bigl(b(\sigma)\bigr).
    \]
This completes the proof.
\end{proof}
Therefore, uniformly for $\theta\in[\theta^\dagger,\bar\theta]$,
\[
\frac{\partial u(\theta,\sigma)}{\partial\sigma}
=
-\frac{b(\sigma)}{\theta}\log\frac{\theta}{\theta^\dagger}
+
o\bigl(b(\sigma)\bigr)+\frac{r'(\sigma)}{\theta}.
\]
Moreover, we also have $r'(\sigma)=o(b(\sigma))$ by Claim \ref{claim:diff} and \Cref{lm:dthetahat-small}. As a result,
\[
\frac{\partial u(\theta,\sigma)}{\partial\sigma}
=
-\frac{b(\sigma)}{\theta}\log\frac{\theta}{\theta^\dagger}
+
o\bigl(b(\sigma)\bigr),\quad\text{uniformly for } \theta\in[\theta^\dagger,\bar\theta].
\]

Recall that
\[
U(\sigma)
=
G(\hat\theta)F(-\hat\tau)
+
\int_{\hat\theta}^{\bar\theta}g(\theta)u(\theta,\sigma)\,\mathrm{d}\theta.
\]
Hence,
\begin{align*}
U'(\sigma)=&
g(\hat\theta)\frac{\mathrm{d}\hat\theta}{\mathrm{d}\sigma}F(-\hat\tau)
+
G(\hat\theta)\frac{\mathrm{d}}{\mathrm{d}\sigma}F(-\hat\tau)-g(\hat\theta)u(\hat\theta,\sigma)\frac{\mathrm{d}\hat\theta}{\mathrm{d}\sigma}
+
\int_{\hat\theta}^{\bar\theta}g(\theta)\frac{\partial u(\theta,\sigma)}{\partial\sigma}\,\mathrm{d}\theta\\
=&
G(\hat\theta)\frac{\mathrm{d}}{\mathrm{d}\sigma}F(-\hat\tau)
+
\int_{\hat\theta}^{\bar\theta}g(\theta)\frac{\partial u(\theta,\sigma)}{\partial\sigma}\,\mathrm{d}\theta.
\end{align*}
By the previous estimate for $\partial u(\theta,\sigma)/\partial\sigma$, we have
\[
U'(\sigma)
=
-b(\sigma)\int_{\hat\theta}^{\bar\theta}\frac{g(\theta)}{\theta}
\log\frac{\theta}{\theta^\dagger}\,\mathrm{d}\theta
+
o\bigl(b(\sigma)\bigr).
\]
Finally, since $\hat\theta\to\theta^\dagger$, by dominated convergence theorem,
\[
U'(\sigma)=-b(\sigma)\int_{\theta^\dagger}^{\bar\theta}\frac{g(\theta)}{\theta}
\log\frac{\theta}{\theta^\dagger}\,\mathrm{d}\theta+o\bigl(b(\sigma)\bigr).
\]
Since $b(\sigma)>0$ and $(g(\theta)/\theta)
\log(\theta/\theta^\dagger)>0$, there exists $\underline\sigma'>0$ such that
\[
U'(\sigma)<0
\qquad\text{for all }\sigma\in(0,\underline\sigma').
\]
This completes the proof.
\end{proof}

\paragraph{Proof of \Cref{prop:error}}
\begin{proof}
    The part for $\alpha(\sigma)$ is straightforward. As for $\beta(\sigma)$, we have
    \begin{align*}
        \beta'(\sigma)=&-G(\hat\theta_-(\sigma)) f(-\hat\tau_-(\sigma))\frac{\mathrm{d}\hat\tau_-}{\mathrm{d}\sigma}+\int_{\hat\theta_-(\sigma)}^{\tilde\theta} g(\theta)/\theta\frac{\sigma/\theta}{f'(f^{-1}(\sigma/\theta))}\mathrm{d}\theta\\
        &+g(\hat\theta_-(\sigma))\left[F(-\hat\tau_-(\sigma))-F(f^{-1}(\sigma/\hat{\theta}_-(\sigma)))\right]\frac{\mathrm{d}\hat\theta_-}{\mathrm{d}\sigma}\\
        =&:B_1(\sigma)+B_2(\sigma)+B_3(\sigma).
    \end{align*}
    By \Cref{lemma:increasing-indifference}, for $\sigma\in(0,\tilde\sigma)$, $B_1>0$ while $B_2,B_3<0$. By the proof of \Cref{prop:non-monotone}, $\lim_{\sigma\to0}f(-\hat\tau_-(\sigma))\frac{\mathrm{d}\hat\tau_-}{\mathrm{d}\sigma}/\frac{\mathrm{d}\hat\theta_-}{\mathrm{d}\sigma}=0$, so $\lim_{\sigma\to0}B_1/B_3=0$. Hence, $\lim_{\sigma\to0}\beta'(\sigma)<0$.
\end{proof}

\paragraph{Proof of \Cref{prop:accuracy}}
\begin{proof}
Note that
\[
H_\sigma(H_\sigma^{-1}(p|\theta)|\theta')=F\left(F^{-1}(p)+\frac{e^*(\theta;\sigma)-e^*(\theta';\sigma)}{\sigma}\right).
\]
For any two noise levels $0<\sigma<\sigma'<\tilde{\sigma}$: 
When $\theta<\hat\theta_-(\sigma)\leq\tilde\theta<\theta'$: since 
    \[
    \frac{e^*(\theta;\sigma)-e^*(\theta';\sigma)}{\sigma}=-\hat\tau_-(\sigma)-f^{-1}(\sigma/\theta')<-\hat\tau_-(\sigma')-f^{-1}(\sigma'/\theta')=\frac{e^*(\theta;\sigma')-e^*(\theta';\sigma')}{\sigma'},
    \]
more accurate information is transmitted under $\sigma$ compared to $\sigma'$.

When $\hat\theta_-(\sigma')<\theta<\tilde\theta<\theta'$: we have
    \[
    \frac{e^*(\theta;\sigma)-e^*(\theta';\sigma)}{\sigma}=f^{-1}(\sigma/\theta)-f^{-1}(\sigma/\theta')>f^{-1}(\sigma'/\theta)-f^{-1}(\sigma'/\theta')=\frac{e^*(\theta;\sigma')-e^*(\theta';\sigma')}{\sigma'},
    \]
where the inequality is due to $f'/f$ being decreasing. Hence, less accurate information is transmitted under $\sigma$ than under $\sigma'$.

When $\hat\theta_-(\sigma)<\theta<\hat\theta_-(\sigma')<\tilde\theta<\theta'$: since
    \[
    \frac{e^*(\theta;\sigma)-e^*(\theta';\sigma)}{\sigma}=f^{-1}(\sigma/\theta)-f^{-1}(\sigma/\theta')>-\hat\tau_-(\sigma')-f^{-1}(\sigma'/\theta')=\frac{e^*(\theta;\sigma')-e^*(\theta';\sigma')}{\sigma'},
    \]
    less information is transmitted under $\sigma$ than under $\sigma'$.
\end{proof}

\subsection{Proofs from \Cref{sect:discussion}}
\paragraph{Proof of \Cref{prop:absorbing}}
\begin{proof}
To see why the principal will choose $\precision_0$, recall that by the SCP of the agent's payoff, $e(\theta)$ must be increasing in $\theta$ in equilibrium. For any increasing $e(\theta)$, the induced information structure $k_\precision(s|\theta):=f_\precision(s-e(\theta))$ satisfies MLRP and is ranked with respect to the accuracy order: the larger $\precision$ is, the more accurate $k_\precision$ is. Then by the SCP of the principal's payoff and Theorem 1 in \citet{persico2000information}, the principal always (weakly) prefers larger $\precision$. 
\end{proof}

\paragraph{Proof of \Cref{prop:commit}}
\begin{proof}
Note that the first-order condition for $\hat\tau^*$ is given by
\begin{equation}
\label{eq:commit-foc}
-\int_{\underline\theta}^{\hat\theta^*} v(\theta)g(\theta)\mathrm{d}\theta\cdot f(-\hat\tau^*)+v(\hat\theta^*)g(\hat\theta^*)[F(-\hat\tau^*)-F(f^{-1}(\sigma/\hat\theta^*))]\cdot\widehat\Theta'(\hat\tau^*;\sigma)=0
\end{equation}
Recall that $\widehat\Theta'\geq0$ when $\hat\tau\geq0$ by \Cref{lemma:Theta-increasing}. And $F(-\hat\tau^*)-F(f^{-1}(\sigma/\hat\theta^*))<0$ when $\hat\tau\geq0$ and $=0$ when $\hat\tau<0$. Therefore, if $\hat\theta^*\leq\tilde\theta$, then the LHS of \Cref{eq:commit-foc} is strictly positive, leading to a contradiction. As a result, it must be $\hat\theta^*>\tilde\theta$. Also by the first-order condition, we must have $\hat\tau^*\geq0$. Since $\hat\theta^*(\sigma)=\widehat\Theta(\hat\tau^*(\sigma);\sigma)$ and $\tilde\theta\geq\hat\theta_-(\sigma)=\widehat\Theta(\hat\tau_-(\sigma);\sigma)$, it must hold that $\hat\tau^*(\sigma)>\hat\tau_-(\sigma)$.

That $\overline{V}\geq V$ is straightforward since the principal can always choose $\hat\tau=\hat\tau_-(\sigma)$. To see why $\overline{V}$ is decreasing in $\sigma$, by the Envelope Theorem, we have
\[
\overline{V}'(\sigma)=\int_{\hat\theta^*}^{\overline\theta} v(\theta)g(\theta)/\theta\frac{f(f^{-1}(\sigma/\theta))}{f'(f^{-1}(\sigma/\theta))}\mathrm{d}\theta+v(\hat\theta^*)g(\hat\theta^*)[F(-\hat\tau^*)-F(f^{-1}(\sigma/\hat\theta^*))]\frac{\partial\widehat\Theta}{\partial\sigma}.
\]
First, since we have $\hat\theta^*>\tilde\theta$, the first term is negative. Then, given that $\partial\widehat\Theta/\partial\sigma=\hat\theta/\sigma>0$, the second term is also negative. It completes the proof.
\end{proof}

{
\singlespacing
\bibliographystyle{chicago}
\bibliography{ref}
}

\newpage
\clearpage
\pagenumbering{arabic}
\renewcommand*{\thepage}{B-\arabic{page}}
\section{Supplemental Appendix}
\label{online-app}
\Cref{sect:asymptotics} records the proofs for the asymptotic results used in the proof of \Cref{prop:approval} part (b). \Cref{sect:optimistic} relaxes \Cref{as:ex-ante-bad} (the pessimistic prior) and \Cref{sect:bounded-noise} relaxes the assumption on the noise being unbounded in \Cref{as:noise}.

\subsection{Omitted Proofs for Asymptotics}
\label{sect:asymptotics}
\paragraph{Proof of \Cref{lm:b-linear}}
\begin{proof}
    Since $b(p)=1/\phi'(f^{-1}(p))$ and $\phi''(x)/(\phi'(x))^2\to0$, we have
\[
\lim_{p\searrow0}\frac{p\,b'(p)}{b(p)}=\lim_{p\searrow0}\frac{\phi''(f^{-1}(p))}{(\phi'(f^{-1}(p)))^2}=0.
\]
Therefore, for any fixed $c>0$, by dominated convergence theorem,
\[
\log\frac{b(cp)}{b(p)}=\int_p^{cp} \frac{b'(t)}{b(t)}\mathrm{d}t=\int_1^{c} \frac{(sp)b'(sp)}{b(sp)}\frac{1}{s}\mathrm{d}s\to0,\quad\text{as }p\to0.
\]
Therefore, $\lim_{p\to0}\frac{b(cp)-b(p)}{b(p)}=0$.
\end{proof}
\paragraph{Proof of \Cref{lm:f-inverse}}
\begin{proof}
Let $t:=-\log p$ and $s:=-\log c$. Then
\[
f^{-1}(cp)-f^{-1}(p)=\phi^{-1}(t+s)-\phi^{-1}(t)=s\frac{1}{\phi'(\phi^{-1}(t+\xi))}=sb(t+\xi),
\]
for some $\xi\in[0,s]$ by the mean value theorem. The conclusion then follows from the fact that
\[
\log\frac{\phi'(\phi^{-1}(t+\xi))}{\phi'(\phi^{-1}(t))}
=
\int_{\phi^{-1}(t)}^{\phi^{-1}(t+\xi)}\frac{\phi''(z)}{\phi'(z)}\,\mathrm{d}z\to0
\]
uniformly for bounded $\xi$.
\end{proof}
\paragraph{Proof of \Cref{lm:tail-bounds}}
\begin{proof}

We first establish two useful properties. First, $f$ has an exponential upper tail. Indeed, for any fixed $x_0>0$, by convexity,
\[
\phi(x)\ge \phi(x_0)+\phi'(x_0)(x-x_0)
\qquad\text{for all }x\ge x_0.
\]
Therefore, as $\phi'(x_0)>0$, there exist constants $c,C>0$ such that
\[
f(x)\le Ce^{-cx}
\qquad\text{for all large }x.
\]

Second, for any $\varepsilon>0$, there exists $C_\varepsilon>0$ such that
\[
\phi'(x)\le C_\varepsilon e^{\varepsilon\phi(x)}=C_\varepsilon (f(x))^{-\varepsilon}
\qquad\text{for all large }x.
\]
To see this, notice that $\phi''(x)/(\phi'(x))^2\to0$, so for all large $x$,
\[
\frac{\phi''(x)}{\phi'(x)}\le \varepsilon \phi'(x).
\]
Hence, for large enough $x_0>0$ and $x>x_0$,
\[
\log \phi'(x)-\log \phi'(x_0)
=
\int_{x_0}^x\frac{\phi''(z)}{\phi'(z)}\,\mathrm{d}z
\le
\varepsilon\int_{x_0}^x\phi'(z)\,\mathrm{d}z
=
\varepsilon[\phi(x)-\phi(x_0)].
\]
Thus, there exists $C_\varepsilon>0$ such that $\phi'(x)\le C_\varepsilon e^{\varepsilon\phi(x)}=C_\varepsilon (f(x))^{-\varepsilon}$ for large $x$.

Now let $x_p:=f^{-1}(p)$ for $p>0$, so that $\phi(x_p)=\log(1/p)$, $\phi'(x_p)=-f'(x_p)/p$, and when $p$ is small and thus $x_p$ is large, $\phi'(x_p)\le C_\varepsilon p^{-\varepsilon}$. As a result, as $p\to0$,
\[
\frac{f(t/p)}{p^2b(p)}
=
\frac{f(t/p)\phi'(x_p)}{p^2}
\le
C'e^{-ct/p}\cdot C_\varepsilon p^{-2-\varepsilon}\to0,
\]
which proves the first statement.

For the second statement, choose $\varepsilon=\frac12$. Then
\[
|f'(x)|=\phi'(x)f(x)\le C_{1/2} (f(x))^{1/2}=C' e^{-cx/2}
\qquad\text{for all large }x.
\]
Hence, as $p\to0$,
\[
\left|\frac{f'(t/p)}{p^3b(p)}\right|
=
\frac{|f'(t/p)|\,\phi'(x_p)}{p^3}
\le
C'e^{-ct/(2p)}\cdot C_\varepsilon p^{-3-\varepsilon}\to0.
\]
This completes the proof.
\end{proof}

\subsection{Optimistic Prior: $\mathbb{E}[v(\theta)]>0$}
\label{sect:optimistic}
This section provides equilibrium analysis when \Cref{as:ex-ante-bad} (pessimistic prior) does not hold. That is, what if $\mathbb{E}[v(\theta)]>0$?
\subsubsection{Moderate Optimism}
\begin{proposition}
\label{prop:moderate prior}
When $\mathbb{E}[v(\theta)]>0$ and $\mathbb{E}[v(\theta)/\theta]<0$, for small $\sigma$, there are only three kinds of equilibria: one pooling equilibrium $(-\infty,\overline\theta)$, an equilibrium $(\tau_-,\hat\theta_-)$ with $\hat\theta_-(\sigma)>\theta^\dagger$ and $\tau={T}(\hat\theta_-(\sigma);\sigma)$, and possibly some other equilibria.

Moreover, in the limit of $\sigma\to0$, the above equilibria either converge to the pooling one where all agent types pool on zero effort and the principal approves everyone, or converge to the equilibrium with $(\tau,\hat\theta)=(\theta^\dagger,\theta^\dagger)$.

\end{proposition}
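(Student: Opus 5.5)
The plan is to rerun the appendix analysis for \Cref{prop:eqm} and \Cref{prop:limit}, keeping track of where \Cref{as:ex-ante-bad} was actually used. With $\mathbb{E}[v(\theta)]>0$, the only changes are the sign pattern of $A(\hat\theta):=\int_{\underline\theta}^{\hat\theta}v g\,\mathrm{d}\theta$ and the validity of \Cref{lemma:1/theta+R}. The key objects are as follows.
\begin{itemize}
\item $A$ is decreasing on $[\underline\theta,\tilde\theta]$ and increasing afterwards, with $A(\overline\theta)=\mathbb{E}[v]>0$. Hence there is $\theta^\circ\in(\tilde\theta,\overline\theta)$ with $A<0$ on $(\underline\theta,\theta^\circ)$.
\item Since $\mathbb{E}[v(\theta)/\theta]<0$, the threshold $\theta^\dagger$ of \Cref{eq:thetadagger} lies in $(\underline\theta,\tilde\theta)$, and $B(\hat\theta):=\int_{\hat\theta}^{\overline\theta}v g/\theta\,\mathrm{d}\theta$ satisfies $B(\theta^\dagger)=0$.
\item On $(\theta^\dagger,\theta^\circ)$ the ratio $R=B/A$ is well defined and negative, so $\pm T$ still encode \Cref{eq:R-BR} there.
\end{itemize}

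\textbf{Step 1: the pooling equilibrium.} If all types exert zero effort, signals are uninformative. Since $\mathbb{E}[v]>0$, the principal approves everything ($\tau=-\infty$). Every type then gets payoff $1$ at zero effort, so no deviation is profitable. This gives the pooling equilibrium $(-\infty,\overline\theta)$.

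\textbf{Step 2: a local replacement for \Cref{lemma:1/theta+R}.} The old proof fails globally, so I would use only a local version. Let $N(\hat\theta):=A(\hat\theta)/\hat\theta+B(\hat\theta)$, so that $1/\hat\theta+R=N/A$. Then $N'(\hat\theta)=-A(\hat\theta)/\hat\theta^2>0$ on $(\underline\theta,\theta^\circ)$, and $N(\theta^\dagger)=A(\theta^\dagger)/\theta^\dagger<0$. Hence there is $\theta_1>\theta^\dagger$ with $N<0$, i.e.\ $1/\hat\theta+R>0$, on $[\theta^\dagger,\theta_1)$. On that interval the proofs of \Cref{lemma:R-Ushaped} and \Cref{lemma:T-Ushaped} go through verbatim: $\widehat T(\hat\theta;\sigma)=f^{-1}(-\sigma R(\hat\theta))$ is strictly decreasing, with $\widehat T(\theta^\dagger;\sigma)=+\infty$. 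The same inequality also reproduces the argument in the proof of \Cref{prop:eqm} that $\widehat\Theta\circ(-\widehat T)$ has no fixed point in $(\theta^\dagger,\theta_1)$. Whatever happens on $[\theta_1,\overline\theta)$ is absorbed into ``possibly some other equilibria''.

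\textbf{Step 3: existence and uniqueness of the semi-separating equilibrium.} Set $P(\hat\theta,\sigma)=\widehat\Theta(\widehat T(\hat\theta;\sigma);\sigma)-\hat\theta$ on $(\theta^\dagger,\theta_2]$ for a fixed $\theta_2\in(\theta^\dagger,\theta_1)$.
\begin{itemize}
\item As $\hat\theta\downarrow\theta^\dagger$, $\widehat T\to\infty$, so $\widehat\Theta\to\overline\theta$ and $P>0$.
\item At $\hat\theta=\theta_2$, I copy the limit computation in the proof of \Cref{prop:limit}: as $\sigma\to0$, $\tau=\sigma\widehat T(\theta_2;\sigma)\to0$ while $\widehat T(\theta_2;\sigma)$ stays bounded below. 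So the right side of \Cref{eq:S-BR-app} at $\theta_2$ tends to $1$ and the left side stays below $F(0)$. Hence type $\theta_2$ strictly prefers effort, i.e.\ $P(\theta_2,\sigma)<0$ for small $\sigma$.
\end{itemize}
Continuity and the intermediate value theorem then give a fixed point $\hat\theta_-(\sigma)\in(\theta^\dagger,\theta_2)$, with $\tau_-=T(\hat\theta_-;\sigma)$. Uniqueness in this range follows because $\widehat T$ is decreasing and $\widehat\Theta$ is increasing for $\hat\tau\ge0$ (\Cref{lemma:S-BR-increasing}), so $P$ is strictly decreasing there.

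\textbf{Step 4: the vanishing-noise limit.} For any sequence of interior equilibria, I repeat the contradiction argument of \Cref{prop:limit} verbatim: any subsequence with $\hat\theta_k$ bounded away from $\theta^\dagger$ and from $\overline\theta$ violates the indifference condition. So $\hat\theta_k\to\theta^\dagger$ or $\hat\theta_k\to\overline\theta$.
\begin{itemize}
\item If $\hat\theta_k\to\theta^\dagger$, then $\tau_k\to\theta^\dagger$. This follows from the indifference condition rewritten as in the proof of \Cref{prop:approval}(b), $\tau=\hat\theta-\sigma\Psi(f^{-1}(\sigma/\hat\theta))-\hat\theta F(-\hat\tau)$, whose correction terms vanish. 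This is the limit equilibrium $(\theta^\dagger,\theta^\dagger)$.
\item If $\hat\theta_k\to\overline\theta$, almost all types exert zero effort. The principal's posterior at any signal therefore converges to the prior, so her cutoff must diverge to $-\infty$ for $\mathbb{E}[v]>0$, giving the approve-all pooling limit.
\end{itemize}

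The step I expect to be the main obstacle is this last case. One has to check that an interior equilibrium with $\hat\theta_k\to\overline\theta$ cannot retain a finite cutoff: the vanishing mass of effort-exerting types could in principle keep $\tau_k$ bounded. I would handle it by showing directly from \Cref{eq:R-BR} that the weight $\int_{\hat\theta_k}^{\overline\theta}vg/\theta\,\mathrm{d}\theta\to0$ forces $f_\sigma(\tau_k)\to0$ on the branch where $A>0$. That means $\tau_k\to\pm\infty$, and the positive sign is excluded since the posterior mean is positive.
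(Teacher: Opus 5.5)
Your Steps 1--3 are fine and follow the paper's argument. The paper checks the crossing at $\tilde\theta'=\min\{\tilde\theta,\theta_0\}$ rather than at an interior $\theta_2$. You should take $\theta_1\le\tilde\theta$: on $(\tilde\theta,\theta_0)$ one has $R'>0$, so $\widehat T$ is not decreasing there.

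\textbf{The gap is in Step 4.} The contradiction argument of \Cref{prop:limit} covers only the $+\widehat T$ branch. It uses $\hat\tau_k\ge0$, so that the left side of \eqref{eq:S-BR-app} is at most $F(0)$. On that branch the argument still works here, because $R$ is bounded away from $0$ on $[\theta^\dagger+\epsilon,\theta^\circ)$ (it tends to $-\infty$ at $\theta^\circ$), though not via the U-shape.

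In the baseline, the $-\widehat T$ branch was empty by \Cref{lemma:1/theta+R}, which is exactly the lemma that fails now. For $\hat\tau<0$, \Cref{lemma:S-BR-increasing} gives $\widehat\Theta(\hat\tau)=\sigma/f(-\hat\tau)$. So $\hat\theta$ is a fixed point of $\widehat\Theta\circ(-\widehat T)$ iff $-\sigma R(\hat\theta)=\sigma/\hat\theta$, i.e.\ iff $N(\hat\theta)=0$. Your own $N$ is strictly increasing on $(\underline\theta,\theta^\circ)$ with $N(\theta^\dagger)<0<B(\theta^\circ)=N(\theta^\circ)$, so it has a unique root $\theta_0\in(\theta^\dagger,\theta^\circ)$.

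Then $(\tau,\hat\theta)=(-\sigma f^{-1}(\sigma/\theta_0),\theta_0)$ is an interior equilibrium for every small $\sigma$. When $\tau<0$ the agent's problem is concave on $e\ge0$, and \eqref{eq:R-BR} reduces to $N(\theta_0)=0$. Along this family $\hat\theta_k\equiv\theta_0$, bounded away from both $\theta^\dagger$ and $\overline\theta$. So your dichotomy ``$\hat\theta_k\to\theta^\dagger$ or $\hat\theta_k\to\overline\theta$'' is false. Nor is the indifference condition violated: $\hat\tau_k=-f^{-1}(\sigma_k/\theta_0)\to-\infty$, so its left side tends to $1$.

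\textbf{What is missing} is a separate treatment of this branch, as in the paper:
\begin{enumerate}
\item Show that $\theta_0$ is the only fixed point of $\widehat\Theta\circ(-\widehat T)$. Your inequality gives $\widehat\Theta(-\widehat T(\hat\theta))>\hat\theta$ for $\hat\theta<\theta_0$, and the reverse holds for $\hat\theta>\theta_0$.
\item Show that this family converges to the approve-all pooling equilibrium \emph{in strategies and outcomes}, even though $\hat\theta$ does not converge to $\overline\theta$. Indeed $\tau\to0$. Each effort $\sigma[f^{-1}(\sigma/\theta)-f^{-1}(\sigma/\theta_0)]^+$ lies in $[0,\sigma f^{-1}(\sigma/\overline\theta)]$ and vanishes uniformly. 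Every type is approved with probability $F(f^{-1}(\sigma/\max\{\theta,\theta_0\}))\to1$.
\end{enumerate}
This is why the statement phrases the limit as convergence to the pooling equilibrium rather than as $\hat\theta_k\to\overline\theta$.

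Conversely, the case $\hat\theta_k\to\overline\theta$ that you flag as the main obstacle is vacuous. For $\hat\theta\ge\theta^\circ$ one has $A(\hat\theta)\ge0$ and $B(\hat\theta)>0$, so the left side of \eqref{eq:R-BR} is strictly negative. The principal then strictly prefers $\tau=-\infty$, and no interior equilibrium exists there. Your proposed fix via $f_\sigma(\tau_k)\to0$ is therefore unnecessary, and it is not coherent, since \eqref{eq:R-BR} cannot hold on that branch at all.
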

\begin{proof}
    Again, the existence proof of the pooling equilibrium is trivial and thus omitted. For the existence of the equilibrium with $\hat\theta_-$, it is useful to first establish some properties of $R(\hat\theta)$. Let $\overline{\theta}{}':=\inf\{\hat\theta:\int_{\underline\theta}^{\overline\theta{}'}v(\theta)g(\theta)\mathrm{d}\theta\geq0\}$, then $\overline{\theta}{}'\in(\tilde\theta,\overline\theta)$. By definition, $R(\theta^\dagger)=0$ and $R(\hat\theta)<0$ for any $\hat\theta\in(\theta^\dagger,\overline{\theta}{}')$.
\begin{lemma}
    When $\mathbb{E}[v(\theta)]>0$ but $\mathbb{E}[v(\theta)/\theta]<0$, there exists $\theta_0\in(\theta^\dagger,\overline{\theta}{}')$ such that $R(\hat\theta)$ is strictly decreasing for $\hat\theta\in(\theta^\dagger,\min\{\tilde\theta,\theta_0\})$ and (weakly) decreasing for $\hat\theta\in(\max\{\tilde\theta,\theta_0\},\overline{\theta}{}')$.
\end{lemma}
\begin{proof}
    Recall that $R'(\hat\theta)=-v(\hat\theta)g(\hat\theta)[1/\hat\theta+R(\hat\theta)]/\big[\int_{\underline\theta}^{\hat\theta} v(\theta)g(\theta)\mathrm{d}\theta\big]^3$ where $\int_{\underline\theta}^{\hat\theta} v(\theta)g(\theta)\mathrm{d}\theta<0$. And also, $v(\tilde\theta)<0$ if and only if $\hat\theta<\tilde\theta$. It thus remains to show there exists $\theta_0\in(\theta^\dagger,\overline{\theta}{}')$ such that $1/\hat\theta+R(\hat\theta)>0$ if and only if $\hat\theta<\theta_0$. 

    Define
    \[
    M(\hat\theta):=\left[\int_{\underline\theta}^{\hat\theta} v(\theta)g(\theta)\mathrm{d}\theta\right][1/\hat\theta+R(\hat\theta)]=\frac{1}{\hat\theta}\int_{\underline\theta}^{\hat\theta} v(\theta)g(\theta)\mathrm{d}\theta+\int_{\hat\theta}^{\overline\theta} v(\theta)g(\theta)/\theta\mathrm{d}\theta.
    \]
    Notice that $M(\theta^\dagger)<0$ and $M(\overline{\theta}{}')>0$. In order to show $1/\hat\theta+R(\hat\theta)$ only crosses zero once from above, it is suffices to show that $M$ is strictly decreasing in $\hat\theta$, which is true because
    \[
    M'(\hat\theta)=-\frac{1}{\hat\theta^2}\int_{\underline\theta}^{\hat\theta} v(\theta)g(\theta)\mathrm{d}\theta<0.
    \]
    In conclusion, there exists $\theta_0\in(\theta^\dagger,\overline{\theta}{}')$ such that $1/\hat\theta+R(\hat\theta)>0$ if and only if $\hat\theta<\theta_0$. As a result, $R(\hat\theta)$ is strictly decreasing for $\hat\theta\in(\theta^\dagger,\min\{\tilde\theta,\theta_0\})$ and (weakly) decreasing for $\hat\theta\in(\max\{\tilde\theta,\theta_0\},\overline{\theta}{}')$.
\end{proof}
Notice that in this case, $\widehat{T}$ is only defined over $(\theta^\dagger,\overline\theta{}')$. Let $\tilde\theta':=\min\{\tilde\theta,\theta_0\}$. Then $\widehat{T}$ is strictly decreasing over $(\theta^\dagger,\tilde\theta']$ with the same property as in our baseline model. 
Since $\widehat\Theta$ remains unchanged, all previous arguments follow so that when $\sigma$ is small, $\widehat\Theta\circ\widehat{T}$ has a unique fixed point in $(\theta^\dagger,\tilde\theta']$, denoted by $\hat\theta_-(\sigma)$. As $\sigma$ converges to 0, $\hat\theta_-(\sigma)$ converges to $\theta^\dagger$. 

Of course, there can be other equilibria as fixed points of $\widehat\Theta\circ\widehat{T}$ above $\tilde\theta'$ or fixed points of $\widehat\Theta\circ(-\widehat{T})$. However, we show that they will either vanish or converge to the pooling equilibrium in the limit of $\sigma\to0$. First, for any fixed point $\hat\theta$ of $\widehat\Theta\circ(-\widehat{T})$, we must have $1/\hat\theta+R(\hat\theta)=0$, i.e., $\hat\theta=\theta_0$ because $\widehat\Theta((-\widehat{T}(\hat\theta;\sigma);\sigma)=\inf\{\theta:-f^{-1}(-\sigma R(\hat\theta))+f^{-1}(\sigma/\theta)\geq0\}$. Note that the fixed point $\hat\theta=\theta_0$ is independent of $\sigma$. Therefore, in the limit of $\sigma\to0$, we will have $(\tau,\hat\theta)=(-\sigma f^{-1}(-\sigma R(\hat\theta)),\hat\theta)$ converge to $(0,\theta_0)$. In this limit equilibrium, all types below $\theta_0$ exert zero effort, while all types above $\theta_0$ exert zero effort as well since $\lim_{\sigma\to0}\tau+\sigma f^{-1}(\sigma/\theta)=0$. Hence, this limit equilibrium is exactly the pooling equilibrium.

Then, consider the fixed point(s) of $\widehat\Theta\circ\widehat{T}$ above $\tilde\theta'$, if it exists. For any $\hat\theta\in[\tilde\theta',\overline\theta{}')$, we have $R(\hat\theta)\leq\sup_{\hat\theta\in[\tilde\theta',\max\{\tilde\theta,\theta_0\}]}R(\hat\theta)=:R<0$ by the previous lemma. Therefore, for any sequence of $\sigma_k$ and equilibrium $(\tau_k,\hat\theta_k)$ under $\sigma_k$ such that $\sigma_k\to0$ (if such a sequence exists), we have
    \[
    0\leq\lim_{k\to\infty}\tau_{k}=\lim_{n\to\infty}T(\hat\theta_{k};\sigma_{k})\leq\lim_{n\to\infty}\sigma_{k}f^{-1}(-\sigma_{k}R)=0,
    \] 
    while $\lim_{k\to\infty}\tau_{k}/\sigma_{k}=\lim_{k\to\infty}\widehat{T}(\hat\theta_{k};\sigma_{k})=\lim_{k\to\infty}f^{-1}(-\sigma_k R(\hat\theta_{k}))\geq0$.
    
    However, this leads to a violation of the indifference condition of type $\hat\theta_{k}$:  
    \begin{align*}
        \lim_{k\to\infty}& \left\{F(f^{-1}(\sigma_{k}/\hat\theta_{k}))-[\tau_{k}+\sigma_{k}f^{-1}(\sigma_{k}/\hat\theta_{k})]/\hat\theta_{k}\right\}=1\\
        &>F(0)\geq\lim_{k\to\infty} F(-\tau_{k}/\sigma_{k})=\lim_{k\to\infty} F_{\sigma_{k}}(0-\tau_{k}).
    \end{align*}
    As a result, when $\sigma$ is vanishing, $\widehat\Theta\circ\widehat{T}$ has no fixed point above $\tilde\theta'$.
\end{proof}
According to \Cref{prop:moderate prior}, when $\sigma$ is small, the equilibrium with $\hat\theta_-(\sigma)$ is principal-optimal. The comparative statics about $\hat\theta_-(\sigma)$ and $V(\sigma)$ are the same as what we presented in the baseline model because the property of $\widehat{T}$ to the left of $\tilde\theta$ remains the same as before. Hence \Cref{prop:non-monotone} and thus the pitfall of precision still hold.

\subsubsection{Extreme Optimism}
\begin{proposition}
    When $\mathbb{E}[v(\theta)/\theta]\geq0$, only the pooling equilibrium exists where all agent types pool on zero effort and the principal approves everyone.
\end{proposition}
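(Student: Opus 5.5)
Under extreme optimism, i.e. when $\mathbb{E}[v(\theta)/\theta]\geq0$, the plan is to show that no interior equilibrium can exist. I would first show that $\int_{\hat\theta}^{\overline\theta}v(\theta)g(\theta)/\theta\,\mathrm{d}\theta\geq0$ for every $\hat\theta$, and then exhibit a contradiction with the principal's first-order condition. The existence of the pooling equilibrium with $\tau=-\infty$ is immediate. If every type exerts zero effort, signals are uninformative, so the principal approves whenever $\mathbb{E}[v(\theta)]\geq0$. Since $\mathbb{E}[v(\theta)]>0$ here, she approves everyone, and then no type gains from effort. To see that $\mathbb{E}[v]>0$, note that $v$ is increasing and crosses zero at $\tilde\theta$. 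Hence $v(\theta)\geq v(\theta)\tilde\theta/\theta$ for all $\theta$, which gives $\mathbb{E}[v]\geq\tilde\theta\,\mathbb{E}[v/\theta]\geq0$, with strict inequality coming from $\mathbb{E}[v\mid\theta\geq\tilde\theta]>0$.

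\emph{Key step.} Take a candidate equilibrium with an interior threshold $\hat\theta\in(\underline\theta,\overline\theta)$ and interior $\tau$. The principal's condition (P-BR) reads
\[
f_\sigma(\tau)\int_{\underline\theta}^{\hat\theta}v g\,\mathrm{d}\theta+\int_{\hat\theta}^{\overline\theta}v g/\theta\,\mathrm{d}\theta=0 .
\]
I would show both terms are nonnegative, with at least one strictly positive.
\begin{itemize}
\item \emph{Second term.} For $\hat\theta\geq\tilde\theta$ it is positive because the integrand is nonnegative and is positive on a set of positive measure. For $\hat\theta<\tilde\theta$, write it as $\mathbb{E}[v/\theta]-\int_{\underline\theta}^{\hat\theta}vg/\theta$; the subtracted integral is negative, so the term is strictly positive.
\item \emph{First term.} This is where I expect the real difficulty, since $\int_{\underline\theta}^{\hat\theta}vg$ is negative for small $\hat\theta$.
\end{itemize}

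\emph{Handling the first term.} Instead of signing it directly, I would combine the two terms through the bound $f_\sigma(\tau)\leq 1/\theta$ for $\theta\geq\hat\theta$. The agent's indifference condition gives $f_\sigma(\tau)<\sigma^{-1}f(0)$. More to the point, the proof of \Cref{lemma:S-BR-increasing} shows $f(-\hat\tau)<\sigma/\hat\theta$ whenever $\hat\tau\geq0$, that is, $f_\sigma(\tau)<1/\hat\theta$. Using this, the left side of (P-BR) is at least
\[
\frac{1}{\hat\theta}\int_{\underline\theta}^{\hat\theta}vg+\int_{\hat\theta}^{\overline\theta}vg/\theta
\]
when $\int_{\underline\theta}^{\hat\theta}vg<0$, and trivially at least the positive second term otherwise. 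This lower bound is $M(\hat\theta)$ from the moderate-optimism lemma. Since $M'(\hat\theta)=-\hat\theta^{-2}\int_{\underline\theta}^{\hat\theta}vg>0$ on that region, $M(\hat\theta)>M(\underline\theta)=\mathbb{E}[v/\theta]\geq0$. Hence (P-BR) cannot hold with $\tau\geq0$.

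\emph{Negative standards.} For $\tau<0$, the proof of \Cref{prop:eqm} already shows that $\widehat\Theta\circ(-\widehat T)$ has a fixed point only where $1/\hat\theta+R(\hat\theta)=0$, i.e. $M(\hat\theta)=0$. This is ruled out because $M>0$ throughout.

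\emph{Remaining cases.} The threshold $\hat\theta=\underline\theta$ requires the same first-order condition with the first term absent, which fails because $\mathbb{E}[v/\theta]\geq0$ makes the principal strictly prefer lowering $\tau$ to $-\infty$. That in turn would make the agent's effort zero, a contradiction. Finally, when $\hat\theta=\overline\theta$, a best response with finite $\tau$ is impossible because the principal's payoff is strictly decreasing in $\tau$. So only the pooling equilibrium with approval of everyone remains.
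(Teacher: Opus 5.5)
\textbf{Comparison with the paper's route.} The paper argues through the best-response maps. It shows that $R(\hat\theta)\le 0$ implies $1/\hat\theta+R(\hat\theta)<0$, using two chains of inequalities split at $\tilde\theta$. This gives $\widehat T(\hat\theta;\sigma)<f^{-1}(\sigma/\hat\theta)$, and monotonicity of $\widehat\Theta$ then yields $\widehat\Theta(\widehat T(\hat\theta))<\hat\theta$ (similarly for $-\widehat T$), so there is no interior fixed point. You instead plug the agent-side inequality $f_\sigma(\tau)<1/\hat\theta$ (from the indifference condition, as in the proof of \Cref{lemma:S-BR-increasing}) directly into (P-BR), which bounds it below by $M(\hat\theta)$. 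You then get $M>0$ from $M(\underline\theta)=\mathbb{E}[v/\theta]$ and $M'(\hat\theta)=-\hat\theta^{-2}\int_{\underline\theta}^{\hat\theta}vg\,\mathrm{d}\theta>0$ on the interval where that integral is negative. Since $M>0$ there is exactly the paper's $1/\hat\theta+R<0$, both proofs rest on the same pair of inequalities. Yours skips the composite-map machinery and replaces the case split with a one-line monotonicity argument. For $\tau<0$ you can be self-contained: there the threshold satisfies $f_\sigma(\tau)=1/\hat\theta$, so (P-BR) is literally $M(\hat\theta)=0$. (The fixed-point characterization you cite comes from the moderate-optimism proof, not the proof of \Cref{prop:eqm}.) Your interior analysis is valid even at $\mathbb{E}[v/\theta]=0$, since $M(\hat\theta)>M(\underline\theta)\ge 0$ strictly for $\hat\theta>\underline\theta$.

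\textbf{The gap: the corner $\hat\theta=\underline\theta$ when $\mathbb{E}[v/\theta]=0$.} Without its first term, (P-BR) reads $\mathbb{E}[v/\theta]=0$, which then holds for every $\tau$. So the principal does not strictly prefer to lower the standard, and your step fails. It cannot be repaired, because the statement is false at equality. Take $\sigma<\underline\theta f(0)$, let the principal approve iff $s>0$, and let every type exert $e(\theta)=\sigma f^{-1}(\sigma/\theta)>0$.
\begin{itemize}
\item Each type's payoff $F(e/\sigma)-e/\theta$ is concave on $e\ge 0$, and its first-order condition holds at this $e$, so the effort is a best response.
\item Since $f_\sigma(0-e(\theta))=1/\theta$, the principal's posterior at $s=0$ has the sign of $\int vg/\theta\,\mathrm{d}\theta=0$.
\item MLRP with strictly increasing $e(\cdot)$ makes $\mathbb{E}[v\mid s]$ increasing in $s$, so the cutoff $0$ is a best response.
\end{itemize}
This fully separating equilibrium is not the pooling one. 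Hence the proposition needs $\mathbb{E}[v/\theta]>0$. The paper's proof has the same hole: it only rules out interior fixed points, and its earlier dismissal of $\hat\theta=\underline\theta$ relies on $\theta^\dagger>\underline\theta$, which fails here. Under the strict inequality your corner argument goes through, with one correction. At the candidate $\tau$ the derivative of the principal's payoff $W$ is $-\mathbb{E}[v/\theta]<0$, which already shows $\tau$ is not a best response. Her improvement need not be $\tau=-\infty$, and no detour through zero effort is needed.
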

\begin{proof}
    Again any interior equilibrium must be a fixed point of $\widehat\Theta\circ\widehat{T}$ or $\widehat\Theta\circ(-\widehat{T})$. Remember that $\widehat{T}(\hat\theta;\sigma)=f^{-1}(-\sigma R(\hat\theta))$ is only defined over $\hat\theta$ such that $R(\hat\theta)\leq0$. When $\mathbb{E}[v(\theta)/\theta]\geq0$, we have the following result on $R(\hat\theta)$:
\begin{lemma}
    When $\mathbb{E}[v(\theta)/\theta]\geq0$, if $R(\hat\theta)\leq0$ for some $\hat\theta\in(\underline\theta,\overline\theta)$, $1/\hat\theta+R(\hat\theta)<0$.
\end{lemma}
\begin{proof}
    Given that $\mathbb{E}[v(\theta)/\theta]\geq0$, $\int_{\hat\theta}^{\overline\theta} v(\theta)g(\theta)/\theta\mathrm{d}\theta\geq0$ for any $\hat\theta$, therefore $R(\hat\theta)\leq0$ if and only if $\int_{\underline\theta}^{\hat\theta} v(\theta)g(\theta)\mathrm{d}\theta<0$. Let $\theta':=\inf\big\{\hat\theta\in[\underline\theta,\overline\theta]:\int_{\underline\theta}^{\hat\theta} v(\theta)g(\theta)\mathrm{d}\theta\geq0\big\}$. Then $R(\hat\theta)\leq0$ if and only if $\hat\theta\in[\underline\theta,\theta')$ and $\theta'\geq\tilde\theta$.
    
    Observe that when $\hat\theta\in(\underline\theta,\tilde\theta]$,
    \begin{align*}
        \frac{1}{\hat\theta}\int_{\underline\theta}^{\hat\theta} v(\theta)g(\theta)\mathrm{d}\theta+\int_{\hat\theta}^{\overline\theta} v(\theta)g(\theta)/\theta\mathrm{d}\theta>&\int_{\underline\theta}^{\hat\theta} v(\theta)g(\theta)/\theta\mathrm{d}\theta+\int_{\hat\theta}^{\overline\theta} v(\theta)g(\theta)/\theta\mathrm{d}\theta\\
        =&\mathbb{E}[v(\theta)/\theta]\geq0.
    \end{align*}
    When $\hat\theta\in(\tilde\theta,\theta')$, instead
    \begin{align*}
        \frac{1}{\hat\theta}\int_{\underline\theta}^{\hat\theta} v(\theta)g(\theta)\mathrm{d}\theta+&\int_{\hat\theta}^{\overline\theta} v(\theta)g(\theta)/\theta\mathrm{d}\theta>\frac{1}{\tilde\theta}\int_{\underline\theta}^{\hat\theta} v(\theta)g(\theta)\mathrm{d}\theta+\int_{\hat\theta}^{\overline\theta} v(\theta)g(\theta)/\theta\mathrm{d}\theta\\
        =&\frac{1}{\tilde\theta}\int_{\underline\theta}^{\tilde\theta} v(\theta)g(\theta)\mathrm{d}\theta+\frac{1}{\tilde\theta}\int_{\tilde\theta}^{\hat\theta} v(\theta)g(\theta)\mathrm{d}\theta+\int_{\hat\theta}^{\overline\theta} v(\theta)g(\theta)/\theta\mathrm{d}\theta\\
        >&\int_{\underline\theta}^{\tilde\theta} v(\theta)g(\theta)/\theta\mathrm{d}\theta+\int_{\tilde\theta}^{\hat\theta} v(\theta)g(\theta)/\theta\mathrm{d}\theta+\int_{\hat\theta}^{\overline\theta} v(\theta)g(\theta)/\theta\mathrm{d}\theta\\
        =&\mathbb{E}[v(\theta)/\theta]\geq0.
    \end{align*}
    As a result, in either case, we have $1/\hat\theta+R(\hat\theta)<0$.
\end{proof}
    Therefore, $\widehat{T}(\hat\theta;\sigma)<f^{-1}(\sigma/\hat\theta)$ and thus
    \[
    \widehat\Theta(\widehat{T}(\hat\theta;\sigma);\sigma)<\widehat\Theta(f^{-1}(\sigma/\hat\theta);\sigma)\leq\hat\theta
    \]
    and similarly,
    \[
    \widehat\Theta(-\widehat{T}(\hat\theta;\sigma);\sigma)<\widehat\Theta(-f^{-1}(\sigma/\hat\theta);\sigma)=\hat\theta.
    \]
    As a result, $\widehat\Theta\circ\widehat{T}$ and $\widehat\Theta\circ(-\widehat{T})$ have no fixed point over $\hat\theta$ such that $R(\hat\theta)\leq0$. Accordingly, there is no interior equilibrium and the only (corner) equilibrium is the pooling one.
\end{proof}

\subsection{Bounded Noise}
\label{sect:bounded-noise}
    The equilibrium analysis does not rely on the noise being bounded. We only utilize the unboundedness in the proofs of \Cref{prop:non-monotone,prop:error,prop:approval}. This section shows how these proofs can extend to bounded noise.
    
    With bounded noise, there exists $\underline\sigma>0$ such that $\hat\theta_-(\sigma)=\theta^\dagger$ and $f(0-\hat\tau_-(\sigma))=0$ for all $\sigma\in(0,\underline\sigma]$. We show that the results in \Cref{prop:non-monotone} still hold. Similar arguments can be used to adapt the proofs of \Cref{prop:error,prop:approval}.
    
    \begin{proposition}[Proposition 3']
    $V'(\sigma)>0$ for $\sigma\in(0,\underline\sigma]$. 
    \end{proposition}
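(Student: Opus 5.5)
Under bounded noise, for $\sigma\in(0,\underline\sigma]$ we have $\hat\theta_-(\sigma)=\theta^\dagger$ and $f(-\hat\tau_-(\sigma))=0$. We are therefore in a regime where the equilibrium threshold is pinned at $\theta^\dagger$ and zero-effort types are approved with probability zero. The plan is to redo the derivative decomposition $V'(\sigma)=A_1+A_2+A_3$ from the proof of \Cref{prop:non-monotone} and show that it collapses to a single term with a known sign.

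First, fix the structure of equilibrium on $(0,\underline\sigma]$. Since $-\hat\tau_-$ lies outside the support, $F(-\hat\tau_-(\sigma))=0$, so the first integral in $V(\sigma)$ is identically zero and
\[
V(\sigma)=\int_{\theta^\dagger}^{\overline\theta}v(\theta)g(\theta)F(f^{-1}(\sigma/\theta))\mathrm{d}\theta .
\]
Because $\hat\theta_-\equiv\theta^\dagger$ is constant, the boundary term $A_3$ vanishes ($\mathrm{d}\hat\theta_-/\mathrm{d}\sigma=0$). $A_1$ also vanishes, because $f(-\hat\tau_-)=0$ and the prefactor multiplying any change in $\hat\tau_-$ is zero. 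I would note briefly that $\hat\tau_-$ may be only weakly determined there. This is harmless, since it enters only through $F(-\hat\tau_-)\equiv0$.

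Next, differentiate under the integral. Using $\frac{\partial}{\partial\sigma}F(f^{-1}(\sigma/\theta))=\frac{f(f^{-1}(\sigma/\theta))}{\theta f'(f^{-1}(\sigma/\theta))}$, this gives
\[
V'(\sigma)=\int_{\theta^\dagger}^{\overline\theta}\frac{v(\theta)g(\theta)}{\theta}\,\kappa(\theta)\,\mathrm{d}\theta,
\qquad
\kappa(\theta):=\frac{f(f^{-1}(\sigma/\theta))}{f'(f^{-1}(\sigma/\theta))}<0 .
\]
This is exactly $\overline A_2(\sigma)$ from Claim~\ref{claim:A2}. The sign argument is the one used there. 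Define $x(\theta):=f^{-1}(\sigma/\theta)$, which is increasing in $\theta$. By log-concavity, $f'/f$ is decreasing, so $f/f'=1/(f'/f)$ is decreasing in $x$ on the region where $f'<0$; since $x(\theta)$ is increasing, $\kappa(\theta)$ is decreasing in $\theta$. The weight $v(\theta)g(\theta)/\theta$ integrates to zero on $[\theta^\dagger,\overline\theta]$ by the definition of $\theta^\dagger$, and it is negative on $(\theta^\dagger,\tilde\theta)$ and nonnegative above $\tilde\theta$. Splitting at $\tilde\theta$ and bounding $\kappa$ by $\kappa(\tilde\theta)$ on each piece gives
\[
V'(\sigma)\ \ge\ \kappa(\tilde\theta)\int_{\theta^\dagger}^{\overline\theta}\frac{v g}{\theta}\,\mathrm{d}\theta=0 .
\]
The inequality is strict whenever $\kappa$ is strictly decreasing, which holds under strictly diminishing bad luck (part 3 of \Cref{as:noise}) together with log-concavity on the interior of the support. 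As a fallback for strictness, I would check that $v\not\equiv0$ on a set of positive measure on each side of $\tilde\theta$.

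The main obstacle is justifying the bounded-support regime itself: that $\hat\theta_-(\sigma)=\theta^\dagger$ exactly, and that differentiation is legitimate there. $f^{-1}$ and $\kappa$ may misbehave near the support boundary, and $f'$ may vanish or be discontinuous at the edge. I would handle this by requiring $\sigma<\theta^\dagger f(0)$, so that $\sigma/\theta$ stays strictly inside $(0,f(0))$ for all $\theta\ge\theta^\dagger$. Then $x(\theta)$ lies in the interior of the support, where $f$ is $C^2$ with $f'<0$, and dominated convergence justifies differentiating under the integral sign.
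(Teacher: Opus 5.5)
Your route is the paper's. On $(0,\underline\sigma]$ the terms $A_1$ and $A_3$ vanish, so $V'(\sigma)=\overline{A}_2(\sigma)=\int_{\theta^\dagger}^{\overline\theta}w(\theta)\kappa(\theta)\,\mathrm{d}\theta$ with $w:=vg/\theta$, and the sign comes from $\int_{\theta^\dagger}^{\overline\theta}w\,\mathrm{d}\theta=0$ plus log-concavity.

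However, the one substantive step is stated backwards. For $x>0$ the ratio $f'/f$ is negative and decreasing, so its reciprocal $\kappa=f/f'$ is \emph{increasing} in $x$ (for a Gaussian-shaped $f$, $f/f'=-1/x$). Since $x(\theta)=f^{-1}(\sigma/\theta)$ is increasing, $\kappa$ is therefore increasing in $\theta$. Suppose instead $\kappa$ were decreasing in $\theta$, as you assert. Then the split at $\tilde\theta$ gives $\kappa(\theta)\geq\kappa(\tilde\theta)$ where $w<0$, and $\kappa(\theta)\le\kappa(\tilde\theta)$ where $w\geq0$. That is, $w\kappa\le w\kappa(\tilde\theta)$ pointwise, hence $V'(\sigma)\le0$, the opposite of what you want. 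With the correct monotonicity, both pieces give $w(\theta)\kappa(\theta)\geq w(\theta)\kappa(\tilde\theta)$, and then $V'(\sigma)\geq\kappa(\tilde\theta)\int w\,\mathrm{d}\theta=0$. This is exactly the paper's argument for $\overline{A}_2\geq0$.

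Your justification of strictness also fails. Strictly diminishing bad luck is a statement about $f$, not about $f'/f$. A log-concave density can satisfy it while having an exponential segment on which $f'/f\equiv-c$. If the whole range $[f^{-1}(\sigma/\theta^\dagger),f^{-1}(\sigma/\overline\theta)]$ falls in such a segment, then $\kappa\equiv-1/c$ and $V'(\sigma)=-\frac{1}{c}\int w\,\mathrm{d}\theta=0$. What is needed is that $f'/f$ is not constant on that range.

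Your fallback (nonvanishing $v$ on both sides of $\tilde\theta$) does not address this. It is automatic anyway, since $w<0$ on $(\theta^\dagger,\tilde\theta)$ and $\int w=0$.

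The paper instead appeals to the bounded support. Since $f$ is continuous and vanishes at the edge of the support, $f'/f\to-\infty$ there. So $f'/f$ cannot stay constant where $f$ is close to zero, which is where $x(\theta)$ sits for small $\sigma$. Strictly, one needs $\log f$ not to be affine on the specific range above; assuming strict log-concavity near the edge would settle this cleanly.
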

    \begin{proof}
        When $\sigma<\tilde{\sigma}$, $A_3\geq0$ since $\mathrm{d}\hat\theta_-/\mathrm{d}\sigma\geq0$ by \Cref{lemma:increasing-indifference}. And for $\sigma\in(0,\underline\sigma]$, $A_1=0$ given $f(0-\hat\tau_-(\sigma))=0$, and 
    \[
    A_2=\int_{\theta^\dagger}^{\overline\theta} v(\theta)g(\theta)/\theta\frac{f(f^{-1}(\sigma/\theta))}{f'(f^{-1}(\sigma/\theta))}\mathrm{d}\theta>0,
    \]
    provided that $\int_{\theta^\dagger}^{\overline\theta} v(\theta)g(\theta)/\theta\mathrm{d}\theta=0$ and $f'/f$ is decreasing.\footnote{Note that for bounded noise, it is impossible to have constant $\frac{f'(x)}{f(x)}$ when $f(x)$ is close to zero, hence $A_2\ne0$.} As a result, $V'(\sigma)>0$ for $\sigma\in(0,\underline\sigma]$.
    \end{proof}

\end{document}